\definecolor{darkgreen}{rgb}{0.1,0.5,0.1}
\newtheorem{Thm}{Theorem}
\newtheorem{Lem}[Thm]{Lemma}
\newtheorem{Def}[Thm]{Definition}
\newtheorem{Pro}[Thm]{Proposition}
\newtheorem{Cor}[Thm]{Corollary}
\newtheorem{Rem}[Thm]{Remark}
\numberwithin{equation}{section}
\newcommand{\beqa}{\begin{eqnarray}}
\newcommand{\eeqa}[1]{\label{#1}\end{eqnarray}}
\newcommand{\beq}{\begin{equation}}
\newcommand{\eeq}[1]{\label{#1}\end{equation}}
\newcommand{\Imag}{\mathop{\rm Im}\nolimits}
\newcommand{\calF}{\mathcal{F}}
\newcommand{\rmd}{{\mathrm{ d}}}
\newcommand{\rme}{{\mathrm{ e}}}
\newcommand{\rmi}{{\mathrm{ i}}}
\newcommand{\rmD}{{\mathrm{ D}}}
\newcommand{\R}{{\mathbb{ R}}}
\newcommand{\bbR}{{\mathbb{R}}}
\newcommand{\bbC}{{\mathbb{C}}}
\newcommand{\bbA}{{\mathbb{A}}}
\newcommand{\Hxy}{\boldsymbol{\mathcal{H}}}
\newcommand{\Hxydiv}{\boldsymbol{\mathcal{H}}_{\rm div0}}
\newcommand{\Hps}{\boldsymbol{\mathcal{H}}_{s}}
\newcommand{\Hms}{\boldsymbol{\mathcal{H}}_{-s}}
\newcommand{\hatH}{\boldsymbol{\hat{\mathcal{H}}}}
\newcommand{\indHx}{{\rm\scriptscriptstyle 1D}}
\newcommand{\Hx}{\boldsymbol{\mathcal{H}}_\indHx}
\newcommand{\eps}{\varepsilon}
\newcommand{\Om}{\Omega_{\rm m}}
\newcommand{\Oe}{\Omega_{\rm e}}
\newcommand{\Op}{\Omega_{\rm p}}
\newcommand{\Oc}{\Omega_{\rm c}}
\newcommand{\bk}{\mathbf{k}}
\newcommand{\bh}{\mathbf{h}}
\newcommand{\bm}{\mathbf{m}}
\newcommand{\bx}{\mathbf{x}}
\newcommand{\bu}{\mathbf{u}}
\newcommand{\bE}{\mathbf{E}}
\newcommand{\bH}{\mathbf{H}}
\newcommand{\bD}{\mathbf{D}}
\newcommand{\bB}{\mathbf{B}}
\newcommand{\bP}{\mathbf{P}}
\newcommand{\bM}{\mathbf{M}}
\newcommand{\bJ}{\mathbf{J}}
\newcommand{\bG}{\mathbf{G}}
\newcommand{\bU}{\mathbf{U}}
\newcommand{\bV}{\mathbf{V}}
\newcommand{\bbE}{\mathbb{E}}
\newcommand{\bbF}{\mathbb{F}}
\newcommand{\bbG}{\mathbb{G}}
\newcommand{\bbH}{\mathbb{H}}
\newcommand{\bbP}{\mathbb{P}}
\newcommand{\bbM}{\mathbb{M}}
\newcommand{\bbJ}{\mathbb{J}}
\newcommand{\bbU}{\mathbb{U}}
\newcommand{\bbV}{\mathbb{V}}
\newcommand{\bbW}{\mathbb{W}}
\newcommand{\bbI}{\mathbb{I}}
\newcommand{\calO}{{\mathcal{O}}}
\newcommand{\calZ}{{\mathcal{Z}}}
\def\wlkj{\boldsymbol{W}_{\!\! k,\omega,j}}
\def\wlkz{\boldsymbol{W}_{\!\! k,\omega,0}}
\def\wlk{\boldsymbol{W}_{\!\! k,\omega}}
\newcommand{\curl}{\operatorname{curl}}
\newcommand{\bcurl}{\operatorname{\bf curl}}
\def\div{\operatorname{div}}
\newcommand{\curlk}{\operatorname{curl_{\mathit k}}}
\newcommand{\bcurlk}{\operatorname{{\bf curl}_{\mathit k}}}
\newcommand{\bPi}{\boldsymbol{\Pi}}
\def\Rop{\mathrm{R}}
\def\bR{\boldsymbol{\rm{R}}}
\def\scD{\textsc{d}}
\def\scE{\textsc{e}}
\def\scI{\textsc{i}}
\def\scZ{\textsc{z}}
\def\DD{\textsc{dd}}
\def\DE{\textsc{de}}
\def\EI{\textsc{ei}}
\def\DI{\textsc{di}}
\def\EE{\textsc{ee}}
\def\zDD{\Lambda_\DD}
\def\zDE{\Lambda_\DE}
\def\zEI{\Lambda_\EI}
\def\zDI{\Lambda_\DI}
\def\zEE{\Lambda_\EE}
\def\zZ{\Lambda_\scZ}
\newcommand{\kE}{k_{\scE}}
\newcommand{\kc}{\kappa_{\rm c}}
\def\JacE{\mathcal{J}_{\scE}}
\newcommand{\calW}{\mathcal{W}}
\newcommand{\sgn}{\operatorname{sgn}}
\newcommand{\bhatU}{\boldsymbol{\hat{U}}}
\DeclareMathOperator*{\slim}{s-lim}
\def\Xint#1{\mathchoice
	{\XXint\displaystyle\textstyle{#1}}%
	{\XXint\textstyle\scriptstyle{#1}}%
	{\XXint\scriptstyle\scriptscriptstyle{#1}}%
	{\XXint\scriptscriptstyle\scriptscriptstyle{#1}}%
	\!\int}
\def\XXint#1#2#3{{\setbox0=\hbox{$#1{#2#3}{\int}$}
		\vcenter{\hbox{$#2#3$}}\kern-.5\wd0}}
\def\dashint{\Xint-}
\begin{document}
\vspace{-1in}

\title{An operator approach to the analysis of electromagnetic wave propagation in dispersive media. Part 2: transmission problems.}

\author{Maxence Cassier$^{a}$ and  Patrick Joly$^{b}$ \\ \\
{
\footnotesize $^a$ Aix Marseille Univ, CNRS,  Centrale Med, Institut Fresnel, Marseille, France}\\ 
{\footnotesize  $^b$ POEMS$^1$, CNRS, INRIA, ENSTA Paris, Institut Polytechnique de Paris, 91120 Palaiseau, France}\\ 
{\footnotesize (maxence.cassier@fresnel.fr, patrick.joly@inria.fr)}}

	\maketitle
	\begin{abstract}
In this second chapter, we analyse  transmission problems between a dielectric and a dispersive negative material. 
In the first part,  we consider  a transmission problem between two half-spaces,  filled respectively   by the vacuum and  a  Drude material, and separated by a planar interface. In this setting, we answer to the following  question: does this medium satisfy a limiting amplitude principle? This principle defines the stationary regime as the large time asymptotic behavior of a system subject to a periodic excitation. In the second part, we consider the transmission  problem  of  an infinite strip of Drude material  embedded in the vacuum and analyse the existence and dispersive properties of guided waves. In both problems,  our spectral analysis  enlighten new and  unusual  physical  phenomena  for the considered transmission problems due to the presence of the dispersive negative material. In particular,  we  prove the existence of an  interface  resonance   in the first part and the existence of slow light phenomena  for guiding waves in the second part.
	\end{abstract}	   

{\noindent \bf Keywords:} Maxwell's equations,  transmission problems between dielectrics and  passive metamaterials,  plasmonic waves, spectral theory, resonances, limiting absorption and limiting amplitude principles, guided waves, slow light phenomenon.

\section{Introduction}

This chapter is a second of two chapters devoted to the analysis of Maxwell's equations in dispersive media. In the first chapter, we essentially treat homogeneous media and in this second chapter we analyse transmission problems between dispersive media and non-dispersive media. This chapter is  based on  two articles  \cite{Cas-Haz-Jol-17,Cas-Haz-Jol-22} and two PhD theses \cite{Cas-14,Ros-23}. \\[6pt]
\noindent 
The present study is motivated by the modelling of wave  propagation  for transmission problems between dielectric media and metamaterials.
 Metamaterials are artificially microscopic structures whose macroscopic effective electromagnetic behavior is  dispersive materials.  Thus, the effective   electric permittivity $\eps$ and/or negative magnetic permeability $\mu$  is a function of the frequency and in addition, one or both of these functions  can become negative within some frequency range : in  such cases one says that the  material is negative  (see e.g. \cite{Gra-10,Gra-20}). \\[6pt]
 In the last twenty-five years,   transmission problems between dielectric media and metamaterials   have generated a huge interest among communities of physicists and mathematicians, owing to their extraordinary properties such as negative refraction \cite{Ves-68}, allowing the design of spectacular devices like the perfect flat lens \cite{Pen-00} or the cylindrical cloak in \cite{Nir-94}. Thanks to these negative electromagnetic coefficients, waves can propagate at the interface between such a negative material and a usual dielectric material \cite{Gra-12}.  Such a phenomenon can also be observed in metals in the optical frequency range \cite{Mai-07}. These waves, often called \emph{surface plasmons}, are localized near the interface and allow then to propagate signals in the same way as in an optical fiber, which may lead to numerous physical applications. 
Such transmission problems  have raised new questions in physics and mathematics. In particular, it is now well understood that in the case of a smooth interface between a dielectric and a negative material (both assumed non-dissipative), the time-harmonic Maxwell's equations become ill-posed if both ratios of $\eps$ and $\mu$ across the interface are equal to $-1$ (see e.g. \cite{Cos-Ste-85,Bon-14,Bon-14(2),Ngu-16,Bon-22}), which is precisely the conditions required for the perfect lens in \cite{Pen-00}. This result can be seen as the starting point of the present chapter where we in particular  characterize the situations where it is possible to consider  the time-harmonic solution at a frequency $\omega$ as the long-time behavior of the time-dependent equations subjected to a periodic forcing  source at $\omega$.
\\[6pt]
\noindent The rest of this chapter  is made of two sections:  the first section 2  treats the transmission problem between two homogeneous half-spaces separated by a planar interface: one half space is occupied by a non dispersive  dielectric (the vacuum) and the second one by a  Drude material. In section 3, we consider  an infinite strip of Drude material  embedded in the vacuum  which is precisely the geometrical setting introduced for the perfect lens in \cite{Pen-00}.  To analyse our problem, we use in both cases an abstract reformulation  of the Maxwell's evolution system:
\begin{equation}\label{eq.schrobis}
\frac{\rmd \, \bU}{\rmd\, t} + \rmi\, \bbA \, \bU=\bG,  \quad \mbox{ where $\bbA$ is a self-adjoint operator.} 
\end{equation}

\noindent For each case, we treat one classical questions in the mathematical analysis of linear of wave propagation models. In section \ref{sect-transm}, we address the question of the long-time behaviour of the electromagnetic fields when the medium is solicited by a causal time harmonic source  $\bG=\bbG \mathrm{e}^{-\rmi \omega t}$ with the existence of the so-called limiting amplitude principle. In section 3, we investigate the existence and dispersive properties of  waves guided by the Drude layer medium.  \\[6pt]
\noindent The outline of section \ref{sect-transm} is as follows. We describe the half-space transmission problem in section \ref{sec-math-model} and present the main results in section \ref{sec-large-limit}.  We analyse  the large time-behavior  of the solution and in particular emphasize unusual resonance phenomena in section \ref{sec-ampl}.  These results are related to the limiting absorption principle presented in section \ref{sec-abs-princip}.  In section \ref{spec-dens}, we define the main theoretical tools for our analysis. We introduce the spectral density $\omega \mapsto \bbM(\omega)$ of the operator $\bbA$ that we define in \ref{spec-dens-notion} and give the main properties in section \ref{sec-main-results-tools}. In section \ref{sec-construction-sepc-density}, we developed the construction of this density based on a Fourier reduction of the operator $\bbA$ (in section \ref{sec-reduced-op}), the construction of generalized  eigenfunctions (in section \ref{sec-geneigen}) and the diagonalization of the operator $\bbA$ via  a generalized  Fourier transform (in section \ref{sec-diag-th}). The above diagonalzation allows to construct the spectral density $ \bbM(\omega)$ in section \ref{sec-spec-density}.
Finally in section \ref{sec-proofs},  we prove  the main theorems of section \ref{sec-abs-princip}  (Theorem \ref{thm.limabs}) and  section  \ref{sec-ampl} (Theorem \ref{th.ampllim}).\\[6pt] The outline of section \ref{sec-slab} is as follows. Section \ref{sec-slab-model} describes the three layered transmission  problem. Section \ref{sec-common-prop}  recalls  common notations and properties  with  the problem  
analysed in section \ref{sect-transm}. In section  \ref{sec-guided-waves-def}, we define the notion of guided modes at frequency $\omega$ and wave-number $k$  (in the interface direction).  Moreover, one  shows that the existence of such guiding modes  is determined by the existence  of non-trivial odd or even solutions of a scalar dispersive  Sturm-Liouville equation. For the rest of the chapter, for length purpose,  we limit our analysis to the existence of even solutions of this  equation.  In section \ref{sec-disp-rel}, we  prove that the existence of non-trivial even solutions  is characterized by an implicit equation in the variable $(k,\omega)$ refereed to as the dispersion relation. In section \ref{sec-disp-cas}, we give the mathematical properties of  the solutions of the dispersion relation, named the dispersion curves. In section \ref{sec-disp-curves}, we  comment and illustrate (by numerical simulations)  the mathematical properties enlightened in section \ref{sec-disp-cas}.  In particular in section \ref{sec-disp-curves-genral-comments}, we compare these results to the ones obtained by the analysis   of the ``classical'' transmission problem involving  a  slab of dielectric medium (see appendix section \ref{classic-case}). Some comparisons are also made with the transmission problem analysed in section \ref{sect-transm}.
Finally, in section \ref{sec-disp-curves-slow-light}, we insist on  the existence of a slow light phenomenon due to presence of critical points on the dispersion curves where the group velocity  vanishes.

\section{Transmission problem involving  a dispersive media}\label{sect-transm}
\subsection{Description of the mathematical model}\label{sec-math-model}
We analyse here the propagation of transverse electric waves for a transmission problem between a dielectric (the vaccum) and a metamaterial (a Drude  dispersive medium). 
More precisely, we consider the Transverse Electric (TE) Maxwell's equations  in a bi-layered medium  composed by two materials: the vacuum and a Drude medium which fill  respectively two half-planes 
\begin{equation*} 
\bbR^2_- := \{\bx=(x,y)\in \bbR^2\mid x<0\} 
\quad\text{and}\quad
\bbR^2_+ := \{\bx=(x,y)\in \bbR^2\mid x>0\},
\end{equation*}
separated by a planar interface at $x=0$ (see figure  \ref{fig.med}).
\begin{figure}[h!]
\centering
 \includegraphics[width=0.4\textwidth]{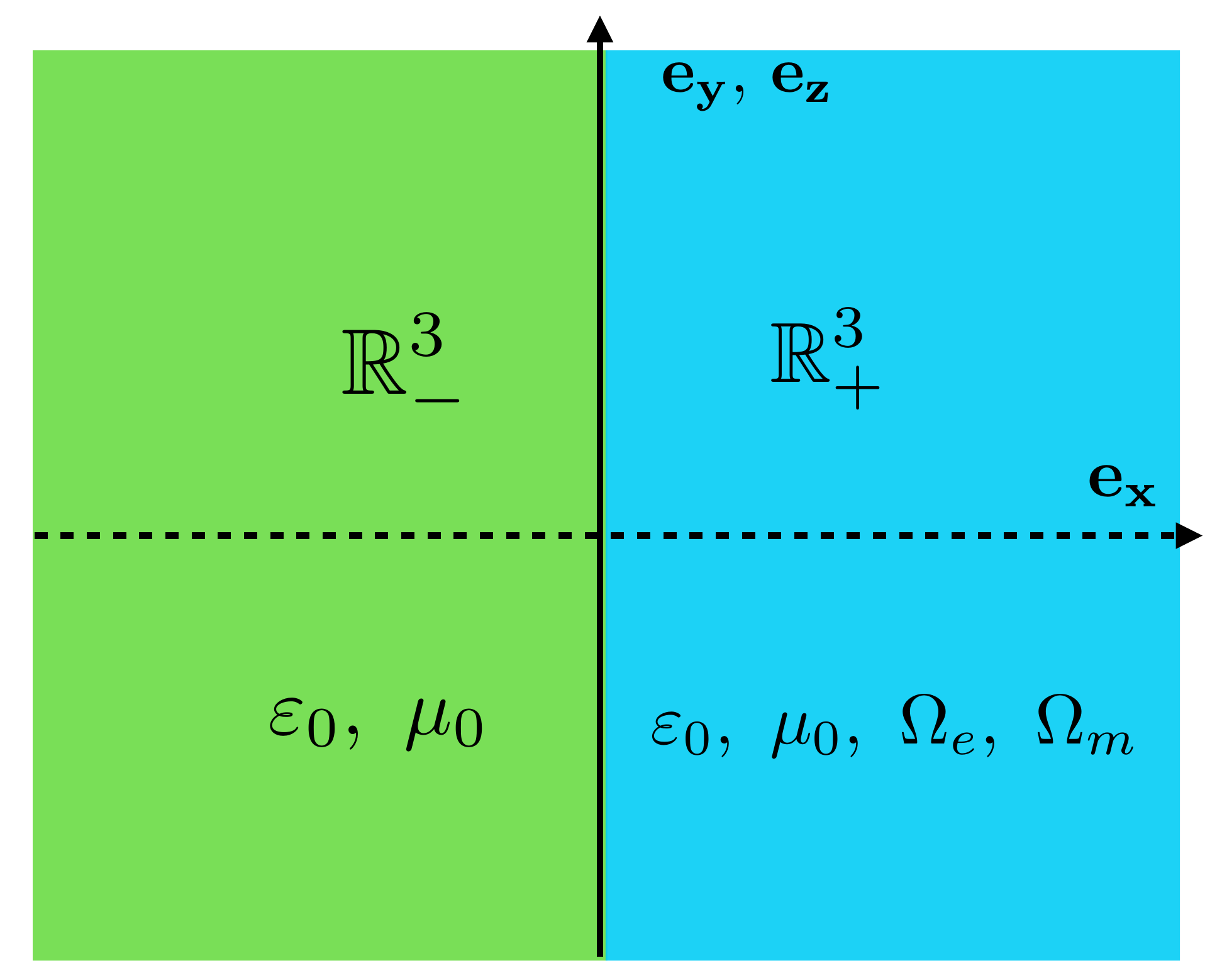}
 \hspace{0.75cm}
  \includegraphics[width=0.42\textwidth]{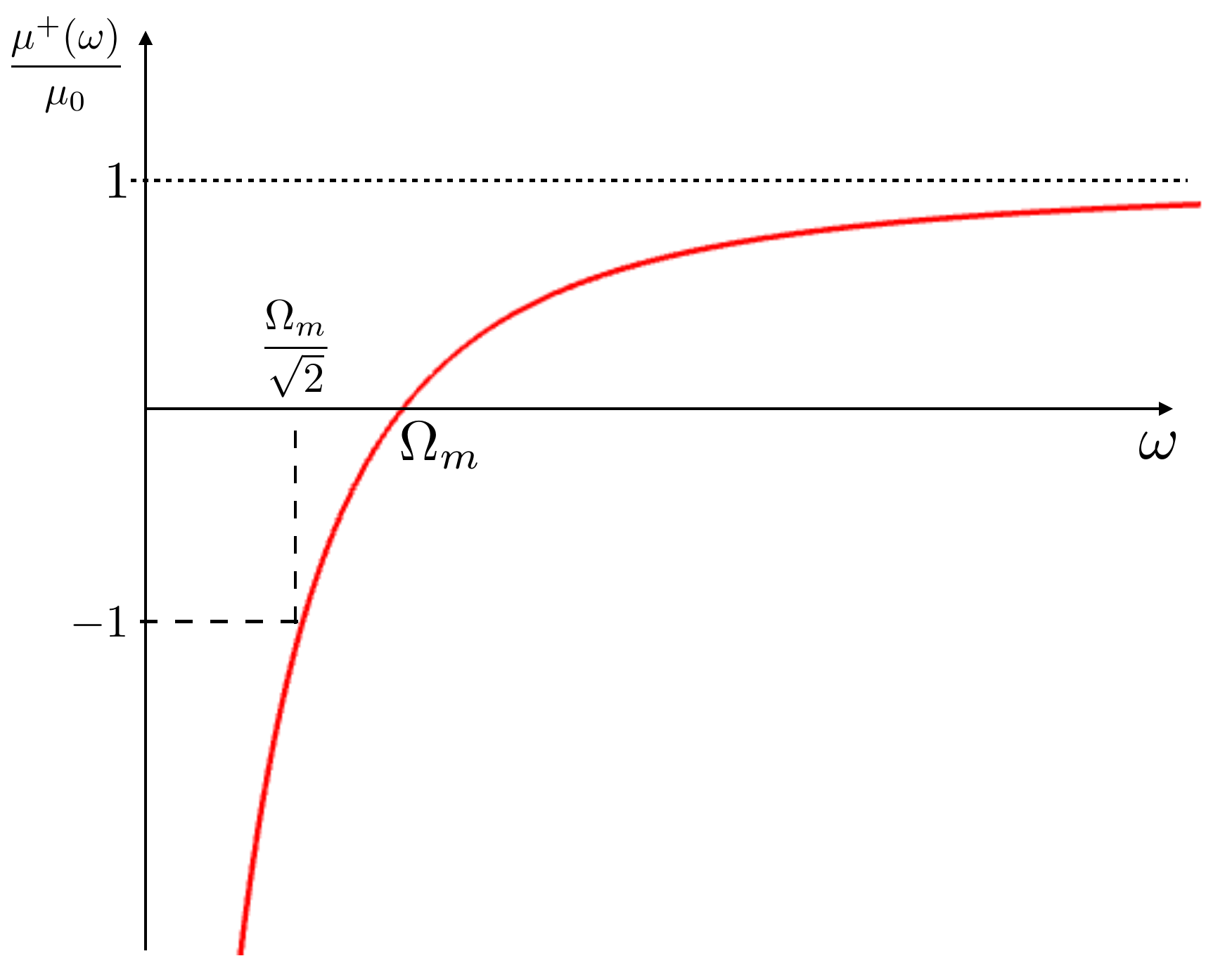}
 \caption{Left figure: Transmission problem between  the vacuum and a non-dissipative Drude  filling respectively $\mathbb{R}^3_-$   and  $\mathbb{R}^3_+$.
 Right figure:  Relative permeability function $\omega \mapsto \mu^{+}(\omega)/\mu_0$.}
 \label{fig.med}
\end{figure}
\\[4pt]
\noindent In this setting, the unknowns are the transverse electric induction and the magnetic induction: 
$$\bD(\bx,t) =D(\bx,t)\, {\bf e}_z \quad  \mbox{ and } \quad{\bf B}(\bx,t)=\big(B_x(\bx,t),B_y(\bx,t)\big)^{\top},$$  
and the transverse electric field and the magnetic  field:
$$\bE(\bx,t) =E(\bx,t) \,{\bf e}_z \quad  \mbox{ and } \quad{\bf H}(\bx,t)=\big(H_x(\bx,t),H_y(\bx,t)\big)^{\top}.$$  

\noindent In a presence of a current source term $\bJ_{\mathrm{s}}(\bx,t) =J_{\mathrm{s}}(\bx,t) \,{\bf e}_z $,  the evolution of $(D,{\bf B},E, \bH )$ is given by the following two dimensional  Transverse Electric (TE) Maxwell's equation:
\begin{equation}\label{eq.MaxwellTE}
\partial_t D -\curl \bH = -J_\mathrm{s} \quad  \mbox{ and } \quad \partial_t \bB +\bcurl  E = 0,
\end{equation}
where the 2D curl operators $\curl$ and  $\bcurl$ are respectively defined by
\begin{equation*}
\bcurl u := (\partial_y u, -\partial_x u)^{\top}
\quad\mbox{and}\quad 
\curl \bu := \partial_x u_y -\partial_y u_x \mbox{ for } \bu=(u_x,u_y)^{\top}.
\end{equation*}
For the derivation of the TE model  from the Maxwell's equations, we refer to  \cite{Cas-Haz-Jol-17}, section 2.2.\\[6pt]
The Maxwell's equations  have to be supplemented by the constitutive laws of each material:
\begin{equation}\label{eq.CL1}
\begin{array}{lllll}
& D=\varepsilon_0\,  E \quad   &\mbox{ and } &\quad \bB=\mu_0 \, \bH & \quad   \mbox{ in $\bbR^2_-$  \quad (i.e.  in the  vacuum)},\\
&  D=\varepsilon_0\,  E+P \quad  &   \mbox{ and } & \quad \bB=\mu_0 \, \bH +\bM &\quad   \mbox{ in $\bbR^2_+$  \quad (i.e.  in the  Drude material)}
\end{array}
\end{equation}
where the polarization  $P$ and magnetisation  $\bM$ satisfy the following ODE's equations in the Drude medium (i.e in $\bbR^2_+$):
\begin{equation}\label{eq.CL2}
\partial_t^2 P =\varepsilon_0 \, \Omega_e^2\, E  \  \ \mbox{ and }  \  \  \partial_t^2\bM= \mu_0 \, \Omega_m^2\,\bH ,
\end{equation}
 (where we point out that the two above equations can be obtained  from equations (6.4) and (6.6) of \cite{Cas-Jol-25} when  $N_e=N_m=1$ and $\omega_{e,1}=\omega_{m,1}=0$.)\\[6pt]
\noindent We define the two unknowns $\dot{P}:=\partial_t \, P$ and $\dot{\bM}:= \partial_t \bM$, referred in physics as the induced electric and magnetic currents. Eliminating the induction $(D,\bB)$ in 
(\ref{eq.MaxwellTE},\ref{eq.CL1},\ref{eq.CL2})  leads
to the following first order PDE's system on  $(E, \bH, \dot P, \dot \bM)$:
\begin{equation}\label{TE}
\left\{ \begin{array}{ll}
\varepsilon_0 \:\partial_t E -\curl \bH + \Pi \, \dot{P} = - J_{\mathrm{s} } & \mbox{in }  \bbR^2, \\[5pt]
\mu_0\: \partial_t \bH + \bcurl E + \bPi\, \dot{\bM}= 0 & \mbox{in } \bbR^2,\\[5pt]
\partial_t \dot{P} = \varepsilon_0  \Omega_e^2 \, \Rop \, E & \mbox{in } \bbR_+^2, \\[5pt]
\partial_t \dot{\bM}= \mu_0 \Omega_m^2 \, \bR \, \bH & \mbox{in } \bbR_+^2.
\end{array} \right.
\end{equation}
where the operator $\Pi$ (resp. $\bPi$) denotes the extension by $0$ of a scalar function (resp. a 2D vector field) defined on $\bbR^2_+$ to $\bbR^2$, whereas  $\Rop$ (resp., $\bR$) is the restriction operator to  $\bbR^2_+$ of a scalar function (respectively, a 2D vector field) defined on $\bbR^2$. 
The system \eqref{TE} is interpreted in the sense of distributions, thus  it contains implicitly  the transmission conditions 
\begin{equation}\label{eq.transmission}
[ \bE]_{x=0}=0 \quad\mbox{and}\quad [ \bH_y]_{x=0}=0,
\end{equation}
where $[f]_{x=0}$ denotes the gap of $f$ across the line $x=0.$   \eqref{eq.transmission} 
 expresses the continuity  of the tangential electric and magnetic fields at the interface $x=0$.\\[4pt]
\noindent Looking for time-harmonic solutions  of  \eqref{TE} at the frequency $\omega_{\mathrm{s}}\in \mathbb{R}$, i.e. solutions of the form $$(E(\bx,t),\bH(\bx,t), \dot P(\bx,t),\dot \bM(\bx,t)) =(\bbE(\bx),  \bbH(\bx) ,  \dot{\bbP}(\bx),  \dot \bbM(\bx))   \  \rme^{-\rmi \, \omega_{\mathrm{s}} \, t}$$   in a presence of  a time-harmonic source current $\bJ_s=\mathbb{J}_{\mathrm{s}}  \  \rme^{-\rmi \, \omega_{\mathrm{s}} t}$ yields the following time-harmonic Maxwell's equation in the medium (at the frequency $\omega$):
\begin{equation}\label{TE-harm}
\left\{ \begin{array}{ll}
-\rmi \, \varepsilon_0\,  \omega_{\mathrm{s}} \: \bbE -\curl \bbH + \Pi \, \dot \bbP = - \bbJ_{\mathrm{s} } & \mbox{in }  \bbR^2, \\[5pt]
-\rmi \, \mu_0\: \omega_{\mathrm{s}}\,   \bbH + \bcurl \bbE + \bPi\,\dot{ \bbM}= 0 & \mbox{in } \bbR^2,\\[5pt]
-\rmi \,  \omega_{\mathrm{s}} \, \dot{\bbP} = \varepsilon_0  \Omega_e^2 \, \Rop \, \bbE & \mbox{in } \bbR_+^2, \\[5pt]
-\rmi  \,  \omega_{\mathrm{s}} \,  \dot{\bbM}= \mu_0 \Omega_m^2 \, \bR \, \bbH & \mbox{in } \bbR_+^2.
\end{array} \right.
\end{equation}
After eliminating $(\dot{\bbP}, \dot{\bbM})$, it leads to the following time-harmonic equation in $(\bbE, \bbH)$:
\begin{equation*}\label{eq.Harmoniceq}
\rmi \, \omega_{\mathrm{s}} \, \eps(\omega_{\mathrm{s}},\cdot) \, \bbE + \bcurl \bbH= \bbJ_{{\rm s},\omega_{\mathrm{s}}}
\quad\mbox{and}\quad 
-\rmi \, \omega_{\mathrm{s}} \, \mu(\omega_{\mathrm{s}},\cdot) \,\bbH + \bcurl\bbE= 0
\quad\mbox{in }\bbR^2,
\end{equation*}
where the electric $\eps(\omega,\cdot)$ permittivity and magnetic permeability $\mu(\omega,\cdot)$ functions are given by
\begin{equation*}\label{eq.defepsmu}
\eps(\omega,\bx)  :=  \left\lbrace\begin{array}{ll}
         \eps^{-}(\omega):=\eps_{0} & \mbox{ if } x<0,\\[2pt]
       \displaystyle   \eps^{+}(\omega)=\eps_0 \Big( 1-\frac{ \Oe^2}{\omega^2}\Big) & \mbox{ if } x>0,\\[4pt]
       \end{array}\right.   \mbox{and }
         \mu(\omega, \bx)  :=  \left\lbrace\begin{array}{ll}
         \mu^{-}(\omega):=\mu_{0} & \mbox{ if } x<0,\\[2pt]
       \displaystyle   \mu^{+}(\omega):=\mu_0 \Big( 1-\frac{ \Om^2}{\omega^2}\Big) & \mbox{ if } x>0.\\[4pt]
       \end{array}\right.
\end{equation*}
The rational functions $\omega \mapsto \eps^{+}(\omega), \mu^{+}(\omega)$  (see figure \ref{fig.med}) characterize the dispersion law of the Drude material   (see example 30 of \cite{Cas-Jol-25}). They satisfy  the (HF) principle.  Moreover, one has
\begin{equation*}
\eps^{+}(\omega)<0 \mbox{ for } |\omega| \in (0,\Omega_e) \quad\mbox{and}\quad \mu^+(\omega)<0 \mbox{ for } |\omega| \in (0,\Omega_m).
\end{equation*}
Thus, the Drude medium behaves as  a negative index refractive material  with both  $\eps^{+}(\omega)<0$ and  $\mu^{+}(\omega)<0$ when $|\omega|\in \big(0,\min (\Oe, \Om)\big)$.
Furthermore, when $\Omega_e\neq\Omega_m$, there is a frequency gap $\big(\min(\Omega_e, \Omega_m), \max(\Omega_e, \Omega_m)\big)$  where  $\eps^{+}(\omega)$ and $\mu^{+}(\omega)$ have opposite signs. In this gap,  waves cannot propagate in the material bulk: they are evanescent with respect to the interface $x=0$. It is precisely what  happens for plasmonic waves in metals at optical frequencies \cite{Mai-07}. 
Finally there exists a unique frequency for which the relative permittivity $\eps^{+}(\omega) / \eps_0$ (respectively the relative permeability  $\mu^{+}(\omega) / \mu_0$) is equal to $-1$:
\begin{equation}\label{eq.ratio}
\frac{ \eps^{+}(\omega)}{\eps_0} = -1  \mbox{ if } |\omega| = \frac{\Omega_e}{\sqrt{2}}
\quad\mbox{and}\quad
\frac{\mu^{+}(\omega): }{\mu_0} = -1  \mbox{ if } |\omega| = \frac{\Omega_m}{\sqrt{2}}.
\end{equation}
\begin{Rem}[Critical case]\label{Rem-critica-case}
Note that in particular that both ratios in \eqref{eq.ratio} can be simultaneously equal to $-1$ at the same frequency if and only if $\Omega_e = \Omega_m$. This situation will be referred as the critical case in the following.
\end{Rem}
\noindent The evolution system \eqref{TE}  can be rewritten as a conservative evolution equation
\begin{equation}\label{eq.schro}
\frac{\rmd \, \bU}{\rmd\, t} + \rmi\, \bbA \, \bU=\bG,
\end{equation}
in the Hilbert space
\begin{equation}\label{eq.defHxy}
\mathcal{H} := L^2(\bbR^2) \times L^2(\bbR^2)^2 \times L^2(\bbR^2_+) \times L^2(\bbR^2_+)^2
\end{equation}
whose inner product is defined for all $\bU:=(E, \bH, \dot{\bP}, \dot{\bM})^{\top}$ and $\bU':=(E^{\prime}, \bH^{\prime},  \dot{\bP}^{\prime},  \dot{\bM}^{\prime})^{\top}\in \mathcal{H}$ by
\begin{equation}\label{eq.innerproduct}
(\bU,\bU')_{\mathcal{H}} := 
\int_{\bbR^2}\left(\eps_0\,E\,\overline{E^{\prime}} + \mu_0\,\bH\cdot\overline{\bH^{\prime}}\right)\rmd \bx 
+ \int_{\bbR^2_+}\left(\eps_0^{-1} \Oe^{-2}\,\dot{\bP} \,\overline{\dot{\bP}^{\prime}} + \mu_0^{-1} \Om^{-2}\, \bM\cdot\overline{\dot{\bM}^{\prime}}\right)\rmd \bx.
\end{equation}
The propagative operator  $\bbA$ in  \eqref{eq.schro} is the unbounded \emph{self-adjoint} operator on $\mathcal{H}$ defined by
\begin{align*}
\bbA & := \ \rmi\, \begin{pmatrix}
0 &\eps_0^{-1}\,\curl & -\eps_0^{-1} \, \Pi & 0\\
- \mu_0^{-1}\,\bcurl& 0 &0 & - \mu_0^{-1} \,\bPi \\
\eps_0 \Oe^2 \, \Rop & 0 & 0 &0 \\
0 & \mu_0 \Om^2\, \bR & 0 & 0
\end{pmatrix},
\end{align*}
where its domain $\rmD(\bbA)$ is given by
\begin{equation*}
\rmD(\bbA)  := H^{1}(\bbR^2) \times \bH_{\!\curl}(\bbR^2) \times  L^2(\bbR^2_+) \times L^2(\bbR^2_+)^2 \subset \mathcal{H}
\end{equation*}
with $\bH_{\!\curl}(\bbR^2) := \{ \bu\in  L^2(\bbR^2)^2 \mid \curl \bu \in L^2(\bbR^2)\}$.  We notice that a proof of the self-adjointness of $\bbA$ is given  in [\cite{Cas-Haz-Jol-17}, proposition 2] and in [\cite{Cas-14}, proposition 5.2.1]. 
Finally the source term $\bG$ in \eqref{eq.schro} is given by $\bG(t) := (- \eps_0^{-1} \, J_{\rm s}(\cdot,t) \,,0\,, \,0 ,\, 0)^{\top} \in \mathcal{H} $.\\[6pt]
\noindent For the sequel, we shall denote by $\sigma(\bbA)$  the spectrum of $\bbA$ and one has
 $\sigma(\bbA)\subset  \bbR$ (since $\bbA$ is self-adjoint). Accordingly, we introduce the corresponding  resolvent of $\bbA$: 
\begin{equation*}
R(\zeta):=(\bbA-\zeta\,\bbI)^{-1} \in B(\mathcal{H})\quad \text{for } \zeta \in \bbC\setminus\sigma(\bbA),
\end{equation*}
where $B(\Hxy)$ denotes the Banach algebra of bounded linear operators in $\Hxy$.\\[6pt]
\noindent We consider here for simplicity zero initial conditions (i.e. $\bU(0,t)=0$) in \eqref{eq.schro}). As $\bbA$ is  self-adjoint, the Hille-Yosida  theorem (see e.g. \cite{Bre-10,Paz-83}) implies that if $\bG\in C^1(\mathbb{R}^+, \mathcal{H})$ then  the evolution equation \eqref{eq.schro} admits a unique solution $\bU(t)\in  C^{1}(\bbR^{+},\mathcal{H})\cap C^{0}(\bbR^{+},\rmD(\bbA))$  given by
\begin{equation}\label{eq.duhamel}
\bU(t)=\int_{0}^{t} \rme^{-\rmi \, \bbA\, (t-\tau)}\, \bG(\tau) \,\rmd \tau,\quad \forall \, t \geq 0, 
\end{equation}
where $(\rme^{-\rmi\, \bbA\, t})_{t\in \bbR}$ is the group of unitary operators generated by  $\bbA$.
\begin{Rem}
As $\rme^{-i \, \bbA\, (t-\tau)}$ is unitary,   the Duhamel formula (\ref{eq.duhamel} implies that if $t \mapsto \|\bG(t)\|_{\mathcal{H}}$ is bounded on $\bbR^+$ (as e.g. for a causal time-harmonic source), then $\|\bU(t)\|_{\mathcal{H}}$ is linearly bounded:
\begin{equation}\label{eq.incrlint}
\|\bU(t)\|_{\mathcal{H}} \leq t \ \sup_{\tau \in \bbR^+} \|\bG(\tau)\|_{\mathcal{H}},\quad \forall\,  t \geq 0.
\end{equation}
\end{Rem}
\subsection{Large time behavior and limiting absorption principle}\label{sec-large-limit}
In this section, we recall the main results obtained in \cite{Cas-Haz-Jol-17} concerning the limiting amplitudes and limiting absorption principles 
for the evolution equation \eqref{eq.schro}. These principles have a long history  in scattering theory and more generally in mathematical physics. The pioneering ideas seem to date back to the early 1900's with the work of  W. von Ignatowsky \cite{Igna-05}. These principles were first proved rigorously  to our knowledge by  C. Morawetz \cite{Mora-62} for the propagation of waves in presence of sound soft obstacles in a homogeneous medium via energy techniques. Then D. Eidus  \cite{Eid-65,Eid-69} constructed an abstract proof via a spectral approach  and applied it to a class of acoustic media that are locally inhomogeneous. Eidus' approach was then developed by  S. Agmon \cite{Agm-75}, C. Wilcox \cite{Wil-84}, Y. Dermanjian, and J-C. Guillot  \cite{Der-86} and R. Weder \cite{Wed-91} for acoustic and electromagnetic stratified media using the notion of spectral decomposition of the underlying operator. Finally, it was extended  to other structures  such as waveguides \cite{Mor-89}, periodic media \cite{Rad-15}  \ldots  and  to other  waves equations: elastic waves \cite{Der-88,San-89}, water waves \cite{Vul-87,Haz-07}, \ldots. More  recently, the limiting amplitude principle, which allows to connect the  time-dependent  and  harmonic equations in wave propagation phenomena, has been  used  for numerical purpose in acoustic inhomogeneous media    to solve the Helmholtz equation in the high frequency regime, see e.g. \cite{Arn-24}.\\[6pt]
The method we use here is inspired from Eidus' spectral approach and its extension to stratified media. It is applied for the first time in the context of dispersive Maxwell's equations and  metamaterials which complicates significantly the analysis. 
\\[6pt]
\noindent We are interested here  to analyse the long-time behavior of the solution $\bU(t)$ given by \eqref{eq.duhamel} 
 when the source term $\bG$  is a casual periodic forcing  source term  at the source frequency $\omega_{\mathrm{s}}\in \mathbb{R}$:
 $$
 \bG(t)= \mathbb{G}\, \rme^{-\rmi \omega_s t} , \forall t\geq 0 \mbox{ with } \mathbb{G}\in \mathcal{H} .
 $$
For such source terms,  the formula \eqref{eq.duhamel} can be rewritten (using the functional calculus of self-adjoint operator) equivalently as 
\begin{equation*}
\bU(t) = \phi_{\omega_s,t}(\bbA)\,\bG,
\end{equation*}
where $\omega \mapsto \phi_{\omega_s,t}(\omega)$ the bounded continuous function defined by for all $t\geq 0$ by
\begin{equation}\label{eq.phiduhamel2}
\phi_{\omega_s,t}(\omega) := 
\rme^{-\rmi \omega\, t} \int_{0}^{t} \rme^{\rmi(\omega- \omega_{\mathrm{s}})\,\tau}\,\rmd \tau
= \left\lbrace\begin{array}{ll}
\displaystyle \rmi \,\frac{\rme^{-\rmi \omega\, t} -\rme^{-\rmi \omega_{\mathrm{s}} \, t}}{\omega-\omega_{\mathrm{s}}} & \mbox{ if } \omega \neq \omega_{\mathrm{s}} , \\[10pt]
t\, \rme^{-\rmi \omega_{\mathrm{s}}\,t} & \mbox{ if } \omega =\omega_{\mathrm{s}}.
\end{array}
\right.
\end{equation}
Usually, after a transient regime, the solution $\bU$  reaches a permanent regime and thus behave asymptotically for long time as a time-harmonic wave $\mathbb{U}_{ \omega_{\mathrm{s}}}^{+}=(\mathbb{E}_{\omega_{\mathrm{s}}}^{+},\mathbb{H}_{\omega_{\mathrm{s}}}^{+}, \dot{\bbP}_{\omega_{\mathrm{s}}}^{+},\dot{\bbM}_{\omega_{\mathrm{s}}}^{+})^\top$:
\begin{equation}
\bU(t)\sim   \, \mathbb{U}_{\omega_{\mathrm{s}}}^{+} \, \rme^{-\rmi \, \omega_{\mathrm{s}}\, t} \quad\text{as } t \to +\infty.
\label{eq.LAmP-imprecis}
\end{equation} 
The so-called \emph{limiting amplitude principle}  is closely related to the \emph{limiting absorption principle}, which  links  the definition of the field $\bbU_{\omega_{\mathrm{s}}}^{+}$ to the resolvent of $\bbA$ in \eqref{eq.LAmP-imprecis}  by (see remark \ref{Prem_LAP})
\begin{equation} \label{eq.refabsprincip}
\bbU_{\omega_{\mathrm{s}}}^{+} =- \rmi \, \lim_{\eta \searrow 0} R(\omega_{\mathrm{s}}+\rmi \eta) \ \mathbb{G} 
  \end{equation} 
assuming that the above limit exists (in a sense to be defined).
\begin{Rem}[Physical justification]  \label{Prem_LAP} Adding absorption to the equation \eqref{eq.schro} with  $\bG(t)= \mathbb{G}\, \rme^{-\rmi \omega_s t}$ consists in considering for $\eta > 0$ the approximate equation 
	\begin{equation}\label{eq.schro_abs}
		\frac{\rmd \, \bU_{\eta}}{\rmd\, t} + \eta \, \bU_\eta + \rmi\, \bbA \, \bU_\eta= \mathbb{G}\, \rme^{-\rmi \omega_s t},
	\end{equation} 
	Then looking for a time harmonic  solution  $\bU
	_\eta (t)= \mathbb{U}_{\omega_s,\eta} \, \rme^{-\rmi \omega_s t}$ of \eqref{eq.schro_abs} leads
	to 
	$$
	-\rmi  \, (\omega_s + \rmi \eta) \,  \mathbb{U}_{\omega_s,\eta}   +  \rmi\, \bbA \,  \mathbb{U}_{\omega_s,\eta} =  \mathbb{G}
	$$ 
that is to say $\mathbb{U}_{\omega_s,\eta} =  - \,  \rmi \, R(\omega_{\mathrm{s}}+\rmi \eta) \ \mathbb{G}$. Then looking for the limit of $\mathbb{U}_{\omega_s,\eta}$ when $ \eta \searrow 0$ leads (formally) to \eqref{eq.refabsprincip}.
\end{Rem}
\noindent Our aim with the following results   is to define a mathematical framework for a rigorous statement of these two principles and to make precise when these principles hold true or not. 

\subsubsection{The limiting absorption principle}\label{sec-abs-princip}
The proof of the limiting absorption principle strongly relies on   $\sigma(\bbA)$, the spectrum of $\bbA$, and its related properties.
First, we notice  that the limit \eqref{eq.refabsprincip} clearly exists in $\Hxy$ when $\omega_s$ belongs to the resolvent set $\bbC\setminus \sigma(\bbA)$  since 
$R(\omega_{\mathrm{s}}+\rmi \eta) $ converge to $R(\omega_{\mathrm{s}})$ in $B(\Hxy)$. 
Oppositely, if $\omega_{\mathrm{s}}\in \sigma(\bbA)$, the  limit   in $\Hxy$   cannot exist in general since for any $\bbG$ since $R(\omega_{\mathrm{s}}+ \rmi \eta)$ blows as $\eta^{-1}$ in $B(\Hxy)$.
Moreover, if $\sigma_p(\bbA)$ (the point spectrum of $\bbA$)  is not empty and $\omega_{\mathrm{s}}\in \sigma_p(\bbA)$, this is even worse since one needs to restrict the space of  $\bbG$   to expect the existence  of the limit  \eqref{eq.refabsprincip} even  in a weaker topology. Indeed, if $\omega_{\mathrm{s}}$ is an eigenvalue and $\bbG$ an associated eigenfunction  then
\begin{equation}\label{eq.expanresolvsp}
R(\omega_{\mathrm{s}}+\rmi \eta )= - \rmi  \, \frac {\bbG} {\eta }.
\end{equation}
The above observation leads us to first identify the spectrum and the point spectrum of $\bbA$. 
To this aim, we denote  by $\Op$ (``p'' for ``plasmonic'') and $\Oc$ (``c'' for ``cross point'')  the following particular frequencies:
\begin{equation}
\Op := \frac{\Om}{\sqrt{2}} \quad\text{and}\quad
\Oc := \frac{\Oe \, \Om}{\sqrt{\Oe^2+\Om^2}}
\label{eq.def-Op-Oc}
\end{equation}
Note that in the critical case (defined in Remark \ref{Rem-critica-case}), that is, when $\Oe = \Om$, we have $\Op = \Oc.$ It is also useful to introduce the following set of ``exceptional frequencies'' (whose role will be made clear later):
\begin{equation}
\sigma_{\rm exc} := 
\{0,\pm\, \Op,\pm \, \Om\}  \   \text{ if } \Oe \neq \Om \quad  \mbox{ and }  \quad  \sigma_{\rm exc} :=   \{0,\pm \,\Om\} \  \text{ if } \Oe = \Om.
\label{eq.def-sigma-exc}
\end{equation}
In addition, we also define the closed  sub-space of $\mathcal{H}$:
\begin{equation}
\Hxydiv :=\{ (\bE, \bH, \dot{\bP}, \dot{\bM})^\top \in \Hxy \mid \div \bH=0 \mbox{ in } \bbR^{2}_{\pm} \mbox{ and } \div \dot \bM=0 \mbox{ in } \bbR^2_+\}.
\label{eq.Hxydiv}
\end{equation}
We can now state our proposition on the spectrum  $\sigma(\bbA)$ and the point spectrum  $\sigma_{p}(\bbA)$  of $\bbA$   which gather various results  proved in \cite{Cas-Haz-Jol-17} (in the section 4.2.1 and the corollary 23).
\begin{Pro}\label{prop.spectrumA}
 $\sigma(\bbA)=\bbR$  and $\sigma_{p}(\bbA)$ is composed of eigenvalues of infinite multiplicity: 
\begin{equation}\label{eq.specpt}
\sigma_{p}(\bbA) = \{0,\pm\Om\}\ \text{ if }  \  \Oe \neq \Om \  \mbox{ and }  \  \sigma_{p}(\bbA)= \{0,\pm \,\Op,\pm \,\Om\} \ \text{ if } \Oe = \Om.
\end{equation}	
The eigenspaces $\ker(\bbA)$ and $\ker(\bbA\pm \, \Om)$ are respectively given by:
	\begin{align}
		\ker(\bbA) & =  \{ (0, \widetilde{\bPi}\, \nabla \phi, 0 , 0)^{\top} \ \mid \ \phi\in  W_0^1(\bbR^2_-)\},  \label{eq.kernel} \\
		\ker(\bbA \mp \Om \,\bbI) & = \left\{ \left(0, \,\bPi \,\nabla \phi, \,0 , \pm \rmi\mu_0\Om \,\nabla \phi \right)^{\top} \ \mid \ \phi\in W_0^1(\bbR^2_+)\right\},\label{eq.eigenspaces}
	\end{align}
	where $\widetilde{\bPi}$ is the extension operator by $0$ of a 2D vector fields defined on $\bbR^2_-$ to  $\bbR^2$ and $W_0^1(\bbR^2_{\pm})$ is  the Beppo-Levi space 
	$W_0^1(\bbR^2_{\pm}) := \{ \phi \in L^2_{\rm loc}(\bbR^2_{\pm})\ \mid  \nabla \phi  \in L^2(\bbR_{\pm}^2)^2 \mbox{ and } \phi|_{x=0} = 0 \}.$ Moreover,  the orthogonal complement of the direct sum of these eigenspaces is the space $\Hxydiv$, i.e.
	\begin{equation}\label{eq.Hxydiv2}
	\Hxydiv = \big( \ker \bbA \oplus \ker(\bbA+\Om \,\bbI)\oplus \ker(\bbA-\Om \,\bbI) \big)^\perp.
	\end{equation}
\end{Pro}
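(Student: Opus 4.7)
My plan is to treat the four claims of the proposition separately, with the spectral classification and the explicit form of the eigenspaces being the heart of the argument.

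\textbf{Kernel of $\bbA$ and the $\pm\Om$ eigenspaces.} I write $\bbA\bU=\omega\bU$ componentwise for $\bU=(E,\bH,\dot P,\dot\bM)^\top$. At $\omega=0$, the last two rows directly force $E|_{\bbR^2_+}=0$ and $\bH|_{\bbR^2_+}=0$; the second row then reduces to $\bcurl E=0$ on $\bbR^2_-$, so $\nabla E=0$ there, and $L^2$-integrability gives $E\equiv 0$ globally. Feeding this back in, $\dot P=\dot\bM=0$, and $\bH|_{\bbR^2_-}$ is a curl-free $L^2$ vector field whose tangential trace at $x=0$ vanishes (inherited from $\bH_{\!\curl}$-regularity and the vanishing of $\bH$ on $\bbR^2_+$), hence $\bH|_{\bbR^2_-}=\nabla\phi$ with $\phi\in W_0^1(\bbR^2_-)$, which is \eqref{eq.kernel}. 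For $\omega=\pm\Om$, I first use the last two rows to write $\dot P=\rmi\varepsilon_0\Oe^2\omega^{-1}\Rop E$ and $\dot\bM=\rmi\mu_0\Om^2\omega^{-1}\bR\bH$. The vanishing $\mu^+(\pm\Om)=0$ then reduces the second equation in $\bbR^2_+$ to $\bcurl E=0$, so $E$ is $\bx$-constant and hence zero on $\bbR^2_+$ by $L^2$-integrability; on $\bbR^2_-$ one obtains $-\Delta E=(\Om/c_0)^2 E$ with zero Dirichlet trace, which admits no $L^2$ solution since the Dirichlet Laplacian on a half-plane has no positive eigenvalue. Thus $E\equiv 0$. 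The first equation then forces $\bH|_{\bbR^2_+}=\nabla\phi$ with $\phi\in W_0^1(\bbR^2_+)$, and the fourth equation yields $\dot\bM=\pm\rmi\mu_0\Om\nabla\phi$, giving \eqref{eq.eigenspaces}.

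\textbf{Absence of other real eigenvalues.} For $\omega\in\bbR\setminus\{0,\pm\Om\}$ I apply a partial Fourier transform in $y$, reducing the eigenproblem to a one-parameter family of 1D transmission ODEs in $x$ indexed by the dual variable $\xi$. Since $\varepsilon^+(\omega)$ and $\mu^+(\omega)$ are non-zero, the reduced equation on each half-line has genuine characteristic exponents $\gamma_\pm(\xi,\omega)$; matching the exponentially decaying modes through the transmission conditions yields the surface-plasmon dispersion relation $\mu_0\gamma_+(\xi,\omega)+\mu^+(\omega)\gamma_-(\xi,\omega)=0$. This algebraic equation selects at most isolated values of $\xi$ for a given $\omega$, so the associated surface modes cannot be superposed into an element of $L^2(\bbR^2)$, unless the dispersion relation becomes an identity in $\xi$. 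A direct computation shows that the latter occurs only when $\mu^+/\mu_0=-1$ and $\varepsilon^+\mu^+=\varepsilon_0\mu_0$ simultaneously, which forces $\Oe=\Om$ and $\omega=\pm\Op$, producing the extra eigenvalues announced in \eqref{eq.specpt}. This is the most technical step of the proof, as the sign of the characteristic exponents and the compatibility of decaying modes depend on which of the windows delimited by $\pm\Oe,\pm\Op,\pm\Om$ the frequency $\omega$ lies in, and a careful case analysis is required to rule out further $L^2$ solutions.

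\textbf{The spectrum fills $\bbR$.} Self-adjointness gives $\sigma(\bbA)\subset\bbR$. For the reverse inclusion I construct Weyl singular sequences supported in the vacuum half-space. Given $\omega\in\bbR\setminus\{0\}$, I pick $\bk\in\bbR^2$ with $c_0|\bk|=|\omega|$ and form the corresponding vacuum plane-wave Maxwell solution $\bU_\bk=(E_\bk,\bH_\bk,0,0)^\top$; then $\bU_n:=\chi_n\bU_\bk$, where $\chi_n$ is a smooth cut-off supported in the ball of radius $n$ centred at $(-2n,0)\in\bbR^2_-$. A standard commutator estimate yields $\|(\bbA-\omega\bbI)\bU_n\|_\Hxy=\calO(1)$ while $\|\bU_n\|_\Hxy=\calO(n)$, so $\omega\in\sigma(\bbA)$. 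The value $\omega=0$ is already in $\sigma(\bbA)$ since $\ker\bbA\neq\{0\}$.

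\textbf{Orthogonal decomposition.} Pairing $\bU\in\Hxy$ with $(0,\widetilde\bPi\nabla\phi,0,0)^\top$, $\phi\in W_0^1(\bbR^2_-)$, in the inner product \eqref{eq.innerproduct} gives $\int_{\bbR^2_-}\bH\cdot\overline{\nabla\phi}\,\rmd\bx=0$; after integration by parts (the boundary contribution at $x=0$ is killed by $\phi|_{x=0}=0$), this is the weak formulation of $\div\bH=0$ in $\bbR^2_-$. Pairing with the two families in $\ker(\bbA\mp\Om\bbI)$ yields
\[
\int_{\bbR^2_+}\bigl(\mu_0\bH\mp\rmi\Om^{-1}\dot\bM\bigr)\cdot\overline{\nabla\phi}\,\rmd\bx=0,\qquad\forall\,\phi\in W_0^1(\bbR^2_+),
\]
and taking the sum and the difference of the two sign choices produces $\div\bH=0$ and $\div\dot\bM=0$ in $\bbR^2_+$. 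Together these are precisely the defining conditions of $\Hxydiv$ in \eqref{eq.Hxydiv}, establishing \eqref{eq.Hxydiv2}.
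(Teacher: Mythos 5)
Your proposal is correct and follows essentially the same route as the paper, which defers the detailed proof to \cite{Cas-Haz-Jol-17} but sets up exactly the machinery you use: direct computation of the kernels exploiting the degeneracy $\mu^{+}(\pm\Om)=0$, Fourier reduction in $y$ to the Sturm--Liouville problem and the dispersion relation $\calW_{k,\omega}=0$, with the embedded eigenvalues $\pm\,\Op$ arising precisely when that relation holds on a set of $k$ of positive measure, i.e.\ when the plasmonic curve $\omega_\scE$ is flat, which by \eqref{eq.expressionlambdae} happens if and only if $\Oe=\Om$. The Weyl-sequence construction for $\sigma(\bbA)=\bbR$ and the duality computation identifying the orthogonal complement of the three eigenspaces with $\Hxydiv$ are likewise the standard arguments.
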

\noindent $\Hxydiv$ is   the Hilbert space  of propagative waves, i.e. the orthogonal of standing waves defined via a gradient field in the vacuum or the Drude medium.
We denote by  $\bbP_{\rm div0}$ the orthogonal projection on the subspace $\Hxydiv$ of $\Hxy$, by $\bbP_{\rm p}$ the orthogonal projection on the point subspace $\mathcal{H}_{\rm p}$ of $\bbA$,  that is, the direct sum of the eigenspaces associated with the eigenvalues of $\bbA$, and by $\bbP_{\pm\Op}$ the orthogonal projection on the eigenspace associated with $\pm \, \Op$. Finally we introduce
\begin{equation}\label{eq.def-Pac}
\bbP_{\rm ac} := \bbI - \bbP_{\rm p}
= \left\{\begin{array}{ll}
\bbP_{\rm div0} & \text{if } \Oe \neq \Om, \\[5pt] 
\bbP_{\rm div0}  - \bbP_{-\Op} - \bbP_{+\Op} & \text{if } \Oe = \Om,
\end{array}\right. 
\end{equation}
where the last equality follows from Proposition \ref{prop.spectrumA}. We use the index ``ac'' since we will see that $\bbP_{\rm ac}$ is  the orthogonal projection on the \emph{absolutely continuous} subspace $\mathcal{H}_{ac}$ associated with $\bbA$.\\[6pt]
The limiting absorption principle explores the behavior of the resolvent of $\bbA$, $R(\zeta)$
near the absolute continuous spectrum of $\bbA$. More precisely, we study the existence of the one-sided limits of the \emph{absolutely continuous} part of the resolvent near some $\omega_{\mathrm{s}} \in \bbR,$ that is,
\begin{equation}\label{eq.def-Rac}
R^{\pm}_{\rm ac}(\omega_{\mathrm{s}} ) := \lim_{\eta \searrow 0} R_{\rm ac}(\omega_{\mathrm{s}}  \pm \, \rmi \eta) 
\quad\text{where}\quad
R_{\rm ac}(\zeta) := R(\zeta)\,\bbP_{\rm ac} = \bbP_{\rm ac}\,R(\zeta)
\text{ for }\zeta \in \bbC\setminus\bbR.
\end{equation}
where the presence of the projector $\bbP_{\rm ac}$ is here for avoiding the point spectrum.\\[6pt]
\noindent The resolvent $R(\zeta)$ is an analytic function of $\zeta$ in $\bbC\setminus \bbR$ valued in $B(\Hxy)$. Thus if the limit \eqref{eq.def-Rac} exists for $\omega \in \sigma(\bbA)$, it has to be in a weaker topology than the topology of $B(\Hxy)$ (otherwise, it would imply that $\omega$ belongs to the resolvent set).
This weaker topology is defined  here via the weighted $L^2-$spaces:
\begin{equation*}
\Hps := L^2_{s}(\bbR^2) \times L^2_{s}(\bbR^2)^2 \times L^2_{s}(\bbR^2_+) \times L^2_{s}(\bbR^2_+)^2,
\end{equation*}
where $L^2_{s}(\calO) := \{ u \in L^2_{\rm loc}(\calO) \mid  \eta_{s}\,u \in L^2(\calO)\}$ for $\calO=\bbR^2$ or $\bbR^2_+$, and the weight $\eta_s$ is given by
\begin{equation*}
\eta_s(x,y) := (1 + x^2)^{s/2}\,(1 + y^2)^{s/2}.
\end{equation*}
The space $L^2_{s}(\calO)$ is naturally endowed with the norm
\begin{equation*}
\|u\|_{L^2_{s}(\calO)}^2 := \|\eta_s\,u\|_{L^2(\calO)}^2 = \int_\calO |\eta_s\,u|^2\,\rmd\bx.
\end{equation*}
Similarly, the Hilbert space $\Hps$ is equipped with the norm
$
\|\bU\|_{\Hps}  := \|\eta_s \,\bU\|_{\Hxy}.
$
For $s>0,$ the spaces $\Hps$ and $\Hms$ are dual to each other if $\Hxy$ is identified with its own dual space, which yields the continuous embeddings $\Hps \subset \Hxy \subset \Hms.$ The notation $\langle \cdot\,,\cdot\rangle_{s}$ stands for the duality product between them. This duality product extends the inner product of $\Hxy$ since
\begin{equation}\label{eq.innerproductbis}
\langle \bU \,, \bU' \rangle_{s} = (\bU \,, \bU')_{\Hxy} \quad\mbox{if }\bU \in \Hps \mbox{ and }\bU' \in \Hxy.
\end{equation}
For the topology for the limits \eqref{eq.def-Rac}, we choose the operator norm  $\| \cdot\|_{\Hps,\Hms}$ of $B(\Hps,\Hms)$, the Banach algebra  of bounded linear operators from $\Hps$ to $\Hms$. The proof of the following limiting amplitude principle  theorem  will use the H\"{o}lder regularity of these limits. This theorem aims at providing an optimal result with respect to this topology. 

\begin{Thm}[Limiting absorption principle]\label{thm.limabs}
Let $s>1/2$. For all $\omega_{\mathrm{s}}  \in \bbR \setminus \sigma_{\rm exc},$ the absolutely continuous part of the resolvent 
$R_{\rm ac}(\zeta)$ has one-sided limits 
$R^{\pm}_{\rm ac}(\omega_{\mathrm{s}} ) := \lim_{\eta \searrow 0} R_{\rm ac}(\omega_{\mathrm{s}}  \pm  \, \rmi \eta)$ for the operator norm of $B(\Hps,\Hms)$. Moreover, by denoting
\begin{equation}\label{eq.def-Rac-pm}
R^{\pm}_{\rm ac}(\zeta) := 
R_{\rm ac}(\zeta) \quad \text{if } \zeta \in \bbC^\pm := \{ \zeta\in\bbC \mid \pm \,  \Imag\zeta > 0 \},
\end{equation}
the function  $\zeta \mapsto R^{\pm}_{\rm ac}(\zeta) \in B(\Hps,\Hms)$ is analytic in $\bbC^\pm$ and locally H\"{o}lder continuous in $\overline{\bbC^\pm} \setminus \sigma_{\rm exc}.$ More precisely, for any compact set $K \subset \overline{\bbC^\pm} \setminus \sigma_{\rm exc}$, there exists a set $\Gamma_K \subset (0,1)$ of H\"{o}lder exponents such that for any $\gamma \in \Gamma_K,$ there exists $C_{K,\gamma}>0$ such that
\begin{equation*}
\forall (\zeta,\zeta') \in K \times K, \quad 
\Big\| R^{\pm}_{\rm ac}(\zeta') - R^{\pm}_{\rm ac}(\zeta) \Big\|_{\Hps,\Hms} 
\leq C_{K,\gamma} \ |\zeta'-\zeta|^{\gamma}.
\end{equation*}
The set $\Gamma_K$ is defined as follows: 
\begin{equation}
\Gamma_K := \left\{
\begin{array}{ll}
\big(0,\min(s-1/2,1)\big) & \text{if } K \cap \{ \pm\, \Oe,\pm \, \Oc \} = \varnothing, \\[5pt] 
\big(0,\min(s-1/2,1/2)\big) & \text{if } K \cap \{ \pm \,\Oe,\pm \, \Oc \} \neq \varnothing.
\end{array}\right.
\label{eq.def-Gamma-K}
\end{equation}
\end{Thm}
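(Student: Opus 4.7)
The plan is to deduce the limiting absorption principle from the Hölder regularity of the spectral density $\omega \mapsto \bbM(\omega)$ announced in Section \ref{sec-spec-density}, via a Sokhotski--Plemelj / Privalov type argument on an operator-valued Cauchy integral. From the spectral theorem applied to $\bbA$ and from the existence of an absolutely continuous spectral density, one obtains for every $\bU,\bV \in \Hps$ and $\zeta \in \bbC \setminus \bbR$ the representation
$$
\langle R_{\rm ac}(\zeta)\bU,\bV\rangle_s \;=\; \int_\bbR \frac{\langle \bbM(\omega)\bU, \bV\rangle_s}{\omega-\zeta}\,\rmd\omega,
$$
with $\bbM(\omega)\in B(\Hps,\Hms)$. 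The boundedness of $\bbM(\omega)$ between these weighted spaces for $s>1/2$ is the operator analogue of the one-dimensional Fourier/Sobolev embedding $L^2_s(\bbR) \hookrightarrow C^0$, which explains the Agmon-type threshold $s > 1/2$. Analyticity of $R_{\rm ac}^\pm$ on $\bbC^\pm$ is immediate from holomorphy of the resolvent off the spectrum, so the theorem reduces to showing that this Cauchy-type integral extends continuously with the claimed Hölder exponent from $\bbC^\pm$ to $\overline{\bbC^\pm}\setminus\sigma_{\rm exc}$.

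This reduction is achieved through Privalov's theorem: if $\omega \mapsto \bbM(\omega)$ is locally $\gamma$-Hölder on an open set $\Omega\subset\bbR$ with $\gamma\in(0,1)$, its Cauchy transform admits non-tangential boundary values on $\Omega$ which are themselves locally $\gamma$-Hölder. It thus remains to establish that $\omega \mapsto \bbM(\omega)$ is locally $\gamma$-Hölder in $B(\Hps,\Hms)$ for every $\gamma \in \Gamma_K$. Using the fibered decomposition of Section \ref{sec-reduced-op} (partial Fourier transform in $y$), $\bbA$ is unitarily equivalent to $\int^\oplus \bbA_k\,\rmd k$, each fiber being diagonalized via the generalized Fourier transform built on the explicit generalized eigenfunctions $\wlkj$ of Section \ref{sec-geneigen}. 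Consequently $\bbM(\omega)$ is an integral in $k$ of rank-one projectors on $\wlkj$, weighted by Jacobians $|\rmd\omega_j/\rmd k|^{-1}$ coming from the change of variable from $k$ to $\omega$ along each dispersion branch. At frequencies where every branch $\omega_j(k)$ is smooth, non-degenerate and does not coalesce with another branch, this weight is smooth in $\omega$, and combining the smooth dependence of $\wlkj$ on parameters with the Sobolev embedding $H^s \hookrightarrow C^{0,\gamma}$ for $\gamma<\min(s-1/2,1)$ yields the first case of \eqref{eq.def-Gamma-K}.

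The main obstacle, and the origin of the loss of regularity near $\pm\Oe$ and $\pm\Oc$, is the local behaviour of the dispersion branches at these frequencies: at $\pm\Oe$ either $\eps^+$ or $\mu^+$ vanishes and a mode transitions between propagative and evanescent, while at $\pm\Oc$ two dispersion branches coalesce. The local analysis of the dispersion relations (as already used in the construction of the generalized eigenfunctions) shows that near such a threshold $\omega_*$ one has $\omega - \omega_* \sim c\,(k-k_*)^2$, so the Jacobian $|\rmd\omega_j/\rmd k|^{-1}$ blows up like $|\omega-\omega_*|^{-1/2}$. After factoring out this integrable singularity, $\bbM(\omega)$ behaves locally as $|\omega-\omega_*|^{-1/2}$ times a smooth operator-valued map, and Muskhelishvili's classical endpoint analysis for Cauchy integrals of functions with algebraic endpoint singularities produces exactly $1/2$-Hölder continuity and no better, yielding the second case of \eqref{eq.def-Gamma-K}.

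I expect the main technical work to lie precisely in this last step: tracking the square-root branching uniformly in the transverse spatial variables so that the singularity of $\bbM$ remains controlled in the \emph{operator norm} on $B(\Hps,\Hms)$, not merely weakly. This requires explicit asymptotics of $\wlkj$ as $(k,\omega)\to(k_*,\omega_*)$ and careful interplay between the weight $\eta_s$ and the oscillatory behavior of the generalized eigenfunctions. The other exceptional frequencies $\{0,\pm\Op,\pm\Om\}$ need not be treated: they are eigenvalues of $\bbA$ according to Proposition \ref{prop.spectrumA} and are precisely filtered out by the projector $\bbP_{\rm ac} = \bbI - \bbP_{\rm p}$ appearing in the definition of $R_{\rm ac}$.
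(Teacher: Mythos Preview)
Your overall strategy matches the paper's: represent $R_{\rm ac}(\zeta)$ as a Cauchy integral against the spectral density $\bbM_\omega$, invoke its local H\"older continuity (Theorem~\ref{th.Holder-dens-spec}), and apply Sokhotski--Plemelj/Privalov. The paper implements this by splitting $R_{\rm ac}$ into a regular part (away from $\omega_{\mathrm s}$, holomorphic in $\zeta$) and a singular part supported on a small interval $J\ni\omega_{\mathrm s}$, to which Sokhotski--Plemelj applies directly in $B(\Hps,\Hms)$.

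Two points in your account of the regularity of $\bbM_\omega$ are genuinely wrong, however. First, $\pm\Op$ are \emph{not} eigenvalues in the non-critical case $\Oe\neq\Om$ (Proposition~\ref{prop.spectrumA} gives $\sigma_p(\bbA)=\{0,\pm\Om\}$ there), so they are not filtered out by $\bbP_{\rm ac}$. They belong to $\sigma_{\rm exc}$ because the Jacobian $\JacE(\omega)=|\kE'(\omega)|$ in \eqref{eq.density-non-crit} diverges at $\pm\Op$: the plasmonic curve $\omega_\scE$ has a horizontal asymptote there, so $\bbM_\omega$ itself blows up (see Remark~\ref{rem-asymp}). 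This is the one place where your ``Jacobian blow-up'' picture is correct, and it leads to \emph{exclusion}, not to a reduced H\"older exponent.

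Second, at $\pm\Oe,\pm\Oc$ the spectral density does \emph{not} blow up: these points are not in $\sigma_{\rm exc}$, and Theorem~\ref{th.Holder-dens-spec} asserts that $\bbM_\omega$ is continuous there. What degrades is the H\"older \emph{exponent}, down to at most $1/2$. Your proposed mechanism ($\bbM_\omega\sim|\omega-\omega_*|^{-1/2}\times\text{smooth}$, then Muskhelishvili endpoint analysis) would make $R^\pm_{\rm ac}$ singular at $\pm\Oe,\pm\Oc$, contradicting the theorem. Note also that for the surface zones $\zZ$, $\scZ\neq\EE$, the density \eqref{eq.density-non-crit} is an integral over the $k$-section $\zZ(\omega)$ at fixed $\omega$, with no Jacobian at all; your description of $\bbM_\omega$ as a sum over dispersion branches with weights $|\rmd\omega_j/\rmd k|^{-1}$ only applies to the lineic zone $\zEE$. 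The actual source of the $1/2$-H\"older limitation, as the paper indicates (details in \cite{Cas-Haz-Jol-22}), is the square-root behaviour of $\xi^\pm_{k,\omega}$ near the spectral cuts, which degrades the $\omega$-regularity of the generalized eigenfunctions $\bbW_{k,\omega,j}$ in $\Hms$; Privalov then transfers this exponent to $R^\pm_{\rm ac}$.
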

\noindent Theorem \ref{thm.limabs}  ensures the existence of the  limits $R^{\pm}_{\rm ac}(\omega)$, but the explicit spectral decomposition of these operators is  given in section  \ref{thm.limabs}  (see  \ref{eq.lim-res}). For the physical meaning of these limits,  we refer to the Remark \ref{eq.refabsprincip} and the following corollary. 
\begin{Cor}
Let $s>1/2$,  $\omega_{\mathrm{s}}  \in \bbR \setminus \sigma_{\rm exc}$  and $\bbG\in \Hps$ 
 then $\mathbb{U}_{\omega_s}^{\pm}$ defined via the Theorem \ref{thm.limabs} by 
\begin{equation*}
\mathbb{U}_{\omega_s}^{\pm}=  - \,  \rmi \, R_{\rm ac}^{\pm}(\omega_{\mathrm{s}} ) \ \mathbb{G}=   - \,  \rmi \lim_{\eta \searrow 0} R_{\rm ac}^{\pm}(\omega_{\mathrm{s}}  \pm \rmi \eta) \ \bbG
\end{equation*}
satisfies the time-harmonic Maxwell's equations   
\begin{equation}\label{eq.Harmonicabs}
\rmi \big(\bbA \, \mathbb{U}_{\omega_s}^{\pm}  -\omega_{\mathrm{s}} \mathbb{U}_{\omega_s}^{\pm}  \big)  =\bbP_{\mathrm{ac}} \bbG  \qquad  \mbox{ in } \Hms
\end{equation} which coincides in particular with   \eqref{TE-harm} for physical source terms of the form $\bbG=(\bbJ_{\mathrm{s}}, 0, 0, 0)^{\top}$ in the range of the orthogonal projector $\bbP_{\mathrm{ac}}$ (where $\bbP_{\mathrm{ac}}$ is given by \eqref{eq.def-Pac}).
\end{Cor}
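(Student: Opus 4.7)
The plan is to derive the time-harmonic equation by passing to the limit $\eta \searrow 0$ in the defining resolvent identity, and then to identify components with those of \eqref{TE-harm}.

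First, for fixed $\eta > 0$, the point $\zeta = \omega_{\mathrm{s}} + \rmi\eta$ lies in the resolvent set $\bbC^+$, and I set $\mathbb{U}_{\omega_{\mathrm{s}},\eta}^+ := -\rmi R_{\rm ac}(\omega_{\mathrm{s}} + \rmi\eta)\,\bbG \in \rmD(\bbA)$. The identity $(\bbA - \zeta\bbI)R(\zeta) = \bbI$ on $\Hxy$ together with the commutation of the spectral projector $\bbP_{\rm ac}$ with $R(\zeta)$ gives, after multiplication by $\rmi$,
\[
\rmi(\bbA - \omega_{\mathrm{s}}\bbI)\,\mathbb{U}_{\omega_{\mathrm{s}},\eta}^+ \;=\; \bbP_{\rm ac}\bbG \;-\; \eta\,\mathbb{U}_{\omega_{\mathrm{s}},\eta}^+,
\]
an identity in $\Hxy$, and a fortiori in $\Hms$.

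Next I pass to the limit $\eta \searrow 0$. Theorem \ref{thm.limabs} applied to $\bbG \in \Hps$ yields $\mathbb{U}_{\omega_{\mathrm{s}},\eta}^+ \to \mathbb{U}_{\omega_{\mathrm{s}}}^+$ in $\Hms$, so the right-hand side converges to $\bbP_{\rm ac}\bbG$ in $\Hms$ (the $\eta$-term vanishes). For the left-hand side I use that $\bbA$ is a matrix differential operator of order one with piecewise-constant (hence $L^\infty$) coefficients; it is therefore continuous on $\mathcal{D}'(\bbR^2)^5$, and convergence in $\Hms$ forces convergence in $\mathcal{D}'$, so that
\[
\rmi\bigl(\bbA\,\mathbb{U}_{\omega_{\mathrm{s}}}^+ - \omega_{\mathrm{s}}\,\mathbb{U}_{\omega_{\mathrm{s}}}^+\bigr) \;=\; \bbP_{\rm ac}\bbG \quad \text{in } \mathcal{D}'.
\]
Since both $\bbP_{\rm ac}\bbG$ and $\omega_{\mathrm{s}}\mathbb{U}_{\omega_{\mathrm{s}}}^+$ lie in $\Hms$, the distribution $\bbA\,\mathbb{U}_{\omega_{\mathrm{s}}}^+$ is in fact in $\Hms$ and the identity \eqref{eq.Harmonicabs} holds there. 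The argument for $\mathbb{U}_{\omega_{\mathrm{s}}}^-$ is identical with $\zeta \in \bbC^-$.

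Finally, for a physical source $\bbG = (\bbJ_{\mathrm{s}}, 0, 0, 0)^\top \in \mathrm{Range}(\bbP_{\rm ac})$, one has $\bbP_{\rm ac}\bbG = \bbG$. Writing $\mathbb{U}_{\omega_{\mathrm{s}}}^\pm = (\mathbb{E}^\pm, \mathbb{H}^\pm, \dot\bbP^\pm, \dot\bbM^\pm)^\top$ and unfolding the block definition of $\bbA$, the four component equations of $\rmi(\bbA - \omega_{\mathrm{s}}\bbI)\mathbb{U}_{\omega_{\mathrm{s}}}^\pm = \bbG$ become, after multiplication by the appropriate constitutive factors $\eps_0$, $\mu_0$, $\eps_0\Oe^2$, $\mu_0\Om^2$, exactly the four lines of \eqref{TE-harm}. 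The only delicate point is the passage to the distributional limit above, since $\mathbb{U}_{\omega_{\mathrm{s}}}^+$ generally fails to lie in $\rmD(\bbA)$; membership of $\bbA\,\mathbb{U}_{\omega_{\mathrm{s}}}^+$ in $\Hms$ is then recovered a posteriori from the equation itself.
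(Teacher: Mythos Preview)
Your argument is correct and follows the same route as the paper: start from the resolvent identity $(\bbA-\zeta)R_{\rm ac}(\zeta)=\bbP_{\rm ac}$, apply it to $\bbG$, and pass to the limit $\eta\searrow 0$ using Theorem~\ref{thm.limabs}. The only cosmetic difference is that the paper packages the limit passage via the graph space $\rmD(\bbA)_{-s}:=\{\bU\in\Hms\mid \bbA\bU\in\Hms\}$ and a Cauchy-sequence argument, whereas you invoke distributional continuity of $\bbA$ directly; these are two phrasings of the same closedness property. One small imprecision: $\bbA$ does not act on $\mathcal{D}'(\bbR^2)^5$ since the last three components live on $\bbR^2_+$ and $\bbA$ involves extension/restriction operators as well as derivatives, but each block is indeed continuous from the relevant $L^2_{-s}$ space into distributions, which is all you need.
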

\begin{proof}
We described the sketch of the proof of this corollary. \\[4pt] Let $\zeta=\omega_{\mathrm{s}}  \pm \rmi \eta$ with $\eta>0$. By  the definition of the resolvent $R_{\rm ac}(\zeta) $, one has $(\bbA-\zeta) R_{\rm ac}^{\pm}(\zeta) =  \bbP_{\rm ac}$  and therefore, it leads to the following identity
\begin{equation}\label{eq.reslvrelation}
\bbA R_{\rm ac}^{\pm}(\zeta) -  \zeta R_{\rm ac}^{\pm}(\zeta)= \bbP_{\rm ac}.
\end{equation}
The idea is to use Theorem \ref{thm.limabs} to pass the limit $\eta\searrow 0$ in the relation \eqref{eq.reslvrelation}.
To this aim, one defines, for $s>1/2$,  an Hilbert  space  $\rmD(\bbA)_{-s}$  by $\rmD(\bbA)_{-s}=\{\bU \in \Hms \mid \bbA  \, \bU \in \Hms \} $ (where $\bbA\, \bU$ is defined in the sense of distribution)  endowed with the norm
$$
  \|\bU\|_{\rmD(\bbA)_{-s}}=(\|\bU\|^2_{\Hms}+\| \bbA\bU\|^2_{\Hms})^{\frac{1}{2}}.
$$ 
One then  proves using Cauchy sequences  and the convergence of $R_{\rm ac}^{\pm}(\zeta)$ to $R_{\rm ac}^{\pm}(\omega_{\mathrm{s}} )$  in $B(\Hps,\Hms)$ when $ \eta\searrow 0$  that one can take the limit $ \eta\searrow 0$ in \eqref{eq.reslvrelation} and get  $\bbA R_{\rm ac}^{\pm}(\omega_{\mathrm{s}})-  \omega_{\mathrm{s}} R_{\rm ac}^{\pm}(\omega_{\mathrm{s}} )= \bbP_{\rm ac}$.
 In other words, $ R_{\rm ac}(\zeta)$ converges to $R_{\rm ac}^{\pm}(\omega_{\mathrm{s}} )$ in $B(\Hps, \rmD(\bbA)_{-s})$.  This  clearly implies \eqref{eq.Harmonicabs}.
\end{proof}
\noindent Thus, $\bbU_{\omega_{\mathrm{s}}}^+$ and  $\bbU_{\omega_{\mathrm{s}}}^-$  represent both time-harmonic solutions of our Maxwell equations \eqref{TE}. Their difference can be explained  via the \emph{limiting amplitude principle} which states that $\bbU_{\omega_{\mathrm{s}}}^+$ is \emph{outgoing}, in the sense of \eqref{eq.LAmP-imprecis}. Similarly, one shows that $\bbU_{\omega_{\mathrm{s}}}^-$ is \emph{incoming}, by considering the behavior as $t \to -\infty$ of anti-causal solutions.
\subsubsection{Limiting amplitude principle and interface resonance phenomenon}\label{sec-ampl}
We state now our  result on the long time behavior of the solution $\bU(t)$ of the evolution equation \eqref{eq.schro} when $\bU(0)=0$ and  $\bG(t) = \mathbb{G}\, \rme^{-\rmi\omega\, t}$ is  a time-harmonic excitation starting at $t=0$. 
\begin{Thm}\label{th.ampllim}
Let $s > 1/2$ and $\omega_{\mathrm{s}} \in \bbR \setminus \sigma_{\rm exc}$ (see \eqref{eq.def-sigma-exc}) and $\bbG \in \Hps \cap \Hxydiv$.\\[6pt]
\noindent On the one hand,  in the non-critical case $\Oe\neq \Om$, the limiting amplitude principle holds true, i.e. the solution $\bU(t)$ to \eqref{eq.schro} with $\bG(t) = \mathbb{G}\, \rme^{-\rmi\, \omega_{\mathrm{s}} \, t}$ for $t\geq 0$ and $\bU(0)=0$ satisfies
\begin{equation}
\lim_{t\to+\infty}\Big\| \bU(t) -\, \bbU_{\omega_{\mathrm{s}} }^{+} \, \rme^{-\rmi\omega_{\mathrm{s}}  t} \Big\|_{\Hms} = 0,
\label{eq.lim-ampl}
\end{equation}
where $\bbU_{\omega_{\mathrm{s}} }^{+} := -  \rmi \,  R^{+}_{\rm ac}(\omega_{\mathrm{s}} ) \, \mathbb{G}\in \Hms$ is given by the limiting absorption principle. \\[6pt]
\noindent  On the other hand, in the critical case $\Oe = \Om,$ one has
\begin{equation}
\lim_{t\to+\infty}\Big\| \bU(t) 
- \Big(\,\bbU_{\omega_{\mathrm{s}} }^{+} \, \rme^{-\rmi \omega_{\mathrm{s}}  t} + \sum_{\pm} \bbP_{\pm\Op} \mathbb{G}\ \phi_{\omega_{\mathrm{s}} ,t}(\pm \, \Op)\Big) \Big\|_{\Hms} = 0,
\label{eq.lim-ampl-reson}
\end{equation}
where $\bbP_{\pm\Op}$ is the orthogonal projection on the infinite dimensional eigenspace associated with $\pm \, \Op$, $\phi_{\omega,t}$ is defined in \eqref{eq.phiduhamel2} and $\bbU_{\omega_{\mathrm{s}} }^{+} :=   -  \rmi  \,R^{+}_{\rm ac}(\omega_{\mathrm{s}} ) \, \mathbb{G}$ as above.
\end{Thm}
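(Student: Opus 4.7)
\noindent The plan is to combine the functional-calculus representation $\bU(t) = \phi_{\omega_s,t}(\bbA)\,\mathbb{G}$ with the spectral decomposition $\mathbb{G} = \bbP_{\rm ac}\mathbb{G} + \bbP_{\rm p}\mathbb{G}$. Because $\mathbb{G}\in\Hxydiv$, the characterization \eqref{eq.Hxydiv2} annihilates the components on $\ker\bbA$ and $\ker(\bbA\mp\Om)$, so that $\bbP_{\rm p}\mathbb{G} = 0$ in the non-critical case and $\bbP_{\rm p}\mathbb{G} = \bbP_{+\Op}\mathbb{G}+\bbP_{-\Op}\mathbb{G}$ in the critical case (cf.\ \eqref{eq.def-Pac}). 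Functional calculus then gives $\phi_{\omega_s,t}(\bbA)\bbP_{\pm\Op}\mathbb{G} = \phi_{\omega_s,t}(\pm\Op)\bbP_{\pm\Op}\mathbb{G}$, producing exactly the resonance sum appearing in \eqref{eq.lim-ampl-reson} (and absent in \eqref{eq.lim-ampl}).

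\noindent The essential task is then to show that $\bU_{\rm ac}(t) := \phi_{\omega_s,t}(\bbA)\bbP_{\rm ac}\mathbb{G}$ converges to $\bbU_{\omega_s}^+\,\rme^{-\rmi\omega_s t}$ in $\Hms$, where $\bbU_{\omega_s}^+ = -\rmi R^+_{\rm ac}(\omega_s)\mathbb{G}$. A direct computation rewrites
\begin{equation*}
\bU_{\rm ac}(t)\,\rme^{\rmi\omega_s t} = \int_0^{t}\rme^{-\rmi(\bbA-\omega_s)\tau}\bbP_{\rm ac}\mathbb{G}\,\rmd\tau,
\end{equation*}
while the Laplace-type identity $-\rmi R(\omega_s+\rmi\eta)\bbP_{\rm ac}\mathbb{G} = \int_0^{+\infty}\rme^{-\rmi(\bbA-\omega_s)\tau}\rme^{-\eta\tau}\bbP_{\rm ac}\mathbb{G}\,\rmd\tau$ (valid in $\Hxy$ for $\eta>0$), passed to the limit $\eta\to 0^+$ in $\Hms$ via Theorem \ref{thm.limabs}, identifies $-\rmi R^+_{\rm ac}(\omega_s)\mathbb{G}$ as the $\Hms$-value of the improper integral over $[0,+\infty)$. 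Hence the claim reduces to showing that the tail
\begin{equation*}
\mathbf{r}(t) := -\rmi R^+_{\rm ac}(\omega_s)\mathbb{G} - \int_0^{t}\rme^{-\rmi(\bbA-\omega_s)\tau}\bbP_{\rm ac}\mathbb{G}\,\rmd\tau
\end{equation*}
tends to $0$ in $\Hms$ as $t\to+\infty$.

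\noindent To control $\mathbf{r}(t)$, I would invoke Stone's formula in its $\Hms$-valued form, $\rmd\bbM_{\rm ac}(\omega)\mathbb{G} = (2\pi\rmi)^{-1}[R^+_{\rm ac}(\omega)-R^-_{\rm ac}(\omega)]\mathbb{G}\,\rmd\omega$, legitimate thanks to the H\"older continuity of Theorem \ref{thm.limabs}, and set $F(\omega) := [R^+_{\rm ac}(\omega)-R^-_{\rm ac}(\omega)]\mathbb{G}\in\Hms$. Subtracting and adding $F(\omega_s)$, and observing that the constant-part contribution $F(\omega_s)\int_\bbR\rme^{-\rmi(\omega-\omega_s)t}/(\omega-\omega_s-\rmi 0^+)\,\rmd\omega$ vanishes for $t>0$ by contour closure in the lower $\omega$-half-plane (since the pole lies above the contour and $\rme^{-\rmi(\omega-\omega_s)t}$ decays below), yields
\begin{equation*}
\mathbf{r}(t) = \frac{-1}{2\pi}\int_\bbR \frac{\rme^{-\rmi(\omega-\omega_s)t}\,[F(\omega)-F(\omega_s)]}{\omega-\omega_s}\,\rmd\omega.
\end{equation*}
The local H\"older regularity of $F$ at $\omega_s$, with exponent $\gamma>0$ (since $s>1/2$), makes the integrand absolutely integrable in $\omega$ as an $\Hms$-valued function near $\omega_s$, while high-frequency decay of $R^\pm_{\rm ac}(\omega)$, inherited from the spectral-density construction in sections \ref{sec-construction-sepc-density}--\ref{sec-spec-density}, provides integrability at infinity. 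The $\Hms$-valued Riemann--Lebesgue lemma then concludes $\|\mathbf{r}(t)\|_{\Hms}\to 0$.

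\noindent The main obstacle lies in verifying uniformly the quantitative ingredients when $\omega_s$ sits close to an exceptional cross point in $\{\pm\Oe,\pm\Oc\}$, where Theorem \ref{thm.limabs} only provides a reduced H\"older exponent $\min(s-1/2,1/2)$: one must check that this restricted regularity, together with the required quantitative high-frequency decay of $R^\pm_{\rm ac}(\omega)$ (implicit in the spectral-density construction but not stated in Theorem \ref{thm.limabs}), is enough to justify the subtraction trick, the Riemann--Lebesgue application, and the interchange of limits used in identifying the candidate limit $-\rmi R^+_{\rm ac}(\omega_s)\mathbb{G}$.
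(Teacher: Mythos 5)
Your overall architecture matches the paper's: you split $\mathbb{G}$ with $\bbP_{\rm ac}$ and $\bbP_{\pm\Op}$ exactly as in \eqref{eq.def-Pac} (the eigenspace contribution giving the resonance sum via $\phi_{\omega_s,t}(\bbA)\bbP_{\pm\Op}\mathbb{G}=\phi_{\omega_s,t}(\pm\Op)\bbP_{\pm\Op}\mathbb{G}$), and near $\omega_s$ your subtraction of $F(\omega_s)$ combined with the H\"older continuity of the jump $F(\omega)=2\rmi\pi\,\bbM_\omega\mathbb{G}$ and a vector-valued Riemann--Lebesgue argument is, up to bookkeeping, identical to the paper's treatment of $\bV^{\rm sin}(t)$ in Lemma \ref{lem-lim-ampl} (explicit scalar principal value $v^{\rm sin}(t)\to-\rmi\pi$ plus Riemann--Lebesgue on $\widetilde{\bV}_\omega=(\bbM_\omega-\bbM_{\omega_s})\mathbb{G}/(\omega-\omega_s)$). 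The Laplace/Abel identification of $-\rmi R^+_{\rm ac}(\omega_s)\mathbb{G}$ is also harmless, since you ultimately prove $\mathbf{r}(t)\to0$ directly rather than relying on a Tauberian step.

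The genuine gap is in the part of the integral away from $\omega_s$. Your final Riemann--Lebesgue application requires $\omega\mapsto[F(\omega)-F(\omega_s)]/(\omega-\omega_s)$ to be Bochner integrable on all of $\bbR$ with values in $\Hms$, hence that $\|\bbM_\omega\mathbb{G}\|_{\Hms}/|\omega|$ be integrable at infinity and across the remaining exceptional points. Neither is available: no quantitative decay of $\|\bbM_\omega\|_{\Hps,\Hms}$ as $|\omega|\to\infty$ is stated or proved anywhere (you flag this yourself as ``implicit in the construction,'' but it is not), and worse, the integrand must also be integrated across $\sigma_{\rm exc}$ --- in particular across $\pm\Op$ in the non-critical case, where the Jacobian $\JacE$ in \eqref{eq.density-non-crit} blows up because $\omega_\scE$ has a horizontal asymptote (Remark \ref{rem-asymp}), so $\bbM_\omega$ is not even locally bounded there. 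The only global integrability the paper actually establishes is for the \emph{scalar} quantity $\omega\mapsto\langle\bbM_\omega\mathbb{G},\mathbb{G}\rangle_s\in L^1(\bbR)$ (Corollary \ref{cor.specmes}), which controls $\|\bbE(S)\bbP_{\rm ac}\mathbb{G}\|_{\Hxy}^2$ but not $\|\bbM_\omega\mathbb{G}\|_{\Hms}$. The paper closes this exactly where your argument is open: it isolates a compact interval $J$ around $\omega_s$, treats the complement as $f^{\rm reg}_t(\bbA)\bbP_{\rm ac}\mathbb{G}$ with $f^{\rm reg}_t$ bounded by $\rho^{-1}$, approximates uniformly in $t$ by $\bbE(S_n)$ with $S_n$ compact and avoiding $\sigma_{\rm exc}$ (the error being controlled by $\int\mathds{1}_{S\setminus S_n}\langle\bbM_\omega\mathbb{G},\mathbb{G}\rangle_s\,\rmd\omega\to0$), and only then applies Riemann--Lebesgue on each $S_n$. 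You should replace your global-integrability step by this uniform-approximation argument; as stated, your proof of $\|\mathbf{r}(t)\|_{\Hms}\to0$ does not go through. (Your closing worry about the reduced H\"older exponent near $\pm\Oe,\pm\Oc$ is, by contrast, not a problem: any exponent $\gamma>0$ suffices for the local integrability of $|\omega-\omega_s|^{-1+\gamma}$.)
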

\noindent Theorem  \ref{th.ampllim} shows that in the non-critical case $\Oe \neq \Om$, the limiting amplitude principle holds true for any $\bbG\in \Hxydiv \cap \Hps$ (since $\Hxydiv$ is here  the range of $\bbP_{\rm ac}$, see \eqref{eq.def-Pac}). The assumption $\bbG\in \Hxydiv$ forces the solution to remain orthogonal to the eigenspaces associated with the point spectrum $\{0,\pm \, \Om\}$. It is a natural physical assumption since we analyse the  propagative part of the solution which is orthogonal  $\ker(\bbA )$  and $\ker(\bbA \mp \Om )$ (see Proposition \ref{prop.spectrumA}). $0,$ $\pm \, \Om$ and $\pm \, \Op$ are excluded from Theorem  \ref{th.ampllim} . However, the limiting amplitude principle holds also for the frequencies $\omega_{\mathrm{s}}=\pm\Op$ but  they require a special treatment (see Remark \ref{rem-asymp}). \\[6pt]
\noindent In the critical case $\Oe=\Om$, the validity of the limiting amplitude principle depends on the spectral content of the source $\bbG \in \Hxydiv \cap \Hps$ which can be decomposed, using \eqref{eq.def-Pac}, as
\begin{equation*}
\bbG = \bbP_{\rm ac}\bbG + \bbP_{-\Op}\bbG + \bbP_{+\Op}\bbG.
\end{equation*}
If $\bbP_{-\Op}\bbG = \bbP_{+\Op}\bbG= 0$ (i.e when $\bbG$ belongs to the range of $\bbP_{\rm ac}$), the principle holds true for any $\omega_{\mathrm{s}} \in \bbR \setminus \{0,\pm \, \Om\},$ which includes in particular the frequencies $\omega = \pm \, \Op.$ But if $\bbP_{-\Op}\bbG \neq 0$ or $\bbP_{+\Op}\bbG \neq 0$, the behavior of $\bU(t)$ is no longer time-harmonic at the frequency $\omega$. Two situations may occur. Firstly, if $\omega \in \bbR \setminus \{0,\pm \, \Op,\pm \, \Om\},$ the solution $\bU(t)$ remains bounded in time but oscillates at the two frequencies $\omega_{\mathrm{s}}$ and $\Op$ (see the expression \eqref{eq.phiduhamel2} of $\phi_{\omega,t}(\pm \Op)$ for $\omega\neq \pm \, \Op$): it is a beat phenomenon. Secondly, if $\omega_{\mathrm{s}} = \pm \, \Op,$ there is no stationary regime at all, since $\bU$ blows up linearly in time (since by  \eqref{eq.phiduhamel2}: $\phi_{\omega,t}(\pm \, \Op)=t\, \rme^{\mp \rmi \Op t}$ for $\omega=\pm \Op$). 
This  corresponds  to a resonance phenomena. Such a phenomenon is classical for vibration problems in bounded domains but quite unusual for unbounded domains. Here, the fields $\bbP_{\pm \,\Op}\bbG$ are trapped waves which belongs to $\Hxy$ and are defined as (continuous) superpositions of functions  which are exponentially decaying with the distance to the interface (\textit{i.e.}, \emph{plasmonic waves}): they give birth to an \emph{interface resonance phenomenon} (which  has been enlighten in the physical literature in \cite{Gra-12} and whose existence has been first proved  mathematically in \cite{Cas-14,Cas-Haz-Jol-17,Cas-Haz-Jol-22}). The linear behavior in time is characteristic to a resonance due to an eigenvalue for a self-adjoint operator. Here, the eigenvalues $\pm\, \Op$ are of infinite multiplicities and embedded in the continuous spectrum of $\bbA$. This  interesting resonance phenomenon  does not occur  in a  stratified media made of standard dielectric materials (see \cite{Wed-91}) since such non-zero eigenvalue of the Maxwell operator does not exist.
This conclusion confirms the strong ill-posedness of the time-harmonic transmission problems described in \cite{Bon-14,Bon-14(2),Ngu-16}  when the relative permittivity   $\eps^{+}(\omega) / \eps_0$ and  permeability $\mu^{+}(\omega) / \mu_0$ are simultaneously equal to $-1$. 

\subsection{Spectral density of the propagative operator}\label{spec-dens}

\subsubsection{The  notion of spectral density}\label{spec-dens-notion}
A scalar Borel measure is  absolutely continuous  if  it is ``proportional'' to the Lebesgue's measure via a $L^1_{\mathrm{loc}}$ density function. 
We extend  this notion  for Bochner integrals of  operator-valued function in $ B(\Hps, \Hms)$. Roughly speaking, the spectral density is here  an operator-valued function  in $ B(\Hps, \Hms)$ associated with the functional calculus of  the absolute continuous part of  $\bbA$. It allows to rewrite operators $f(\bbA)\,\bbP_{\rm ac}$ (for any bounded measurable  $f:\bbR\to \bbC$)  as
\begin{equation}\label{eq.specdensity}
f(\bbA)\,\bbP_{\rm ac} = \int_{\bbR} f(\omega) \, \bbM_{\omega} \,\rmd \omega, \ \mbox{ where }\omega \mapsto \bbM_{\omega}\in B(\Hps, \Hms) \mbox{ is locally integrable}.
\end{equation}
\noindent  In particular, applying formula \eqref{eq.Bochner}  to $f=\mathds{1}_S$ (the indicator function of a Borel set $S$), one can show that the spectral measure $\bbE$ (see \cite[\S 2.3]{Cas-Haz-Jol-17} for a brief reminder about this notion) of  $\bbA$ is \emph{absolutely continuous} in the range of $\bbP_{ac}$. Namely, the spectral measure of $\bbA$ is ``proportional'' to the Lebesgue measure in the orthogonal complement of the point subspace $\mathcal{H}_{pt}$.  This explains the terminology \emph{spectral density} for $\bbM_{\omega},$ and the notation $\bbP_{\rm ac}.$ This is the object of Corollary \ref{cor.specmes} which connect the spectral density to the spectral measure of  $\bbA$ and  shows in particular  that outside the eigenvalues of $\bbA,$ the spectrum of $\bbA$ is absolutely continuous.\\[6pt]
\noindent As we will see, the existence and the regularity  of the spectral density are essential  for the proofs of Theorem \ref{thm.limabs} and \ref{th.ampllim}. Indeed, on the one hand, for the limiting absorption principle at a given frequency $\omega \in \bbR,$ we have to consider  functions $r_{\omega\pm\, \rmi\eta}: \bbR \to \bbC$ for $\eta > 0$ defined by
\begin{equation}
r_{\omega_{\mathrm{s}}\pm \, \rmi\eta}(\omega) := \frac{1}{\omega- (\omega_{\mathrm{s}}   \pm \,  \rmi\eta)}.
\label{eq.def-r-eta}
\end{equation}
On the other hand, for the limiting amplitude principle, we will examine the behavior of $\phi_{\omega,t}(\bbA)$ as $t \to +\infty,$ where $\phi_{\omega,t}(\cdot)$ is defined in \eqref{eq.phiduhamel2}. Both limiting processes are intimately connected. We focus here on the former to explain the motivation of the notion of spectral density. Using \eqref{eq.specdensity} for the function $f=r_{\omega_{\mathrm{s}}  \pm \, \rmi\eta}$ defined in \eqref{eq.def-r-eta}, the absolutely continuous part of the resolvent of $\bbA$ (see \eqref{eq.def-Rac}) appears as a Cauchy integral
\begin{equation*}
R_{\rm ac}(\omega_{\mathrm{s}} \pm \rmi\eta) := R(\omega_{\mathrm{s}} \pm \, \rmi\eta)\, \bbP_{\rm ac} = 
\int_{\bbR} \frac{\bbM_{\omega}}{\omega - (\omega_{\mathrm{s}} \pm\,  \rmi\eta)} \,\rmd \omega,
\end{equation*}
whose limits as $\eta \searrow 0$ will be given by a suitable version of the well-known Sokhotski--Plemelj formula \cite{Hen-86}, provided that $\omega \mapsto \bbM_{\omega}$ is locally H\"{o}lder continuous. This is actually the  objectives of the next  section which states the existence of a spectral density (for which formula \eqref{eq.specdensity} holds)  and gives the local H\"{o}lder regualrity  of $\bbM_{\omega}.$

\subsubsection{Main results on the spectral density}\label{sec-main-results-tools}
We state now the two main theorems and a corollary  (proved in \cite{Cas-Haz-Jol-22})   on the spectral density.

\begin{Thm}\label{Thm.specdensity}
	Let $s > 1/2.$ There exists a   spectral density $\omega \in \bbR \setminus \sigma_{\rm exc} \mapsto \bbM_{\omega}\in B(\Hms, \Hps)$,  locally integrable on $ \bbR \setminus \sigma_{\rm exc}$, such that
for any bounded function $f: \bbR \to \bbC$ with a compact support which does not intersect $\sigma_{\rm exc}$ (see \eqref{eq.def-sigma-exc}), the operator $f(\bbA)\,\bbP_{\rm ac}$ is given by
	\begin{equation}\label{eq.Bochner}
	f(\bbA)\,\bbP_{\rm ac} = \int_{\bbR} f(\omega) \, \bbM_{\omega} \,\rmd \omega .
	\end{equation}
\begin{Rem}
We make here some comments on Theorem \ref{Thm.specdensity}.
First, we  point out that the existence of such spectral density at an operator level  can not be deduced in a general framework directly  from the spectral theorem.
Then, one shows easily that formula \eqref{eq.Bochner} defines the spectral density uniquely  almost everywhere on $\bbR$ (in the sense of the Lebesgue's measure). We construct  the function $\omega\mapsto \bbM_{\omega}$  in section \ref{sec-spec-density} and gives its explicit formula  for the non-critical case  $\Oe \neq \Om$ in \eqref{eq.density-non-crit}  and for the critical case in  \eqref{eq.density-crit} if $\Oe = \Om$.  Finally, we  give  an expression of $f(\bbA)\,\bbP_{\rm ac}$ when $f:\bbR \to \bbC$ is a  bounded function whose support $S$  is not compact and/or intersects $\sigma_{\rm exc},$   via a  limiting process using \eqref{eq.Bochner}, see section \ref{sec-spec-density},  formula \eqref{eq.calc-fonct-ac-lim}.
\end{Rem}	
\end{Thm}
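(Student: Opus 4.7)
The plan is to construct $\bbM_\omega$ explicitly via a diagonalizing generalized Fourier transform for $\bbA$, following the classical scheme for stratified media adapted to the dispersive Drude setting. First, I would exploit the translational invariance in $y$ (the interface direction): applying a partial Fourier transform $\mathcal{F}_y$ decomposes $\bbA$ as a direct integral $\bbA = \int^\oplus_\bbR \bbA_k \,\rmd k$, where each fiber $\bbA_k$ is a self-adjoint operator on a Hilbert space of functions of $x$, with an ODE structure coupling the transverse fields $(E,\bH)$ to the polarization and magnetization currents $(\dot P, \dot\bM)$ on $x>0$. This is precisely the content of Section \ref{sec-reduced-op}.

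Second, for each $k$ and $\omega \notin \sigma_{\rm exc}$, I would build a complete family $\{\wlkj\}_{j\in J(k,\omega)}$ of generalized eigenfunctions of $\bbA_k$ at eigenvalue $\omega$: bounded, non-$L^2$ solutions in $x$, indexed by a discrete label $j$ enumerating the propagation channels (incoming plane waves from $x=-\infty$, from $x=+\infty$ when such waves exist in the Drude medium, and, when applicable, surface/plasmonic modes trapped at the interface). The transmission conditions \eqref{eq.transmission} together with the two scalar dispersion relations in each half-space determine each $\wlkj$ up to a normalization constant. The number and type of these channels depends on the position of $|\omega|$ relative to the thresholds $\Oe,\Om,\Oc,\Op$, which is exactly why $\sigma_{\rm exc}$ must be excluded. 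This is Section \ref{sec-geneigen}.

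Third, using the $\wlkj$, I would define a generalized Fourier transform $\mathcal{F}$ and prove it is a partial isometry from $\Hxy$ onto a direct integral of finite-dimensional fibers parameterized by $\omega$, with $\mathcal{F}^*\mathcal{F} = \bbP_{\rm ac}$ and $\mathcal{F}\bbA\bbP_{\rm ac}\mathcal{F}^* = M_\omega$, multiplication by $\omega$ (Section \ref{sec-diag-th}). The spectral density is then read off as
\begin{equation*}
\langle \bbM_\omega \,\bU,\bU'\rangle_{s} \;=\; \sum_{j} \int_{K_j(\omega)} \langle \bU,\wlkj\rangle_{s}\, \overline{\langle \bU',\wlkj\rangle_{s}} \,\rho_j(k,\omega)\,\rmd k,
\end{equation*}
where $\rho_j(k,\omega)$ is the Jacobian factor arising from the change of variables from $k$ to $\omega$ along the $j$-th dispersion branch, and $K_j(\omega)$ is the corresponding range of $k$. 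Formula \eqref{eq.Bochner} then follows by writing $f(\bbA)\,\bbP_{\rm ac} = \mathcal{F}^*\, f(\omega)\,\mathcal{F}$ and applying Fubini--Tonelli, legitimate once $f$ has compact support disjoint from $\sigma_{\rm exc}$.

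The main obstacle I anticipate is showing that each $\wlkj$ defines a continuous linear form on $\Hps$, uniformly for $(k,\omega)$ in compact subsets of $\bbR \setminus \sigma_{\rm exc}$, so that $\bbM_\omega \in B(\Hps,\Hms)$ with a locally integrable operator norm. The generalized eigenfunctions behave like oscillating plane waves $\rme^{\rmi(kx \pm \alpha(k,\omega) x)}\rme^{\rmi k y}$ (or exponentially decaying in $x$ for surface modes), hence they are not in $\Hxy$; the weight $\eta_s$ with $s>1/2$ is precisely what is needed to turn $\langle \cdot,\wlkj\rangle_{s}$ into a bounded form on $\Hps$, via a standard trace-type argument in both $x$ and $y$. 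The delicate point is uniform control near $\sigma_{\rm exc}$: at $\pm\Oe,\pm\Om$ the Drude coefficients $\eps^{+}$ or $\mu^{+}$ vanish or blow up, at $\pm\Oc$ a bulk propagation channel in the Drude medium opens (so the Jacobian $\rho_j$ develops an algebraic singularity), and, in the critical case, the surface-plasmon branch producing the embedded eigenvalues $\pm\Op$ collapses, accounting for the exclusion of $\pm\Op$ in \eqref{eq.def-sigma-exc}. Away from $\sigma_{\rm exc}$, the Wronskians controlling normalization constants stay bounded away from zero and the dispersion curves are locally analytic, which yields local integrability of $\omega\mapsto\bbM_\omega$ and, with a bit more work on the explicit formulas of Sections \ref{sec-spec-density} and \ref{sec-geneigen}, the local H\"older continuity needed later to prove Theorem \ref{thm.limabs}.
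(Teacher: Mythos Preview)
Your proposal follows essentially the same route as the paper's Section~\ref{sec-construction-sepc-density}: Fourier reduction in $y$, construction of generalized eigenfunctions for each $\bbA_k$, diagonalization via the generalized Fourier transform $\bbF$ (Theorem~\ref{th.diagA}), and extraction of $\bbM_\omega$ from $f(\bbA)\bbP_{\rm div0} = \bbF^* f(\omega)\bbF$ by Fubini. Two small refinements worth noting: the Jacobian factor arises only for the one-dimensional plasmonic zone $\zEE$ (the surface zones $\zDD,\zDE,\zDI,\zEI$ are genuinely two-dimensional in $(k,\omega)$, so slicing at fixed $\omega$ yields a plain $\rmd k$-integral over $\zZ(\omega)$ with no change of variables, cf.~\eqref{eq.density-non-crit}), and in the critical case $\Oe=\Om$ one has $\bbF^*\bbF = \bbP_{\rm div0}\neq\bbP_{\rm ac}$, so the plasmonic term---now carrying the embedded eigenvalues $\pm\Op$---must be subtracted explicitly to arrive at \eqref{eq.density-crit}.
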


\begin{Cor}\label{cor.specmes}
	$\bbE(\cdot)$, the spectral measure of the self-adjoint operator $\bbA$, satisfies
	\begin{equation}\label{eq.absolutecontinuousmes2}
	\forall \bU, \bV \in \Hps,\quad 
	\rmd\big(\bbE(\omega)\bbP_{\rm ac} \bU, \bbP_{\rm ac} \bV\big)_{\Hxy} =
	\rmd\big(\bbE(\omega)\bbP_{\rm ac} \bU, \bV\big)_{\Hxy} =  \langle\bbM_{\omega}\bU,\bV\rangle_s \,\rmd\omega.
	\end{equation}
	Moreover, for any Borel set $S \subset \bbR$ (bounded or not), we have
	\begin{equation}\label{eq.norm-Eac}
	\forall \bU \in \Hps,\quad 
	\left\| \bbE(S)\,\bbP_{\rm ac} \, \bU \right\|_{\Hxy}^2 
	= \int_\bbR\mathds{1}_{S}(\omega) \, \langle\bbM_{\omega} \bU, \bU \rangle_s\,\rmd\omega,
	\end{equation}
	where the function $\omega \mapsto \langle\bbM_{\omega} \bU, \bU \rangle_s$ is non-negative and integrable on $\bbR$.
\end{Cor}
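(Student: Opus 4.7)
The plan is to derive Corollary \ref{cor.specmes} as a direct consequence of Theorem \ref{Thm.specdensity} combined with standard properties of the spectral measure $\bbE(\cdot)$ of $\bbA$. The key observation is that $\bbP_{\rm ac}$ is itself a spectral projector (a bounded Borel function of $\bbA$, being $\bbI$ minus the sum of the eigenprojectors identified in Proposition \ref{prop.spectrumA}), so it commutes with every $\bbE(S)$ and both are orthogonal projectors.

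For the first equality in \eqref{eq.absolutecontinuousmes2}, I would argue that for all $\bU,\bV \in \Hxy$ and every Borel set $S \subset \bbR$, the vector $\bbE(S)\bbP_{\rm ac}\bU$ lies in the range of $\bbP_{\rm ac}$ (since $\bbE(S)$ and $\bbP_{\rm ac}$ commute and $\bbP_{\rm ac}$ is idempotent). Consequently $(\bbE(S)\bbP_{\rm ac}\bU,\bV)_{\Hxy} = (\bbE(S)\bbP_{\rm ac}\bU,\bbP_{\rm ac}\bV)_{\Hxy}$, which yields the first equality of measures as $S$ varies.

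To obtain the second equality, I apply Theorem \ref{Thm.specdensity} to $f=\mathds{1}_S$ for each bounded Borel set $S$ disjoint from $\sigma_{\rm exc}$. Combined with the spectral theorem $(\bbE(S)\bbP_{\rm ac}\bU,\bV)_{\Hxy} = \int_S \rmd(\bbE(\omega)\bbP_{\rm ac}\bU,\bV)_{\Hxy}$ and the $s$-duality identity \eqref{eq.innerproductbis}, this gives
\begin{equation*}
\int_S \rmd(\bbE(\omega)\bbP_{\rm ac}\bU,\bV)_{\Hxy} = \int_S \langle \bbM_\omega \bU, \bV\rangle_s\,\rmd\omega
\end{equation*}
for every such $S$ and for all $\bU,\bV \in \Hps$. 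The uniqueness of Borel measures on $\bbR\setminus\sigma_{\rm exc}$ then identifies the two measures. To extend the identity to all of $\bbR$, I verify that $\bbE(\sigma_{\rm exc})\bbP_{\rm ac} = 0$: since $\sigma_{\rm exc}$ is a finite set, it suffices to treat each point $\lambda \in \sigma_{\rm exc}$ separately; if $\lambda \in \sigma_p(\bbA)$, then $\bbE(\{\lambda\})$ is the eigenprojector, annihilated by $\bbP_{\rm ac}$ by construction; if $\lambda \notin \sigma_p(\bbA)$, then by the general theory of self-adjoint operators $\bbE(\{\lambda\}) = 0$ outright. This eliminates any possible contribution of $\sigma_{\rm exc}$ to either side.

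For \eqref{eq.norm-Eac}, I use that $\bbE(S)\bbP_{\rm ac}$ is an orthogonal projector (commuting product of two orthogonal projectors), so $\|\bbE(S)\bbP_{\rm ac}\bU\|_{\Hxy}^2 = (\bbE(S)\bbP_{\rm ac}\bU,\bU)_{\Hxy}$. For bounded $S$ the previous step gives the claim; for unbounded $S$ I invoke monotone convergence applied to $S \cap [-N,N]$, valid on the left via strong continuity of $\bbE$ and on the right since $\omega \mapsto \langle \bbM_\omega\bU,\bU\rangle_s$ is non-negative. Non-negativity follows because $\bbE(S)\bbP_{\rm ac}$ is a positive operator, and integrability follows by taking $S = \bbR$, which yields $\int_\bbR \langle\bbM_\omega\bU,\bU\rangle_s \rmd\omega = \|\bbP_{\rm ac}\bU\|_{\Hxy}^2 < \infty$. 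The only mildly delicate step is the handling of $\sigma_{\rm exc}$ at the boundary of the identification; everything else is bookkeeping around the spectral theorem.
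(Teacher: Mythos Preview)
Your proof is correct and follows essentially the same route as the paper: both arguments apply Theorem~\ref{Thm.specdensity} to $f=\mathds{1}_S$ for bounded $S$ avoiding $\sigma_{\rm exc}$, handle $\sigma_{\rm exc}$ via $\bbE(\sigma_{\rm exc})\bbP_{\rm ac}=0$, and then extend to arbitrary Borel $S$ by sigma-additivity (the paper phrases this last step via the finiteness of the scalar spectral measure rather than monotone convergence on $S\cap[-N,N]$, but these are equivalent). The only minor refinement to note is that Theorem~\ref{Thm.specdensity} requires the \emph{support} of $f$ (hence $\overline{S}$) to avoid $\sigma_{\rm exc}$, not just $S$ itself, and that passing from \eqref{eq.Bochner} to the scalar identity requires commuting the Bochner integral with the duality product --- both points are implicit in your sketch and explicit in the paper.
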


\begin{proof}
The first equality of \eqref{eq.absolutecontinuousmes2} simply follows from the fact that $ \bbP_{\rm ac}=\bbE(\bbR \setminus \sigma_{p}(\bbA))$ is an  orthogonal projection which commutes with $\bbE(S)$ for any Borel set $S \subset \bbR$.\\[6pt]
We prove now the second equality of of \eqref{eq.absolutecontinuousmes2}.  As $\bbE(S)=\mathds{1}_S(\bbA)$, by virtue of \eqref{eq.Bochner} apply to $f=\mathds{1}_S$, we get that  for any bounded Borel set $S \subset \bbR$ such that  $\overline{S} \cap \sigma_{\rm exc} = \varnothing$: 
	\begin{equation*}
	\forall \bU, \bV \in \Hps,\quad 
	\big( \bbE(S)\,\bbP_{\rm ac} \, \bU, \bV \big)_{\Hxy} = 
	\Big\langle \int_\bbR\mathds{1}_{S}(\omega) \, \bbM_\omega \bU \,\rmd\omega\,, \bV \Big\rangle_s.
	\end{equation*}
	As the above integral is Bochner, we can permute it with the duality product \cite[Theorem 3.7.12]{Hil-96}, which yields
	\begin{equation}
	\forall \bU, \bV \in \Hps,\quad 
	\big( \bbE(S)\,\bbP_{\rm ac} \, \bU, \bV \big)_{\Hxy} = 
	\int_\bbR\mathds{1}_{S}(\omega) \, \langle \bbM_\omega \bU, \bV \rangle_s \,\rmd\omega.
	\label{eq.expres-Eac}
	\end{equation}
Besides, as  $\sigma_{\rm exc}$ is composed of a finite number of elements (see  \eqref{eq.def-sigma-exc}),   it is clear that  $\bbE(\sigma_{\rm exc})\,\bbP_{\rm ac} =\bbE(\sigma_{\rm exc}\setminus \sigma_{p}(\bbA))= 0$ and that   $\sigma_{\rm exc}$ has zero Lebesgue's measure.  Thus, using  the sigma-additivity of the spectral measure $\bbE(\cdot)$, one first extends \eqref{eq.expres-Eac}   to any  bounded  Borel set $S$ (even it intersects $\sigma_{\rm exc}$) and then in a second time to any Borel set of $\bbR$. This leads to the second equality of \eqref{eq.absolutecontinuousmes2}.\\[6pt]
\noindent Finally, if we choose $\bV = \bU$ in \eqref{eq.expres-Eac}, we obtain \eqref{eq.norm-Eac} for $S$ bounded. The fact that it holds true for unbounded $S$ follows from the spectral theorem which ensures that for all $\bU \in \Hxy,$ the map $S \mapsto (E(S)\bU,\bU)_{\Hxy}$ defines a non-negative finite Borel measure.
\end{proof}

\begin{Thm}\label{th.Holder-dens-spec}
	Let $s > 1/2.$ The spectral density $\omega \mapsto \bbM_\omega \in B(\Hps,\Hms)$ is locally H\"{o}lder-continuous on $\bbR \setminus \sigma_{\rm exc}$. More precisely, let  $[a,b]\subset \bbR \setminus \sigma_{\rm exc}$ and $\Gamma_{[a,b]} \subset (0,1)$ be the set of H\"{o}lder exponents defined by  \eqref{eq.def-Gamma-K} for  $K=[a,b]$. Then, one has 
	\begin{equation}\label{eq.holderestim-specdensity}
\forall \, \gamma \in \Gamma_{[a,b]}, \ \ \exists \, C_{a,b}^{\gamma}>0 \ \  \mid	\forall  \, \omega', \, \omega \in [a,b], \ \
	\big\| \bbM_{\omega'} - \bbM_\omega \big\|_{\Hps,\Hms} \leq C_{a,b}^{\gamma} \ |\omega'-\omega|^{\gamma}.
	\end{equation}
\end{Thm}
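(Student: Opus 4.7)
The starting point is the explicit representation of the spectral density constructed in section \ref{sec-spec-density}. After the partial Fourier transform in $y$, the operator $\bbA$ is diagonalized by a family of generalized eigenfunctions $\wlkj$ indexed by a wavenumber $k$, a branch index $j$ of the dispersion relation $\omega = \omega_j(k)$, and the corresponding frequency $\omega$. This yields a schematic formula
\begin{equation*}
\bbM_\omega \, \bU = \sum_j \int_{\{k \,:\, \omega_j(k) = \omega\}} \gomk \, \langle \bU, \wlkj \rangle_s \, \wlkj \,\rmd k,
\end{equation*}
where $\gomk$ is a Jacobian-type weight proportional to $|\omega_j'(k)|^{-1}$. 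My plan is to read off H\"{o}lder continuity in $\omega$ directly from this formula.

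The first step is to observe that, for $s > 1/2$, each pairing $\bU \mapsto \langle \bU, \wlkj \rangle_s$ defines a continuous linear form on $\Hps$, with $(k,\omega)$-dependence inherited from that of $\wlkj$. This is the classical Agmon--Kuroda trace fact that, for $s > 1/2,$ the partial Fourier transform maps $L^2_s(\bbR)$ continuously into $C^0(\bbR)$, so that a plane-wave-type factor may safely be evaluated against a function in $\Hps$. The second step is to verify smoothness of $\wlkj$ with respect to $(k,\omega)$; by the explicit construction in section \ref{sec-geneigen}, these functions are smooth (in fact real-analytic) in $(k,\omega)$ on each sheet of the dispersion relation, except at the cutoff frequencies $\pm \Omega_e, \pm \Omega_c$, where two sheets meet and one of the transverse wavenumbers vanishes like $\sqrt{|\omega^2 - \Omega_e^2|}$ or $\sqrt{|\omega^2 - \Omega_c^2|}$.

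Away from these thresholds, a change of variable mapping $k$ to an arc-length parameter on the dispersion branch converts the formula into a smooth parameter-dependent integral. Controlling $\bbM_{\omega'} - \bbM_\omega$ in $B(\Hps, \Hms)$ then reduces to bounding the weighted $L^2$-difference of the generalized eigenfunctions uniformly in the integration parameter. The threshold $s - 1/2$ appearing in \eqref{eq.def-Gamma-K} is exactly what is lost in the Agmon trace bound of the first step, while the cap at $1$ is simply the Lipschitz regularity of the smooth integrand.

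The main obstacle is the behaviour near $\omega = \pm \Omega_e$ and $\pm \Omega_c$. At these values, the Jacobian $\gomk$ develops an integrable inverse-square-root singularity of the form $|\omega^2 - \Omega_e^2|^{-1/2}$ (respectively $|\omega^2 - \Omega_c^2|^{-1/2}$), while simultaneously the integration set $\{k : \omega_j(k) = \omega\}$ opens or closes like $|\omega^2 - \Omega_e^2|^{1/2}$. The delicate point is to show that these two effects balance so as to produce a H\"{o}lder bound in $\omega$ with exponent exactly $1/2$, yielding the capped exponent $\min(s-1/2, 1/2)$. After a substitution absorbing the square-root singularity of $\gomk$, the problem reduces to establishing H\"{o}lder regularity of a standard parameter-dependent integral in which $\omega$ enters only through $\sqrt{|\omega^2 - \Omega_e^2|}$ or $\sqrt{|\omega^2 - \Omega_c^2|}$; combining this threshold estimate with the smooth regime via a finite cover of $[a,b]$ then gives \eqref{eq.holderestim-specdensity}.
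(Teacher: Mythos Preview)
Your overall strategy --- reduce the H\"older regularity of $\bbM_\omega$ to H\"older estimates in $\omega$ on the generalized eigenfunctions $\bbW_{k,\omega,j}$ in $\Hms$ --- is precisely the approach the paper announces (see the Remark following Theorem \ref{th.Holder-dens-spec}, which refers to the 20-page proof in \cite{Cas-Haz-Jol-22}). So the high-level plan is correct and matches the paper.

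However, your description of the spectral density is structurally wrong, and the error propagates into your explanation of the $1/2$ exponent. You write the density as a sum over ``branches of a dispersion relation $\omega=\omega_j(k)$'' with a Jacobian weight $g_{k,\omega}\sim |\omega_j'(k)|^{-1}$. That picture is only valid for the single plasmonic zone $\zEE$. The bulk of $\bbM_\omega$ comes from the \emph{surface} zones $\zDD,\zDE,\zDI,\zEI$, where $(k,\omega)$ range over genuine two-dimensional regions; compare formula \eqref{eq.density-non-crit}:
\[
\bbM_\omega \bU = \sum_{\scZ\in\calZ\setminus\{\EE\}}\sum_{j\in J_\scZ}\int_{\zZ(\omega)} \langle \bU,\bbW_{k,\omega,j}\rangle_s\,\bbW_{k,\omega,j}\,\rmd k \;+\;\text{(plasmonic term)}.
\]
Here $\zZ(\omega)$ is a union of intervals whose \emph{endpoints} are the spectral cuts $k_0(\omega),k_\scI(\omega),k_\scD(\omega)$, and there is \emph{no} Jacobian in the surfacic integrals. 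Consequently your ``change of variable to arc-length on the dispersion branch'' has no meaning for these terms, and your attribution of the $1/2$ exponent at $\pm\Oe,\pm\Oc$ to a blow-up of a Jacobian $g_{k,\omega}$ is incorrect.

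The genuine mechanism is different and is where the 20 pages go. Differencing $\bbM_{\omega'}-\bbM_\omega$ for the surfacic terms produces two contributions: (i) variation of the integrand $\langle\bU,\bbW_{k,\omega,j}\rangle_s\,\bbW_{k,\omega,j}$ in $\omega$ at fixed $k$, and (ii) variation of the $\omega$-dependent domain $\zZ(\omega)$. Both must be controlled in $\Hms$, and neither is innocent: near the moving endpoints the normalisation constant $A_{k,\omega,j}$ in \eqref{def-A-gen} contains $|\xi_{k,\omega}^{\mp}|^{1/2}$ and $|\calW_{k,\omega}|^{-1}$, which become singular exactly on the cuts; moreover $\bbW_{k,\omega,j}$ is only bounded, so its $\Hms$-norm is controlled through the weight, which is where the constraint $\gamma<s-\tfrac12$ truly enters. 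At $\omega=\pm\Oe$ and $\omega=\pm\Oc$ the \emph{topology} of the sections $\zZ(\omega)$ changes (zones open or close, cf.\ Figure~\ref{fig.speczones1}), and the square roots $\xi_{k,\omega}^{\pm}=|\mathcal{D}_{k,\omega}^{\pm}|^{1/2}$ degenerate simultaneously; this, not a Jacobian, is the source of the cap at $1/2$. An argument built on dispersion curves would simply miss these endpoint and domain-variation difficulties.
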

\begin{Rem}
The proof of  Theorem \ref{th.Holder-dens-spec} is very  technical and thus not detailed here for shortness purposes. Indeed, it corresponds to  the sections 3.3 and 3.4 which occupy  20 pages of \cite{Cas-Haz-Jol-22}. It is based on precise  $\Hms$-H{\"o}lder type estimates in $\omega$   of the generalised eigenfunctions $\bbW_{k,\omega,j}$ introduced in the next paragraph. These estimates have to be done meticulously  since they make
the values of the local H{\"o}lder exponent  $\gamma$ dependent  of   the weight index $s$ of the space $\Hms$.
\end{Rem}

\subsection{Construction of the spectral density}\label{sec-construction-sepc-density}
The construction of the spectral density function $\omega \mapsto \bbM_{\omega}$ rely on the spectral analysis of the self-adjoint operator $\bbA$ and the explicit construction of a generalized Fourier transform $\bbF$ performed in \cite{Cas-Haz-Jol-17}. $\bbF$  diagonalizes  $\bbA$, in the sense that it is a unitary transformation from the \emph{physical} space $\Hxy$ (defined in \eqref{eq.defHxy}) into a second Hilbert space $\hatH$, named the \emph{spectral} space, in which  $\bbA$ takes a diagonal form. 
Therefore, the goal of this section is to recall all the results and notations from \cite{Cas-Haz-Jol-17} that are necessary to introduce the operator $\bbF$.

\subsubsection{Reduced operator $\bbA_{\bk}$}\label{sec-reduced-op}
To construct the generalized Fourier transform $\bbF$, one exploits the invariance of the medium in the $y-$direction to reduce the problem's dimension.  It allows to  decompose  the operator  $\bbA$ into a family of operators $(\bbA)_{k\in \bbR}$ which acts on a Hilbert space $\Hx$ of functions depending only on the variable $x$. \\[6pt]
To this aim, one introduces  $\calF$ the Fourier transform in the $y$-direction defined by
\begin{equation}\label{eq.deffour}
\calF u(k) := \frac{1}{\sqrt{2\pi}} \int_{\bbR} u(y)\, \rme^{-\rmi k \,y}\, \rmd y  \quad \forall u \in L^1(\bbR)\cap L^2(\bbR),
\end{equation}
which extends to a unitary transformation from $L^2(\bbR_y)$ to $L^2(\bbR_k).$ For functions of both variables $x$ and $y,$ we denote also  by $\calF$ be the partial Fourier transform in the $y$-direction. In particular, for an element $\bU \in \Hxy$, one has
\begin{equation}\label{eq.defH1D}
\calF \bU(\cdot,k) \in \Hx := L^2(\bbR) \times L^2(\bbR)^2 \times L^2(\bbR_+) \times L^2(\bbR_+)^2 \quad \mbox{for a.e. } k \in \bbR,
\end{equation}
where  $\bbR_\pm=\{ x\in \bbR \mid \pm x>0\}$ and  the Hilbert space $\Hx$ is endowed with the inner product $(\cdot\,,\cdot)_{\indHx}$ defined as  $(\cdot\,,\cdot)_{\Hxy}$ in \eqref{eq.innerproduct} except that  $L^2$ inner products are now defined on one-dimensional domains.\\[6pt]
\noindent Applying $\calF$, one can decompose the operator $\bbA$ as a direct  integral of self-adjoint operators $\bbA_{\bk}$:
\begin{equation}\left\{
\begin{array}{l}
\label{eq.AtoAk}
 \displaystyle  \bbA=\calF^{*}\ \bbA^{\oplus}  \calF \quad  \mbox{ where }\quad  \bbA^{\oplus} = \int_{\bbR }^{\oplus} \bbA_{k} \, \rmd k, \quad \mbox{meaning that } \nonumber\\[15pt]
   \forall \; \bU \in \Hxy, \   \calF(\bbA \bU)(\cdot\,,k) = \bbA_k \, \calF \bU(\cdot\,,k)  \ \mbox{for a.e. } k  \in \bbR,
\end{array}\right.
\end{equation}
where $\bbA_k: D(\bbA_k)\subset \Hx \to \Hx$ is deduced from the definition of $\bbA$ by replacing  $y$-derivative by  product by $\rmi \, k$. Namely, one has
\begin{equation}\label{eq.opAk}
\bbA_k := \rmi\ \begin{pmatrix}
0 & \eps_0^{-1}\,\curlk & -\eps_0^{-1} \, \Pi & 0 \\[4pt]
-\mu_0^{-1}\,\bcurlk & 0 & 0 & - \mu_0^{-1} \,\bPi \\[4pt]
\eps_0 \Oe^2 \,\Rop & 0 & 0 &0 \\[4pt]
0 & \mu_0 \Om^2\,\bR & 0 & 0
\end{pmatrix},
\end{equation} 
\begin{equation*}
\bcurlk u := \left(\rmi k u, -\frac{\rmd u}{\rmd x}\right)^{\top}, \quad \ \curlk \bu := \frac{\rmd u_y}{\rmd x}-\rmi k u_x \mbox{ for }   \bu :=(u_x,u_y)^{\top},
\end{equation*} 
and the operators $\Pi$, $\bPi$, $\Rop$ and $\bR$ are defined as in \eqref{TE} but for functions of the variable $x$ only.  It is defined on the dense domain $D(\bbA_k)$ in $\Hx$ given by 
\begin{eqnarray*}
 & &\rmD(\bbA_k) := H^{1}(\bbR) \times \bH_{\curlk}(\bbR) \times  L^2(\bbR_+) \times L^2(\bbR_+)^2,  \ \mbox{ where } \\[4pt]
& &   \bH_{\curlk}(\bbR) := \{\bu\in L^2(\bbR)^2 \mid \curlk \bu \in L^2(\bbR)\} = L^2(\bbR) \times H^1(\bbR).\end{eqnarray*}

\subsubsection{The generalized eigenfunctions}\label{sec-geneigen}

\subsubsection*{A formal approach}

The generalized Fourier transform $\bbF$ is expressed via a family of time-harmonic solutions of the evolution equation \eqref{eq.schro}, referred to as \emph{generalized eigenfunctions} or generalized eigenmodes.   Such  modes are by definition non-zero bounded solutions of the equation 
\begin{equation}\label{eq.genralized}
\bbA \, \mathbb{W}=\omega\, \mathbb{W}  \quad \mbox{ for  non stationary frequency } \omega\in \bR\setminus \{ 0, \pm \, \Om\}.
\end{equation}
This equation has to be understood in distributional sense since these solutions do not  belong to $\Hxy$. As the medium  is stratified, they are expressed as separable functions of the variables $x$ and $y.$ Indeed, they appear as superpositions of planes waves on each side of the interface $x=0$. Thus, one looks of bounded non trivial solutions of  \eqref{eq.genralized}  of the form 
\begin{equation}  \label{eq.genralizedbis} 
\bbW_{k,\omega}(x,y)= \wlk(x) \,  \rme^{\rmi\,  k  \, y}, 
\end{equation}
where the  vector-valued function is of the form
$$\wlk=(e_{k,\omega}, \bh_{k,\omega}, \dot{p}_{k,\omega}, \dot \bm_{k,\omega})^{\top}\in L^{\infty}(\R)\times   L^{\infty}(\R)^2 \times  L^{\infty}(\R_+) \times  L^{\infty}(\R_+)^2.$$
Thus, it is equivalent to find a bounded function $\wlk$ solution in the distributional sense of
\begin{equation}\label{eq.systemeigenx}
\bbA_{k}  \wlk= \omega \wlk .
\end{equation}
After an elimination of the unknowns $\bh_{k,\omega}, \dot{p}_{k,\omega}, \dot \bm_{k,\omega}$ (left to the reader), one expresses $\wlk$ via   $\bbV_{k,\omega}$  a ``vectorizator'' operator:
\begin{equation}\label{eq.def-Vk}
\bbV_{k,\omega} \, w :=  
\left(  w \,,\, -\frac{\rmi}{\mu_{\omega}\, \omega}\,\bcurlk w \,,\, \frac{\rmi\, \eps_0 \,\Oe^2 }{\omega}\,\Rop \,w \,,\, \frac{\mu_0\, \Om^2 }{\mu_{\omega}^+\, \omega^2}\,\bR \,\bcurlk w \right)^{\top},
\end{equation}
which defines each  $\wlk$ in term of its first scalar component $e_{k,\omega}$ (the component associated with the electrical field).
One finally obtains that the system  \eqref{eq.systemeigenx} is equivalent to 
\begin{equation}\label{eq.funcproprgenk}
  \wlk=\bbV_{k,\omega} \,e_{k,\omega}
\end{equation}
with  the  bounded function $e_{k,\omega}$  solution of  the following scalar Sturm-Liouville equation:
\begin{equation}\label{eq.E}
\displaystyle - \frac{\rmd}{\rmd x}\left( \frac{1}{\mu(\omega,\cdot)}\frac{\rmd e_{k,\omega}}{\rmd x}\right)+\frac{ \mathcal{D}_{k,\omega} }{\mu(\omega,\cdot )}\, e_{k,\omega} =0, \  \mbox{ with }   \mathcal{D}_{k,\omega}(x) := k^2- =\varepsilon(\omega,x)\, \mu(\omega, x)\, \omega^2,
\end{equation}
where $\varepsilon(\omega,x)$ and $\mu(\omega, x)$  are defined as their two dimensional version $\varepsilon(\omega,\bx)$ and $\mu(\omega, \bx)$ in  \eqref{eq.defepsmu} by simply  replacing $\bx=(x,y)$ by $x$.
As this scalar Sturm-Liouville equation is taken  in sens of distributions, it contains implicitly  the following transmission conditions:
\begin{equation}\label{eq.transmissionconditionbis}
[e_{k,\omega}]_{x=0}=0 \quad  \mbox{ and } \quad  \Big[  \frac{1}{\mu(\omega,\cdot)}\frac{\rmd e_{k,\omega}}{\rmd x}\Big]_{x=0}=0.
\end{equation}

\noindent Thanks to \eqref{eq.genralizedbis} and \eqref{eq.funcproprgenk}, the dimension of the  space of generalized functions $\bbW_{k,\omega}$ associated to $(k,\omega)$  is  isomorphic to the space of bounded solutions  of \eqref{eq.E} whose dimension  is $0$, $1$ or $2$. This dimension is by definition the spectral  multiplicity of $\omega$ for the operator $\bbA_k$, that we shall  call  for simplicity the spectral multiplicity of $(k,\omega)$, which is the object of  in the next section. 


\subsubsection*{Definition of the spectral zones and spectral multiplicity}
\emph{The spectral zones}  will be constructed as region of the $(k,\omega)$-plane for the spectral  multiplicity of $(k,\omega)$ is constant and positive.
The characterization of these spectral zones is linked to the sign of the piecewise-constant function $\mathcal{D}_{k,\omega}$.  From \eqref{eq.defepsmu}, we have more explicitly
\begin{equation*}\label{eq.defTheta}
\mathcal{D}_{k,\omega}(x) = \left\lbrace
\begin{array}{ll} 
\mathcal{D}_{k,\omega}^{-} := k^2-\eps_0\,\mu_0 \,\omega^2 & \mbox{if }x < 0,\\[4pt]
\mathcal{D}_{k,\omega}^{+} := k^2-\eps^+(\omega) \, \mu^+(\omega) \, \omega^2& \mbox{if }x > 0.
\end{array}
\right.
\end{equation*}
Physically $\mathcal{D}_{k,\omega}^{\pm}$ represents the square of the wavenumber in the $x$-direction inside $\mathbb{R}^2_\pm$, for a plane wave of frequency $\omega$ whose wavenumber in the $y$-direction is $k$. At fixed $(k,\omega)$, the sign of $\mathcal{D}_{k,\omega}^{\pm}$ in each medium  determines if the generalized eigenfunction is an oscillating solution (and thus \emph{propagative}) of \eqref{eq.E}   or an exponential decreasing solution (and thus \emph{evanescent})  in the considered medium. As $\mathcal{D}_{k,\omega}^{\pm} =\mathcal{D}_{|k|,|\omega|}^{\pm}$ for all $(k,\omega) \in \bbR^2$  and $\mu(\omega, \cdot)$ is even in $\omega$, we can restrict ourselves to the quadrant $k \geq 0$ and $\omega \geq 0.$ In this quadrant, there are three curves through which the sign of $\mathcal{D}_{k,\omega}^{-}$ or $\mathcal{D}_{k,\omega}^{+}$ changes. 
More precisely, one has
\begin{equation} \label{defk0DI}
\begin{array}{ll}
\mathcal{D}_{k,\omega}^{-} = 0 & \Longleftrightarrow \quad |k| = k_0 (\omega) :=\sqrt{ \eps_0 \, \mu_0} \, |\omega|, \\ [8pt]
\mathcal{D}_{k,\omega}^{+} = 0 & \Longleftrightarrow \quad |k| = \left\{ \begin{array}{ll}
k_{\scD} (\omega) := \sqrt{\eps^+ (\omega)\, \mu^+(\omega)} \, |\omega|
& \text{if }  |\omega| \geq \max(\Oe,\Om)\\
\quad\text{or} &  \\ 
k_{\scI} (\omega) := \sqrt{\eps^+(\omega) \, \mu^+(\omega)} \, |\omega|
& \text{if }  0 < |\omega| \leq \min(\Oe,\Om).
\end{array} \right.
\end{array}
\end{equation}
The spectral cuts are represented in Figure \ref{fig.speczones1} in the cases $\Oe < \Om$ and $\Oe = \Om$. They delimit different areas in the positive quadrant. More precisely, 
the orange areas represent the parts of the positive quadrant where $\mathcal{D}_{k,\omega}^{-}  < 0$ and $\mathcal{D}_{k,\omega}^{+}  < 0$. It corresponds to a propagative regime along the $x$-direction in both media. Then, the green area corresponds to a propagative regime (again in the $x$-direction) in the vacuum (since $\mathcal{D}_{k,\omega}^{-}  < 0$) and  an evanescent regime (since $\mathcal{D}_{k,\omega}^{+}  > 0$) in the Drude medium. Oppositely, in the blue area, the regime is evanescent in the vacuum (since $\mathcal{D}_{k,\omega}^{-} >0$) and  propagative  in the Drude material (since $\mathcal{D}_{k,\omega}^{+}  < 0$).\\[6pt]
Moreover, as the Drude material is a dispersive negative material, in the  region where it is propagative (i.e. where $\eps^+(\omega)$ and $\mu^+(\omega)$ have the same sign), the propagation can be direct or inverse (see section 6.3 of \cite{Cas-Jol-25} or  section 3.3.2 of \cite{Cas-Haz-Jol-17}). Indeed, in the areas, where it is propagative and both $\eps^+(\omega)$ and $\mu^+(\omega)$ are  positive,  the group and phase velocities of a plane wave have the same direction, as in vacuum and one says that it is \emph{direct propagative}. On the other hand, in the area,  where the Drude material is propagative but where $\eps^+(\omega)$ and $\mu^+(\omega)$ are negative, the propagation is called \emph{inverse}, since the group and phase velocities  point in opposite  directions. This justifies the use of the indices $\scD,$ $\scI$ and $\scE$, meaning respectively \emph{direct}, \emph{inverse} and \emph{evanescent}, to name the various spectral zones. Each of them is actually indexed by a pair of indices: the first one indicates the behavior in the vacuum ($\scD$ or $\scE$) and the second one, in the Drude material ($\scD$, $\scI$ or $\scE$). 
We thus define
\begin{equation*}
\begin{array}{ll}
\zDD & := \ \left\{(k,\omega) \in \bbR^2 \mid \  
|\omega| > \max(\Oe,\Om) \text{ and } |k| < k_\scD(\omega) \, \right\},\\[4pt]
\zDI & := \ \left\{(k,\omega) \in \bbR^2 \mid \ 
0 < |\omega| < \min(\Oe,\Om) \text{ and } |k| < \min\big( k_0(\omega),k_\scI(\omega) \big) \, \right\}, \\[4pt]
\zEI & := \ \left\{(k,\omega) \in \bbR^2 \mid \ 
0 < |\omega| < \min(\Oe,\Om),\ k_0(\omega) < |k| < k_\scI(\omega) \, \right\}, \\[4pt]
\zDE & := \ \left\{(k,\omega) \in \bbR^2 \mid \ 
|\omega| \neq \Om  \text{ and } |k| < k_0(\omega)\, \right\} 
\setminus \overline{\zDD \cup \zDI}.
\end{array}
\end{equation*}
In the following, the above sets will be referred as  {\em surfacic spectral zones}. The parts of these spectral zones located in the quadrant $\bbR^+ \times \bbR^+$ are represented in Figure \ref{fig.speczones1}. 
\begin{figure}[h!]
\centering
 \includegraphics[width=0.47\textwidth]{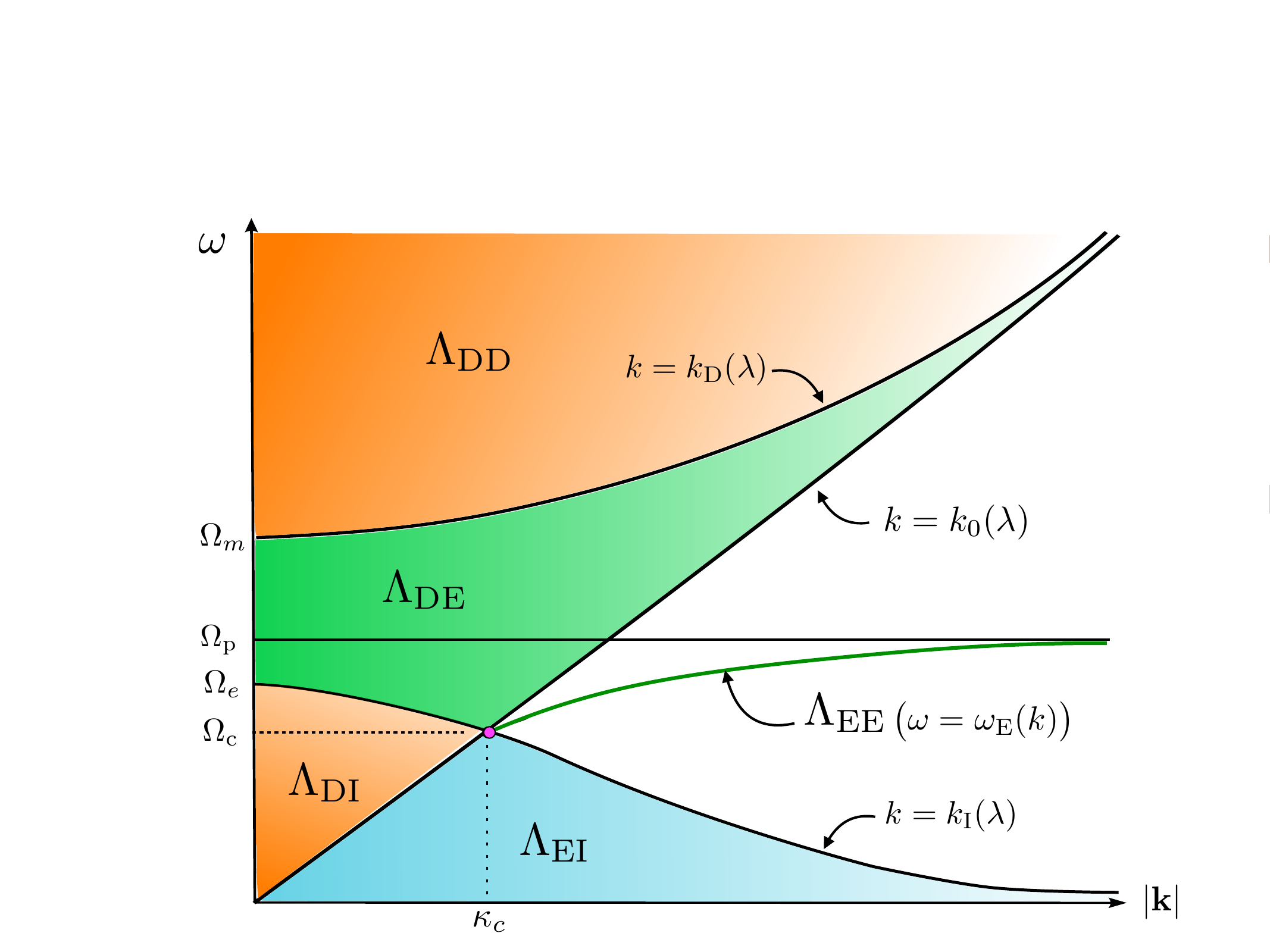}
  \includegraphics[width=0.515\textwidth]{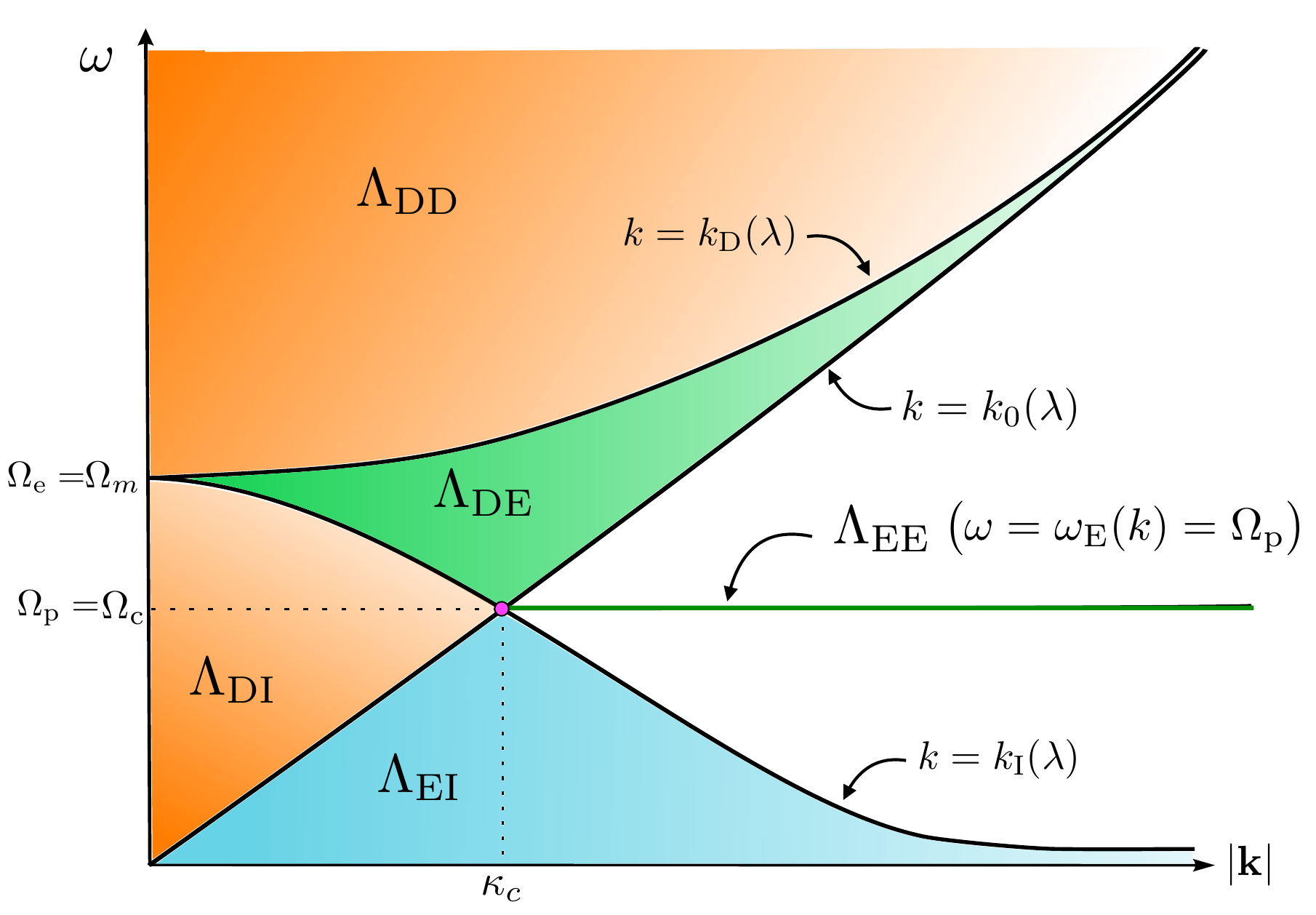}
 \caption{Spectral zones represented on  $\bbR^+ \times \bbR^+$ for $\Oe<\Om$ (left) and  $\Oe=\Om$ (right).}
\label{fig.speczones1}
\end{figure}
The expression of the generalized eigenfunctions given below involves an appropriate square root $\xi_{k,\omega}^\pm$ of $\mathcal{D}_{k,\omega}^{\pm} $  that has the property to be either purely imaginary or positive real (the choice of the square root is justified by a limiting absorption process \cite[\S 3.3.1]{Cas-Haz-Jol-17}). We thus define
\begin{eqnarray}
\xi_{k,\omega}(x) & := &\xi_{k,\omega}^\pm \quad\text{if }\pm \, x > 0 \quad\text{where} \label{eq.defthetax} \\[8pt]
\xi_{k,\omega}^-  & := & \left\lbrace \begin{array}{ll} 
- \rmi \,\sgn(\omega)\,|\mathcal{D}_{k,\omega}^-|^{1/2} & \mbox{if } (k,\omega) \in \zDI \cup \zDE \cup \zDD,\\[4pt]
|\mathcal{D}_{k,\omega}^{-} |^{1/2} & \mbox{otherwise},
\end{array}\right.
\label{eq.def-thetam} \\[8pt]
\xi_{k,\omega}^+ & := & \left\lbrace \begin{array}{ll} 
+ \rmi \,\sgn(\omega)\,|\mathcal{D}_{k,\omega}^{+} |^{1/2} & \mbox{if } (k,\omega) \in \zEI \cup \zDI,\\[4pt]
- \rmi \,\sgn(\omega)\,|\mathcal{D}_{k,\omega}^{+} |^{1/2} & \mbox{if } (k,\omega) \in \zDD,\\[4pt]
|\mathcal{D}_{k,\omega}^+|^{1/2} & \mbox{otherwise}.
\end{array}\right.
\label{eq.def-thetap}
\end{eqnarray} 

\noindent We  introduce now a last spectral zone $\zEE $,  associated to \emph{plasmonic waves}, {\it i.e.}, guided modes that are evanescent in both media (since it is localized in the white area in figure \ref{fig.speczones1} where  $\mathcal{D}_{k,\omega}^{-}  > 0$ and  $\mathcal{D}_{k,\omega}^{+}  > 0$). Thus, these modes are  localized and propagates alongside the interface between both media  \cite{Mai-07}. Unlike the four other spectral zones which are surface areas, $\zEE$ is composed of four curves which originate at the intersection points of the spectral cuts, called here the \emph{cross points}. These are the points where $\mathcal{D}_{k,\omega}^{-} = \mathcal{D}_{k,\omega}^{+} = 0,$ that is, the four points $(k,\omega)$ such that $|k| = \kc$ and $|\omega| = \Oc$, where $\kc =  k_0 (\Oc) = k_{\scI} (\Oc),$ which yields the definition \eqref{eq.def-Op-Oc} of $\Oc$ and $$
\kc = \sqrt{\eps_0 \mu_0} \, \Oc.$$
The spectral zone $\zEE $ is composed of the solutions $(k,\omega)$ of the following \emph{dispersion equation}:
\begin{equation}\label{eq.disp}
\calW_{k,\omega} = 0 
\quad\text{where}\quad
\calW_{k,\omega} := \frac{\xi_{k,\omega}^-}{\mu^{-}(\omega)}+ \frac{\xi_{k,\omega}^+}{\mu^{+}(\omega)} = 0.
\end{equation}
We know from \cite[Lemma 13]{Cas-Haz-Jol-17} that for a given $k$, this equation admits no solution if $|k|<\kc$, and two opposite solutions  $\pm \, \omega_\scE(k)$  if $|k|\geq \kc$, where
\begin{equation}\label{eq.expressionlambdae}
\omega_\scE(k) := 
\left\{ \begin{array}{ll}
\displaystyle \Om \ \sqrt{\frac{1}{2} + \frac{k^2}{K} - \sgn(K)\,\sqrt{\frac{1}{4} + \frac{k^4}{K^2}}} 
& \quad  \text{if } \Om \neq \Oe, \\[10pt]
\displaystyle \Om / \sqrt{2} 
& \quad  \text{if } \Om = \Oe,
\end{array}\right.
\end{equation}
with $K := \eps_0 \mu_0 \,(\Om^2-\Oe^2).$ The function $k \mapsto \omega_\scE(k)$ is strictly decreasing on $[\kc,+\infty)$ if $\Om < \Oe$ and strictly increasing if $\Om > \Oe$. Moreover $\omega_\scE(k) = \Om / \sqrt{2} + O(k^{-2})$ as $|k| \to +\infty.$ In the case where $\Om \neq \Oe,$ we denote by $\kE$  the inverse of $\omega_\scE$, originally defined for positive $\omega$ and $k$ and extended to negative $\omega$ by setting $\kE(-\omega)=\kE(\omega)$, that is,
\begin{equation} \label{defkE}
|\omega| = \omega_\scE(k) \quad \Longleftrightarrow \quad |k| = \kE (\omega) 
\quad\text{if } |k| \in [\kc, +\infty) \text{ and } |\omega| \in \omega_\scE\big([\kc, +\infty)\big),
\end{equation}
where $\omega_\scE\big((\kc, +\infty)\big) = 
\big(\min(\Op,\Oc),\max(\Op,\Oc)\,\big)$. We finally define 
\begin{equation*}\label{eq.defZEE}
\zEE := \left\{ (k,\omega)\in \bbR^2 \mid |k| > \kc \text{ and } |\omega| = \omega_\scE(k) \right\}.
\end{equation*}
$\zEE$ is the union of four curves, so we name it \emph{the lineic spectral zone}.
We have excluded the cross points from its definition, although they are  solutions to \eqref{eq.disp}. Thus, $\overline{\zEE}$ yields all the solutions to \eqref{eq.disp}. Figure \ref{fig.speczones1} shows the location of $\zEE$  when $\Oe<\Om$, $\Oe=\Om$ and $\Oe>\Om$.

The generalized eigenfunctions functions will be indeed denoted  here  by  $\bbW_{k,\omega,j}$ and indexed by three variables. The two first ones, already presented, are real-valued: $k$ is  the wavenumber in the $y$-direction and $\omega$ is a spectral parameter. The last one $j$ is an integer is related to the multiplicity of $(k,\omega) \in \zZ$ with $\scZ \in \{ \DD, \DE, \DI, \EI\}$.
Summing up, the set $J_\scZ$ of possible values of $j$ when $(k,\omega) \in \zZ$ with $\scZ \in \{ \DD, \DE, \DI, \EI\}$ is
\begin{equation}\label{eq.def-Jz}
J_\scZ :=
\left\lbrace\begin{array}{ll}
\{ -1,+1 \} & \mbox{ if } \scZ = \DD \mbox{ or }\DI, \\
\{ +1 \}    & \mbox{ if } \scZ = \DE,\\
\{ -1 \}    & \mbox{ if } \scZ = \EI, \\
\{ 0 \}    & \mbox{ if } \scZ = \EE,
\end{array}
\right.
\end{equation}
and 
$
 m_{\scZ}=\operatorname{card}\, J_\scZ  \in \{1,2\} \mbox{ is the constant multpilicity of } (k,\omega)  \mbox{ in  } \scZ \in \{ \DD, \DE, \DI, \EI\}.
 $
\\[6pt]
\noindent Before giving the expression of the generalized eigenfunctions $\bbW_{k,\omega,j}$, let us discuss their physical interpretation, which make clear our choice of possible values for the index $j$. Consider first the case of the surface zones, that is, $\scZ \in \calZ \setminus \{ \EE \}$. Each $\bbW_{k,\omega,j}$ represents here an incident plane wave which scatters on the interface between both media and produces a reflected plane wave and a transmitted wave. In the half-plane where both incident and reflected waves coexist, the regime of vibration is necessarily propagative (direct or inverse) in the $x$-direction. Whereas in the half-plane where the transmitted wave occurs, the regime can be propagative or evanescent. This explains that for a given pair $(k,\omega)$ in the spectral zones $\zDD$ and $\zDI$ where both half-planes are propagative, two generalized eigenfunctions $\bbW_{k,\omega,j}$ are considered: they are indexed by $j=\pm \, 1$ which indicates the half-plane $\bbR^2_\pm$ where the transmitted wave takes place. Following the same interpretation, for a given pair $(k,\omega)$ in the spectral zones $\zEI$ and $\zDE$, only one $\bbW_{k,\omega,j}$ is considered, with $j=-1$ in $\zEI$ and $j=+1$ in $\zDE$. On the other hand, for the one-dimensional spectral zone $\zEE$, the regime is evanescent in both media. For a given pair $(k,\omega) \in \zEE,$ only one $\bbW_{k,\omega,j}$ which  represents now a guided wave that propagates along the interface is considered. Since there is no longer transmitted wave, we use here  the index $j=0$.

\subsubsection*{Expression of the generalized eigenfunctions}
We  introduce now the generalized eigenfunctions $\bbW_{k,\omega,j}$ related to the spectral zones $\zZ$ for
\begin{equation*}
\scZ \in \calZ := \{ \DD,\DE,\DI,\EI,\EE \},
\end{equation*}
defined by
\begin{equation}\label{eq.def-W}
\forall \scZ \in \calZ,\ \forall (k,\omega) \in \zZ,\ \forall j\in J_\scZ,\quad 
\bbW_{k,\omega,j} = \wlkj \,  \rme^{\rmi\,  k  \, y} \  \mbox{ with }  \ \wlkj:= \bbV_{k,\omega}\ e_{k,\omega,j},
\end{equation}
where $\bbV_{k,\omega}$ is the  ``vectorizator''  operator defined by \eqref{eq.def-Vk} and  $e_{k,\omega,j}$ is  the scalar function:
\begin{equation}\label{eq.def-w}
e_{k,\omega,j}(x) := A_{k,\omega,j}\ \psi_{k,\omega,j}(x)  \quad \forall x \in \bbR,
\end{equation}
where the expressions of $A_{k,\omega,j}$ and $\psi_{k,\omega,j}(x)$ depend on the spectral zones.\\[6pt] 
	
\noindent In the surface spectral zones $\zDD$, $\zDE$, $\zDI$ and $\zEI$,  the coefficient   $A_{k,\omega,j}$ is given by
\begin{align}
A_{k,\omega,\pm 1} & := \frac{1}{\pi\,\left| \calW_{k,\omega} \right|} \,
\left| \frac{\omega}{2}\, \xi_{k,\omega}^\mp / \mu_\omega^\mp \right|^{1/2} 
\quad\text{and} \label{def-A-gen} 
\end{align}
where $\calW_{k,\omega}$ and $\xi_{k,\omega}(x)$ are defined respectively in \eqref{eq.disp} and \eqref{eq.defthetax} and the function
\begin{equation}\label{eq-def-phi}
\psi_{k,\omega,\pm 1}(x) = 
\left\{ \begin{array}{ll}
\displaystyle \cosh\big( \xi_{k,\omega}^\mp\, x \big) \mp \frac{ \xi_{k,\omega}^\pm /\mu^\pm(\omega) }{ \xi_{k,\omega}^\mp / \mu^\mp(\omega)}\ \sinh\big(\xi_{k,\omega}^\mp\, x \big) & \text{if } \pm \, x \leq 0, \\
\exp\big( \mp \xi_{k,\omega}^\pm\, x \big) & \text{if } \pm \, x \geq 0,
\end{array} \right.
\end{equation}
which justifies the above-mentioned physical interpretation of the $\bbW_{k,\omega,j}$. \\[6pt]
\noindent In the plasmonic spectral zone $\zEE,$ we have
\begin{equation}\label{def-A-plasm} 
A_{k,\omega,0}  := \frac{\omega^2\, \left|\mu^{+}(\omega) \, \xi_{k,\omega}^{+}\right|^{1/2}}{\sqrt{2\pi}\,\Om \big( 4k^4 +(\eps_0\mu_0)^2(\Oe^2-\Om^2)^2 \big)^{1/4}} \ \text{ and }  \
\psi_{k,\omega,0}(x) := \exp\big( - \xi_{k,\omega}(x) \, |x| \big),  
\end{equation}
which shows clearly that $\bbW_{k,\omega,0}$ is a guided wave localized near the interface $x=0$.\\[6pt]
\noindent We point out that in each spectral zones  $\Lambda_\scZ $,  for a fixed $(k,\omega)\in \Lambda_\scZ$,  the functions $\psi_{k,\omega,j}$ for $j\in J_{\scZ}$ form a basis of the space of bounded solutions of the Sturm-Liouville  equation \eqref{eq.E}. This basis of functions satisfies the transmission  conditions \eqref{eq.transmissionconditionbis} and is chosen such that 
$$\psi_{k,\omega,j}(0)=1 \quad \mbox{ and } \quad  \Big(\frac{1}{\mu(\omega,\cdot)}\frac{\rmd \psi_{k,\omega,j}}{\rmd x}\Big)(0)=1.
$$

\begin{Rem}[On the choice of the normalization coefficient $A_{k,\omega,j}$]
As the generalized eigenfunction $\bbW_{k,\lambda,j}$ is defined up to a complex coefficient (depending on the spectral parameters $(k,\omega)$ and the index $j$),  the renormalization  coefficient $A_{k,\omega,j}$  given by   \eqref{def-A-gen} and \eqref{def-A-plasm} can not be deduced by the formal approach presented here. However,  the value of  the coefficient  $A_{k,\omega,j}$  is important.  Indeed, as for the diagonalization of  a Hermitian matrix where one normalizes its eigenvectors to construct a unitary matrix which diagonalizes it,  one needs here to find   ``the good weight''  associated to $\bbW_{k,\lambda,j}$ which allows to define in the following paragraph a unitary map: the generalized Fourier transform which diagonalizes the operator $\bbA$. The problem is that except for $A_{k,\omega,0} $ that can be deduced  (up to a $1/\sqrt{2 \pi}$ factor due to the Fourier transform $\mathcal{F}$ in the $y-$direction) by the normalization  of  $\wlkz\in \Hx$,   the other coefficients $A_{k,\omega,j}$  can not be deduced by a normalization process since $\wlkj\notin  \Hx$ (as it is just bounded). \\[4pt]
Indeed, one  uses a rigorous approach (see \cite{Cas-Haz-Jol-17}  for the details) to obtain the complete expression (given by \eqref{eq.def-W}, \eqref{eq.def-w}, \eqref{def-A-gen} and \eqref{def-A-plasm}) of the generalized eigenfunctions $\bbW_{k,\omega,j}$. This approach is based on the spectral theorem and the Stone formula  by computing for   $k\in \bbR$ the spectral measure $\bbE_k(\cdot)$ of the reduced operators  $\bbA_k$ via the  the limit of the imaginary part of the resolvent $R_k(\omega)=(\bbA_k-\omega \mathrm{I})^{-1}$  when  it approaches the real axis from above. The spectral measure $\bbE_k(\cdot)$  is defined in terms of the generalized eigenfunctions $\wlkj$ (involved in \eqref{eq.def-W})  and is the key tool to construct via the spectral theorem  the generalized Fourier transform. 
\end{Rem}

\subsubsection{The diagonalization Theorem}\label{sec-diag-th}
We introduce now the spectral space 
\begin{equation*}
\hatH := \bigoplus \limits_{\scZ \in \calZ} L^{2}(\zZ)^{\operatorname{card}(J_{\scZ})}=L^2(\zDD)^2 \oplus L^2(\zDE)\oplus L^2(\zDI)^2 \oplus  L^2(\zEI)\oplus   L^2(\zEE),
\label{eq.ident-hatH}
\end{equation*}
in which the action of $\bbA$ reduces to a simple multiplication by the spectral variable $\omega$. $\hatH$ is a direct sum of $L^2$ spaces of each spectral zone.  More precisely, each $L^2(\zZ)$ for $\scZ \in \calZ$  is repeated $\operatorname{card}(J_{\scZ})$ times, that is, the number of generalized eigenfunctions associated to the spectral zone $\zZ$. As for the $\bbW_{k,\omega,j}$'s, we denote by $\bhatU(k,\omega,j)$ the fields of $\hatH$, where it is understood that the set $J_{\scZ}$ of possible values for $j$ depends on the spectral zone $\zZ$ to which the pair $(k,\omega)$ belongs. Using these notations, the Hilbert space $\hatH$ is endowed with the following norm:
$$
\| \bhatU \|_{\hatH}^2 := \sum_{\scZ \in \calZ\setminus\{\EE\}}\sum_{j\in J_\scZ} \int_{ \zZ} |\bhatU(k,\omega,j) |^2 \,\rmd \omega \,\rmd k+  \sum_{\pm }\int_{|k|>k_{\rm c}}| \bhatU(k,\pm \omega_\scE(k),0) |^2  \,\rmd k.
$$

Theorem \ref{th.diagA} below gathers the results of Theorem 20 and Proposition 21 in \cite{Cas-Haz-Jol-17}. It defines the generalized Fourier transform $\bbF$ and its adjoint $\bbF^{*}$.  $\bbF$ is  a ``decomposition'' operator on the family of generalized eigenfunctions $(\bbW_{k,\omega,j})$, whereas  $\bbF^*$ is a ``recomposition'' operator in the sense that its ``recomposes'' a function $\bU\in \Hxy$ from its spectral components $\bhatU(k,\omega,j)\in \hatH$ which appear as ``coordinates'' on the ``generalized spectral basis'' $(\bbW_{k,\omega,j})$. Both operators are (partial) isometries and thus bounded, so it is sufficient to know   their expression  on the dense subspaces $\Hps$ (with $s>1/2$) of $\mathcal{H}$  for $\bbF $ and $\hatH_{\rm comp}$ of $\hatH$ introduced below for $\bbF^*$.

\begin{Thm}[Diagonalization Theorem, cf. \cite{Cas-Haz-Jol-17}]\label{th.diagA}
	Let $s > 1/2.$ 
	
	{\rm (i)} The generalized Fourier transform $\bbF: \Hxy \mapsto \hatH$ is a partial isometry, defined   by  
	\begin{equation}\label{eq.Four-gen}
	\forall \bU \in \Hps, \ \forall \scZ \in \calZ,\ \forall (k,\omega) \in \zZ,\ \forall j\in J_{\scZ}, \quad
	\bbF\bU(k,\omega,j) =\langle  \bU, \bbW_{k,\omega,j}  \rangle_{s},   
	\end{equation}
	where the $\bbW_{k,\omega,j}$'s are defined in {\rm (\ref{eq.def-W})}.
	
	{\rm (ii)} Let $\hatH_{\rm comp}$ be the dense subspace of $\hatH$ composed of compactly supported functions whose supports do not intersect the boundaries of $\zZ$ for $\scZ \in \calZ \setminus \{ \EE\}$ (\textit{i.e.}, the spectral  cuts and the three lines $\bbR \times \{0,\pm\Om \}$). Then $\bbF^*: \hatH \mapsto \Hxy$  of $\bbF$ is an isometry defined for all $\bhatU \in \hatH_{\rm comp}$ by
	\begin{equation}\label{eq.adj-Four-gen}
	\bbF^{*}\bhatU = \sum_{\scZ \in \calZ \setminus \{\EE \} } \sum_{j\in J_{\scZ}} \int_{\zZ } \bhatU(k,\omega,j)\, \bbW_{k,\omega, j} \,\rmd \omega\, \rmd k 
	+ \sum_{\pm} \int_{|k|>k_{\rm c} } \bhatU(k,\pm \omega_\scE(k) ,0)\, \bbW_{ k,\pm \omega_\scE(k) ,0} \, \rmd k,
	\end{equation}
	where the integrals are understood as Bochner integrals with values in $\Hms$. 	
	
	{\rm (iii)} Moreover, we have $ \bbF\,\bbF^{*} =\mathrm{Id}_{\hatH},
	$ while $ \bbF^{*} \bbF\,=\bbP_{\rm div0}$ where  $\bbP_{\rm div0}$ is the orthogonal projector in $\Hxy$ onto $\Hxydiv$ (see \eqref{eq.Hxydiv}).
	Thus, the restriction of $\bbF$ to $\Hxydiv$ is a unitary operator. Furthermore $\bbF$ diagonalizes $\bbA$ in the sense that for any measurable function $f:\bbR \to \bbC$,
	\begin{equation}\label{eq.diagA}
	f(\bbA)\bbP_{\rm div0}=\bbP_{\rm div0} f(\bbA)=\bbF^{*}\,f(\omega)\,\bbF \ \mbox{ in } \rmD(f(\bbA)).
	\end{equation}
\end{Thm}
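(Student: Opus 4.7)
The plan is to reduce the diagonalization of $\bbA$ to a one-dimensional spectral problem through the partial Fourier transform $\calF$ in $y$, diagonalize each self-adjoint fibre operator $\bbA_k$ by a direct resolvent computation combined with Stone's formula, and reassemble the pieces. By (\ref{eq.AtoAk}), if for each $k$ I obtain an isometric spectral transform $\widehat{\bbF}_k$ from the absolutely continuous part of $\Hx$ (with respect to $\bbA_k$) onto an appropriate sum of $L^2$ spaces in $\omega$ that intertwines $\bbA_k$ with multiplication by $\omega$, then composing with $\calF$ and absorbing the factor $\rme^{\rmi k y}$ into $\wlkj$ produces the formulas (\ref{eq.Four-gen})--(\ref{eq.adj-Four-gen}) for $\bbF$ and $\bbF^*$.

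For each fixed $k$, after using (\ref{eq.def-Vk}) to eliminate $\bh$, $\dot p$ and $\dot \bm$, the resolvent $R_k(\zeta)=(\bbA_k-\zeta\bbI)^{-1}$ at $\zeta\in\bbC\setminus\bbR$ reduces to a Green's function for the Sturm--Liouville equation (\ref{eq.E}). I would build this Green's function from two independent solutions of (\ref{eq.E}) satisfying outgoing radiation conditions at $\pm\infty$, selected by the analytic extensions to $\bbC^\pm$ of the signs (\ref{eq.def-thetam})--(\ref{eq.def-thetap}) for $\xi^{\pm}_{k,\omega}$; the Wronskian of these two solutions is proportional to $\calW_{k,\omega}$ of (\ref{eq.disp}). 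Stone's formula
$$ \bbE_k\big((a,b)\big) = \frac{1}{2\pi \rmi} \lim_{\eta \searrow 0}\int_a^b \bigl( R_k(\omega+\rmi\eta) - R_k(\omega-\rmi\eta) \bigr) \,\rmd\omega $$
then yields the spectral measure of $\bbA_k$. On each surface zone $\zZ$, $\scZ \in \calZ \setminus \{\EE\}$, the boundary limit $\Im R_k(\omega + \rmi 0)$ is continuous in $\omega$ and factors, via an explicit computation using the basis $\psi_{k,\omega,j}$, as a finite-rank kernel $\sum_{j \in J_\scZ} \wlkj \otimes \overline{\wlkj}$; the normalizing factor $|A_{k,\omega,j}|^2$ coming out of this computation is precisely (\ref{def-A-gen}). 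On the lineic zone $\zEE$, the simple zero of $\calW_{k,\omega}$ at $\omega = \pm\omega_\scE(k)$ (for $|k|>\kc$) makes $R_k$ have a simple pole whose residue is a rank-one operator proportional to $\wlkz\otimes\overline{\wlkz}$; the normalization (\ref{def-A-plasm}) is then the square root of $|\partial_\omega \calW_{k,\omega}|^{-1}$ evaluated on the dispersion curve, which accounts for the Jacobian of the implicit dispersion relation. The singular contributions at $\sigma_{\rm exc}$, reassembled through $\calF$, match the gradient eigenspaces identified in Proposition \ref{prop.spectrumA}, hence give the point part $\bbI - \bbP_{\rm div0}$.

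Given this fibrewise decomposition, statement (i) follows because for $\bU\in \Hps$ each $\bbW_{k,\omega,j}\in \Hms$, so the duality pairing in (\ref{eq.Four-gen}) is well defined, and the fibrewise Parseval identity integrated in $k$ yields $\|\bbF\bU\|_{\hatH}^2 = \|\bbP_{\rm div0}\bU\|_{\Hxy}^2$. For (ii), uniform $\Hms$-estimates on the $\bbW_{k,\omega,j}$ on compacta away from the spectral cuts and $\sigma_{\rm exc}$ ensure convergence of the Bochner integral (\ref{eq.adj-Four-gen}) in $\Hms$ on $\hatH_{\rm comp}$; testing against an arbitrary element of $\Hps$ and using (\ref{eq.innerproductbis}) identifies the result with the Hilbert-space adjoint of $\bbF$. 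The isometry relations $\bbF^*\bbF=\bbP_{\rm div0}$ and $\bbF\bbF^*=\mathrm{Id}_{\hatH}$ of (iii) are then immediate from the fibrewise orthogonality/completeness statement for $\bbA_k$, and the diagonalization identity (\ref{eq.diagA}) follows from the Borel functional calculus applied fibrewise to $\bbA_k$ and reassembled through $\calF$.

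The main obstacle will be the rigorous handling of the embedded plasmonic eigenvalue at $\omega=\pm\omega_\scE(k)$ of $\bbA_k$ for $|k|>\kc$: extracting its exact residue from $\Im R_k(\omega+\rmi 0)$ with the correct normalization (\ref{def-A-plasm}) requires a careful Wronskian computation, and the change from the $(k,\omega)$ area measure $\rmd\omega\,\rmd k$ used on the surface zones to the lineic measure $\rmd k$ on $\zEE$ must correctly track the Jacobian $|\omega_\scE'(k)|$, which is absorbed into (\ref{def-A-plasm}). A secondary subtle point is the uniform-in-$k$ control of the Green's function near $\sigma_{\rm exc}$ and near the intersections of spectral cuts; this degeneracy is precisely why $\hatH_{\rm comp}$ is defined to stay away from these sets.
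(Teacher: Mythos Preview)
The paper does not actually prove this theorem: it is stated as a result imported from \cite{Cas-Haz-Jol-17} (the sentence just before the statement says that it ``gathers the results of Theorem~20 and Proposition~21 in \cite{Cas-Haz-Jol-17}''), and no proof is supplied in the present chapter. That said, the approach you sketch --- reduction to the fibres $\bbA_k$ via $\calF$, explicit computation of the Green's function for the Sturm--Liouville problem (\ref{eq.E}), Stone's formula to read off the spectral measure $\bbE_k$, and identification of the normalization constants $A_{k,\omega,j}$ from the boundary values of $\Im R_k$ --- is exactly the strategy the paper attributes to \cite{Cas-Haz-Jol-17} in the Remark on the normalization coefficient just before \S\ref{sec-diag-th}. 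So your proposal is aligned with the intended method; there is simply no in-paper proof to compare it against in detail.

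One small caution on your description of the lineic zone: for $|k|>\kc$ the plasmonic frequency $\pm\omega_\scE(k)$ is a genuine $L^2$-eigenvalue of $\bbA_k$ (since $\wlkz\in\Hx$), so it contributes a Dirac mass to $\bbE_k$ rather than an absolutely continuous density. In the reassembly through $\calF$, this point mass becomes the line integral over $|k|>\kc$ appearing in (\ref{eq.adj-Four-gen}), and no Jacobian $|\omega_\scE'(k)|$ is absorbed into $A_{k,\omega,0}$ --- the normalization (\ref{def-A-plasm}) is simply the $\Hx$-normalization of the eigenfunction (up to the $1/\sqrt{2\pi}$ from $\calF$), as the Remark in the paper explicitly says. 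The Jacobian $\JacE(\omega)$ only enters later, in \S\ref{sec-spec-density}, when one changes from the $k$-parametrization of $\zEE$ to the $\omega$-parametrization needed for the spectral density $\bbM_\omega$.
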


\begin{Rem}\label{rem.F}
{\rm (i)}  To define $\bbF$ in \eqref{eq.Four-gen}, we use  the duality product $\langle \cdot,\cdot \rangle_{s}$ (which extends the inner product of $\Hxy$, see \eqref{eq.innerproductbis}) because  $\bbW_{k,\omega,j}\notin  \Hxy$ (this is why it is called a \emph{generalized eigenfunctions})  since their norm does not decay at infinity. But as $\bbW_{k,\omega,j}$ is bounded, one has $\bbW_{k,\omega,j}\in \Hms$ for $s > 1/2$.

{\rm (ii)} In \eqref{eq.adj-Four-gen}, we restrict ourselves to functions of $\hatH_{\rm comp}$ since one easily checks that the $\Hms$-norm of $\bbW_{k,\omega,j}$ remains uniformly bounded if $(k,\omega)$ is restricted to vary in a compact set of $\bbR^2$ that does not intersect the boundaries of the spectral zones. Hence, for $\bhatU\in \hatH_{\rm comp}$, the integrals considered in \eqref{eq.adj-Four-gen}, whose integrands are valued in $\Hms$, are Bochner integrals {\rm \cite{Hil-96}} in $\Hms$. However, as $\bbF^{*}$ is bounded from $\hatH$ to $\Hxy,$ the values of these integrals belongs to $\Hxy$. 
	
	{\rm (iii)}
	 If $\bhatU$ does not vanish near some part of the boundaries of the spectral zones, because of the singular behavior of some $\bbW_{k,\omega,j}$, the integrals in \eqref{eq.adj-Four-gen} can no longer be Bochner integrals in $\Hms$, but limits of Bochner integrals. Indeed, thanks to the density of $\hatH_{\rm comp}$ in $\hatH$, we can approximate $\bhatU$ by its restrictions to an increasing sequence of compact subsets of $\cup_{\scZ \in \calZ} \zZ$ as in the definition of $\hatH_{\rm comp}$, which yields an approximation of $\bbF^{*}\bhatU$. Of course, the limit we obtain belongs to $\Hxy$ and does not depend on the sequence. We indicate here this limiting process before each integral as follows: for all $\bhatU \in \hatH$,
	\begin{eqnarray}\label{eq.adj-F-lim}
	\bbF^{*}\bhatU= && \sum_{\scZ \in \calZ \setminus \{\EE \} } \sum_{j\in J_{\scZ}} \lim_{\Hxy}\int_{\zZ } \bhatU(k,\omega,j)\, \bbW_{k,\omega, j} \,\rmd \omega\, \rmd k \nonumber \\
	&& + \sum_{\pm} \lim_{\Hxy}\int_{|k|>k_{\rm c} } \bhatU(k,\pm \, \omega_\scE(k) ,0)\, \bbW_{ k,\pm \, \omega_\scE(k) ,0} \, \rmd k.
	\end{eqnarray}
\end{Rem}
\subsubsection{Construction of $\bbM_{\omega}$} \label{sec-spec-density}
 {\bf The non-critical case: $\Oe \neq \Om$.}
The orthogonal projection $\bbP_{\rm ac} $ coincides here with $\bbP_{\rm div0}$ (see \eqref{eq.def-Pac}). To prove \eqref{eq.specdensity}, we apply  the diagonalization Theorem \ref{th.diagA}  to the spectral measure of $\bbA:$ for any Borel set $S \subset \bbR,$ we have $\bbE(S) =\mathds{1}_{S}(\bbA)$ where $\mathds{1}_{S}$ denotes the indicator function of $S$. We assume here that  
$S$  is  bounded and $\overline{S} \cap \sigma_{\rm exc} = \varnothing$
where  $\sigma_{\rm exc} := \{0,\pm \, \Op,\pm \, \Om\}$  for $\Oe \neq \Om$ (see \eqref{eq.def-sigma-exc}). Thus, we exclude the eigenvalues $0$ and $\pm\Om,$ which implies that
\begin{equation}
\bbE(S) =\bbE(S)\,\bbP_{\rm div0} =  \bbE(S)\,\bbP_{\rm ac} 
\label{eq.Eac-non-crit}
\end{equation}
(since $\bbE(S)\,\bbP_{\rm div0} = \bbE(S)\,\bbE\big(\bbR \setminus \{ 0, \pm \Om \}\big) = \bbE\big(S \cap (\bbR \setminus \{ 0, \pm \, \Om \})\big) = \bbE(S)$), and also the plasmonic frequencies $\pm \, \Op$. Applying \eqref{eq.diagA} to $\mathds{1}_{S}(\bbA)$ then yields
\begin{equation*}
\bbE(S)\,\bbP_{\rm div0} = \bbF^{*}\,\mathds{1}_{S}(\omega)\,\bbF.
\end{equation*}
Using the expressions \eqref{eq.Four-gen} and \eqref{eq.adj-F-lim} of $\bbF$ and $\bbF^{*},$ this formula writes more explicitly as
\begin{equation}
\label{eq.spect-rep-E}
\begin{array}{lll}
\bbE(S)\,\bbP_{\rm div0} \, \bU &= &\displaystyle \sum_{\scZ \in \calZ\setminus\{\EE\}} \sum_{j \in J_{\scZ}} \lim_{\Hxy} \int_{\zZ }\mathds{1}_{S}(\omega) \, \langle  \bU, \bbW_{k,\omega,j} \rangle_{s}  \; \bbW_{k,\omega,j} \,\rmd\omega\,\rmd k 
\\[18pt]
&+& \displaystyle \sum_{\pm } \lim_{\Hxy} \int_{|k|>k_{\rm c}}\mathds{1}_{S}(\pm \omega_\scE(k))\, \langle  \bU,  \bbW_{k,\pm \omega_\scE(k),0}  \rangle_{s} \,   \bbW_{k,\pm \omega_\scE(k),0} \,\rmd k,
\end{array}
\end{equation}
for all $\bU \in \Hps$, where we recall that the limit (in $\Hxy$) is obtained by considering an increasing sequence of compact subsets of each $\zZ$ whose union covers $\zZ$. Indeed, using our assumptions on $S$ and a $\Hms$-estimate on the functions $\bbW_{k,\omega,j}$,  one can show that such a limiting process is useless and that one can apply the Fubini's theorem for the surface integrals on  $\zZ$ for $\scZ \in \calZ\setminus\{\EE\}$, as well as the change of variable $k = \pm k_\scE(\omega)$ in the last integral. Admitting this (see  section 3  of \cite{Cas-Haz-Jol-22} for the justification) and using \eqref{eq.Eac-non-crit}, we obtain that for all $\bU \in \Hps$:
\begin{equation}
\label{eq.expr-Eac-non-crit}
\bbE(S)\,\bbP_{\rm ac} \, \bU = \bbE(S) \, \bU = \int_\bbR\mathds{1}_{S}(\omega) \, \bbM_\omega \bU \,\rmd\omega
\quad\text{with}
\end{equation}
\begin{equation}
\label{eq.density-non-crit}
\bbM_\omega \bU := \sum_{\scZ \in \calZ\setminus\{\EE\}} \sum_{j \in J_{\scZ}} \int_{\zZ(\omega)} \langle \bU,\bbW_{k,\omega,j} \rangle_{s}  \; \bbW_{k,\omega,j} \,\rmd k 
+ \sum_{k \in \zEE(\omega)} \JacE(\omega)\ \langle  \bU,  \bbW_{k,\omega,0}  \rangle_{s} \,   \bbW_{k,\omega,0},
\end{equation}
for almost every $\omega \in \bbR,$ where $\JacE(\omega)$ is the Jacobian of the change of variable $k= \pm k_\scE(\omega)$: 
\begin{equation} \label{eq.jacob}
\JacE(\omega) := \big| \kE'(\omega)\big| = \Big|\frac{\rmd \omega_\scE }{\rmd k} \big(\kE(\omega)\big)\Big|^{-1} 
\end{equation}
and $\zZ(\omega)$ is the set of $k \in \bbR$ corresponding to the horizontal section of $\zZ$ at the ``height'' $\omega:$ 
\begin{equation}
\zZ(\omega) := \left\{ k \in \bbR \mid (k,\omega) \in \zZ \right\}.
\label{eq.def-section}
\end{equation}
Figure \ref{fig.speczones1} clearly shows that if $\scZ \in \calZ\setminus\{\EE\}$,  then $\zZ(\omega)$ is either empty (in this case the corresponding integral vanishes) or is a bounded set composed of one or two intervals. For instance, if $\omega > \max(\Oe,\Om),$ then 
$\zDD(\omega) = \big( -k_\scD(\omega),+k_\scD(\omega) \big)$. 
Moreover, we have
\begin{equation*}
\zEE(\omega) = \left\{
\begin{array}{ll}
\{ \pm\, k_\scE(\omega) \} & \text{if } |\omega| \in \omega_\scE\big((\kc, +\infty)\big) = \big(\min(\Op,\Oc),\max(\Op,\Oc)\big),  \\[5pt]
\varnothing & \text{otherwise,}
\end{array}
\right.
\end{equation*}
which shows that the last term in \eqref{eq.density-non-crit} appears only if $|\omega| \in \omega_\scE\big((\kc, +\infty)\big).$ \\

\noindent \textbf{The critical case: $\Oe = \Om$}. We keep  the same assumption for $S$, but now $\sigma_{\rm exc} := \{0,\pm \,\Om\}$ (see \eqref{eq.def-sigma-exc}), so that \eqref{eq.Eac-non-crit} is no longer true. From \eqref{eq.def-Pac}, it has to be replaced by
\begin{equation}
\bbE(S)\,\bbP_{\rm div0} = \bbE(S) = \bbE(S)\,\bbP_{\rm ac} + \bbE\big(S)\,\bbP_{-\Op} + \bbE\big(S)\,\bbP_{+\Op}.
\label{eq.Eac-crit}
\end{equation}
Formula \eqref{eq.spect-rep-E} is still valid. The difference with the non-critical case lies in the last term. 
Since $\omega_\scE(k) = \Op$ for all $|k| > k_{\rm c}$,  it  represents the quantities $\bbE\big(S)\,\bbP_{\pm \,\Op}\, \bU$ related to the eigenvalues $\pm \,\Op$ of infinite multiplicity. 
Thus,  formula \eqref{eq.Eac-crit}  shows that one has  to be subtract it form  \eqref{eq.spect-rep-E} to express $\bbE(S)\,\bbP_{\rm ac}.$ Using the same arguments as above for the surface integrals on $\zZ$ for $\scZ \in \calZ\setminus\{\EE\}$, we obtain instead of \eqref{eq.expr-Eac-non-crit}-\eqref{eq.density-non-crit}
\begin{equation}
\label{eq.expr-Eac-crit}
\bbE(S)\,\bbP_{\rm ac} \, \bU = \bbE(S\setminus\{\pm \, \Op\}) \, \bU = \int_\bbR\mathds{1}_{S}(\omega) \, \bbM_\omega \bU \,\rmd\omega
\quad\text{with}
\end{equation}
\begin{equation}
\label{eq.density-crit}
\bbM_\omega \bU := \sum_{\scZ \in \calZ\setminus\{\EE\}} \sum_{j \in J_{\scZ}} \int_{\zZ(\omega)} \langle \bU,\bbW_{k,\omega,j} \rangle_{s}  \; \bbW_{k,\omega,j} \,\rmd k.
\end{equation}

\noindent {\bf Extension of Theorem \ref{Thm.specdensity} to the case of any bounded function $f$.}\\[4Pt]
We  extend Theorem \ref{th.Holder-dens-spec} to the case of bounded function $f:\bbR\mapsto \bbC$ whose support $S$ is no longer compact and/or contains points of $\sigma_{\rm exc}$. However, the integral representation is not in general a Bochner integral in $B(\Hps,\Hms)$.  In this case, the expression of $f(\bbA)\,\bbP_{\rm ac}$ follows from Theorem  \ref{Thm.specdensity} by considering an increasing sequence $(S_n)$ of compacts subsets of $S \setminus \sigma_{\rm exc}$ whose union covers this set. Setting $f_n := f\,\mathds{1}_{S_n},$ Theorem \ref{th.diagA}  and the Lebesgue's  dominated convergence theorem show that
\begin{equation*}
\big\| \big(f(\bbA) - f_n(\bbA)\big)\,\bbP_{\rm ac}\bU \big\|_{\Hxy} =
\left\{ \begin{array}{ll}
\big\| \big(f(\omega) - f_n(\omega)\big)\,\bbF\bU \big\|_{\hatH} & \text{if }\Oe \neq \Om, \\[8pt]
\big\| \big(f(\omega)\mathds{1}_{\bbR\setminus\{\pm \,\Op\}} - f_n(\omega)\big)\,\bbF\bU \big\|_{\hatH} & \text{if }\Oe = \Om,
\end{array}
\right.
\end{equation*}
tends to $0$. Hence, with the same notations of \eqref{eq.adj-F-lim}, we have 
\begin{equation*}
\forall \bU \in \Hps, \quad
f(\bbA)\,\bbP_{\rm ac}\bU = \lim_{\Hxy} \int_{\bbR} f(\omega) \, \bbM_{\omega}\bU \,\rmd \omega,
\end{equation*}
that we rewrite in the condensed form (see \cite{Conway-19}, p. 256)
\begin{equation}
f(\bbA)\,\bbP_{\rm ac} = \slim_{B(\Hps,\Hxy)}\ \int_{\bbR} f(\omega) \, \bbM_{\omega} \,\rmd \omega,
\label{eq.calc-fonct-ac-lim}
\end{equation}
where ``$\slim$'' means that the limit is taken for the strong operator topology of $B(\Hps,\Hxy)$ .

%
%

\subsection{Proof of the  limiting absorption and  limiting amplitude principles}\label{sec-proofs}
The proofs of the  limiting absorption and  limiting amplitude principles: Theorems  \ref{thm.limabs} and \ref{th.ampllim} are based on the proofs of these results  done  in \cite{Cas-Haz-Jol-22} in sections 4.1 and 4.2. We recall them here.
\subsubsection*{Proof of Theorem \ref{thm.limabs}}
\textbf{The non-critical case: $\Omega_e\neq \Omega_m$}. 
We study the limit of $R_{\rm ac}(\zeta)=R(\zeta) \bbP_{\rm ac}$ when $\zeta\in \bbC\setminus\bbR\to \omega_{\mathrm{s}} \in \bbR\setminus\sigma_{\rm exc}$. By the functional calculus  formula \eqref{eq.diagA}, this limit involves the singularity of  $\omega \mapsto (\omega-\zeta)^{-1}$ at $\omega=\omega_{\mathrm{s}}$. To isolate the role of this singularity, we choose some $\rho > 0$ small enough so that the interval $J := [\omega_{\mathrm{s}}-\rho,\omega_{\mathrm{s}}+\rho]\cap \sigma_{\rm exc} $ and we decompose the latter function as
\begin{equation}
\frac{1}{\omega-\zeta} = f_\zeta^{\rm sin}(\omega) + f_\zeta^{\rm reg}(\omega)
\quad\text{where}\quad
f_\zeta^{\rm sin}(\omega) := \frac{\mathds{1}_{J}(\omega)}{\omega-\zeta}
\text{ and }
f_\zeta^{\rm reg}(\omega) := \frac{\mathds{1}_{\bbR\setminus J}(\omega)}{\omega-\zeta}.
\label{eq.def-f-sin-reg}
\end{equation}
This leads  to split $R_{\rm ac}$ into a ``singular part'' and a ``regular part'' via the formula \eqref{eq.diagA}:
\begin{equation*}
R_{\rm ac}(\zeta) = \bbF^{*}\,\frac{1}{\omega-\zeta}\ \bbF= R_{\rm sin}(\zeta) + R_{\rm reg}(\zeta)
\quad\text{where}\quad
\left\{\begin{array}{l}
R_{\rm sin}(\zeta) := \bbF^{*}\,f_\zeta^{\rm sin}(\omega)\,\bbF, \\[4pt]
R_{\rm reg}(\zeta) := \bbF^{*}\,f_\zeta^{\rm reg}(\omega)\,\bbF.
\end{array}\right.
\end{equation*}

On the one hand, the family of functions $\zeta \mapsto f_\zeta^{\rm reg}(\cdot)$ is differentiable in $\zeta$ in a vicinity of $\omega_{\mathrm{s}}$ uniformly with respect to $\omega \in \bbR$. Hence, in this vicinity, the operator of multiplication by $f_\zeta^{\rm reg}(\cdot)$ is a holomorphic function of $\zeta$ for the operator norm of $B(\hatH)$. As $\bbF$ and $\bbF^*$ are bounded, $\zeta \mapsto R_{\rm reg}(\zeta)$ is also holomorphic in this vicinity for the operator norm of $B(\Hxy)$, thus a fortiori for the one of $B(\Hps,\Hms)$  for $s>1/2$. Its limit value at $\omega_{\mathrm{s}}$ is simply given by 
\begin{equation}
R_{\rm reg}(\omega) = \bbF^{*}\,f_\omega^{\rm reg}(\omega)\,\bbF 
= \slim_{B(\Hps,\Hxy)} \ \int_{\bbR\setminus J} \frac{\bbM_{\omega}}{\omega-\omega_{\mathrm{s}}} \,\rmd \omega,
\label{eq.lim-res-reg}
\end{equation}
where the last equality is obtained via the formula \eqref{eq.calc-fonct-ac-lim} applied to $f=f_\omega^{\rm reg}$.\\

On the other hand, the ``singular part'' $R_{\rm sin}(\zeta)$ is no longer continuous on the real axis. We denote by $R_{\rm sin}^\pm$ the restrictions of $\zeta \mapsto R_{\rm sin}(\zeta)$ to $\bbC^\pm := \{ \zeta\in\bbC \mid \pm \, \Imag\zeta > 0 \},$ i.e.,
\begin{equation*}
\forall \zeta \in \bbC^\pm, \quad
R_{\rm sin}^\pm(\zeta) = \bbF^{*}\,f_\zeta^{\rm sin}(\omega)\,\bbF 
= \int_{J} \frac{\bbM_{\omega}}{\omega-\zeta} \,\rmd \omega.
\end{equation*}
Note that the ``$\slim$'' symbol is removed here since $\omega \mapsto f_\zeta^{\rm sin}(\omega)$ is bounded and compactly supported in $\bbR \setminus \sigma_{\rm exc}$, so that Theorem \ref{Thm.specdensity} applies: the  latter integral is a Bochner integral valued in $B(\Hms,\Hms)$.  Then, by  Theorem \ref{th.Holder-dens-spec}, as the spectral density $\omega \mapsto \bbM_{\omega}$ is locally H\"{o}lder continuous,   one can use the Sokhotski--Plemelj formula \cite[theorem 14.1.c, p. 94]{Hen-86} which ensures the existence of the one-sided limits of $R_{\rm sin}^\pm(\zeta)$ when $\bbC^{\pm} \ni \zeta \to \omega $ for the norm of $B(\Hps,\Hms)$ for $s>1/2$. This formula gives also an explicit expression of these limits:
\begin{equation}
R_{\rm sin}^\pm(\omega) 
= \dashint_{J} \frac{\bbM_{\omega}}{\omega-\omega_{\mathrm{s}}}\, \rmd\omega\  \pm\ \rmi \pi \, \bbM_{\omega_{\mathrm{s}}}  \ \in B(\Hps,\Hms),
\label{eq.lim-res-sin}
\end{equation}
where the dashed integral denotes a Cauchy principal value at $\omega=\omega_{\mathrm{s}}$. Moreover (see \cite{Hen-86,McL-88}), the local H\"{o}lder regularity of the spectral density $\omega \mapsto \bbM_{\omega}$ also ensures that $\zeta \mapsto R_{\rm sin}^\pm(\zeta)$ is locally H\"{o}lder continuous on $\overline{\bbC^{\pm}} \setminus \sigma_{\rm exc}$ for the operator norm of $B(\Hps,\Hms)$, with the same H\"{o}lder exponents $\gamma \in (0,1)$ as those of $\omega \mapsto \bbM_{\omega}$. We point out that even if $\omega \mapsto \bbM_{\omega}$ was locally Lipschitz continuous (i.e., $\gamma = 1$), the Sokhotski--Plemelj theorem would not ensure that so is $\zeta \mapsto R_{\rm sin}^\pm(\zeta)$. This explains why the  value $\gamma = 1$ has not been considered in Theorem \ref{th.Holder-dens-spec}.\\
\noindent Furthermore Combining \eqref{eq.lim-res-reg} and \eqref{eq.lim-res-sin} yields the following spectral representation of $R^{\pm}_{\rm ac}$:

\begin{equation}
R^{\pm}_{\rm ac}(\omega) = \slim_{B(\Hps,\Hms)}\ 
\dashint_{\mathbb{R}} \frac{\bbM_{\omega}}{\omega-\omega_{\mathrm{s}}}\, \rmd \omega\  \pm \, \rmi\pi\, \bbM_{\omega_{\mathrm{s}}}.
\label{eq.lim-res}
\end{equation}

\begin{Rem}\label{rem-Principalvalue}
The ``$\slim$'' symbol and the dashed integral  in  \eqref{eq.lim-res} involve two limit processes that can be considered independently by isolating a vicinity $J$ of $\omega$, exactly as we did above. The principal value denotes by a dashed integral means that we remove from $J$ a symmetric neighborhood of $\omega$, i.e.,  by considering $J_\delta := J \setminus (\omega-\delta,\omega+\delta)$, and take the limit of the integral on $J_\delta$ as $\delta \searrow 0$ in the operator norm of $B(\Hps,\Hms)$.  For the ``$\slim$'', one introduces an increasing sequence of compact subsets of $\bbR \setminus (\sigma_{\rm exc}\cup J)$ whose union covers this set, and take the limit of the integral on these compacts subsets for the strong operator topology of $B(\Hps,\Hms)$. In both limit processes, the integrals on compact sets are Bochner integrals valued in $B(\Hps,\Hms)$.

We point out that to gather the terms \eqref{eq.lim-res-reg} and \eqref{eq.lim-res-sin} to obtain \eqref{eq.lim-res}, we replace  the $\slim_{B(\Hps,\Hxy)}$ by the   $\slim_{B(\Hps,\Hms)}$  to ensure the existence of the principal value in \eqref{eq.lim-res-sin} (this is justified since the existence of the  $\slim_{B(\Hps,\Hxy)}$ in \eqref{eq.lim-res-reg} implies  a fortiori the existence of the $\slim_{B(\Hps,\Hms)}$ of this term as the $\Hxy$-norm dominates  the $\Hms$-norm).
\end{Rem}

\begin{Rem}\label{rem-asymp}
Theorem \ref{thm.limabs}  excludes  the values $\omega_{\mathrm{s}} = \pm \, \Op\in \sigma_{\rm exc}$ for $\Oe \neq \Om$ even if  $ \pm \, \Op \notin \sigma_{p}(\bbA)$. Indeed, they require a special study as  the Jacobian $\JacE$ in \eqref{eq.density-non-crit} is singular  (see \eqref{eq.jacob}) at $\pm \, \Op$ since $\pm \, \omega_\scE(\cdot)$ has an  horizontal asymptote (see \eqref{eq.expressionlambdae} and the first  figure of \ref{fig.speczones1}). In  section  5 of \cite{Cas-Haz-Jol-22}, we prove a limiting absorption and limiting amplitude principle in a weaker topology at $\pm \, \Op$.
\end{Rem}

\noindent \textbf{The critical case $\Oe = \Om$}. In this case, $\bbP_{\rm ac}$ and $\bbP_{\rm div0}$ no longer coincide. They actually differ from the sum of the eigenprojection associated to $\pm\Op$ (see \eqref{eq.def-Pac}). Hence, the spectral representation $R_{\rm ac}(\zeta)$  given by  Theorem \ref{th.diagA} is now
\begin{equation*}
R_{\rm ac}(\zeta) = \bbF^{*}\,\frac{\mathds{1}_{\bbR\setminus \{\pm \, \Op\}}(\omega)}{\omega-\zeta}\ \bbF
\quad\text{for }\zeta \in \bbC\setminus\bbR.
\end{equation*}
Thus, the  proof remains valid if we simply replace the definition \eqref{eq.def-f-sin-reg} of $f_\zeta^{\rm sin}$ and $f_\zeta^{\rm reg}$ by
\begin{equation*}
f_\zeta^{\rm sin}(\omega) := \mathds{1}_{\bbR\setminus \{\pm \, \Op\}}(\omega)\,\frac{\mathds{1}_{J}(\omega)}{\omega-\zeta}
\ \mbox{ and } \
f_\zeta^{\rm reg}(\omega) := \mathds{1}_{\bbR\setminus \{\pm \, \Op\}}(\omega)\,\frac{\mathds{1}_{\bbR\setminus J}(\omega)}{\omega-\zeta}.
\end{equation*}
where   $ \mathds{1}_{\bbR\setminus \{\pm \, \Op\}}(\omega)\,\mathds{1}_{J}(\omega)=\,\mathds{1}_{J}(\omega)$ (since  $\pm \Op \in \sigma_{\rm exc}$ and thus $J\subset \bbR\setminus \sigma_{\rm exc}\subset \bbR\setminus \{\pm \, \Op\}$).

\subsection*{Proof of Theorem \ref{th.ampllim}}
{\bf The non critical case $\Oe\neq \Om$}. We prove here \eqref{eq.lim-ampl} assuming that $\bbG \in \Hps\cap \Hxydiv $. As $\bU(t) = \phi_{\omega_{\mathrm{s}},t}(\bbA) \,\bbG$ where   $\phi_{\omega_{\mathrm{s}},t}(\cdot)$ given by \eqref{eq.phiduhamel2} is bounded and  $\bbG=\bbP_{\rm ac}\bbG$ (since $ \Hxydiv$ is the range of $\bbP_{\rm ac}$ for $\Oe\neq \Om$, see \eqref{eq.def-Pac}), we can use formula \eqref{eq.calc-fonct-ac-lim} (for $f=\phi_{\omega_{\mathrm{s}},t}(\cdot)$) to get
\begin{equation}\label{eq.decompUt}
\forall t \geq 0, \quad
\bU(t)=\phi_{\omega,t}(\bbA) \,  \bbP_{\rm ac} \,\bbG= \lim_{\Hxy}\int_{\mathbb{R}} \rmi \,\frac{\rme^{-\rmi\omega\, t} -\rme^{-\rmi \omega_{\mathrm{s}} \, t}}{\omega-\omega_{\mathrm{s}}} \,\bbM_{\omega}\bbG  \ \rmd \omega.
\end{equation}
The idea is to rewrite \eqref{eq.decompUt} as follows:
\begin{equation}\label{eq.intampl1}
\bU(t) = \rme^{-\rmi \omega_{\mathrm{s}} \, t} \lim_{\Hxy} \int_{\mathbb{R}}  \left( \frac{ \rmi \,\rme^{-\rmi \, (\omega-\omega_{\mathrm{s}}) \,  t}  }{\omega-\omega_{\mathrm{s}}}\, \bbM_{\omega} \bbG  - \rmi  \frac{\bbM_\omega\bbG}{\omega-\omega_{\mathrm{s}}} \right) \rmd \omega
\end{equation}
 to  make appear the time-harmonic solution $\bbU_{\omega_{\mathrm{s}}}^{+} :=  - \rmi \,R^{+}_{\rm ac}(\omega_{\mathrm{s}}) \, \bbG\in \Hms$ given by the limiting absorption principle for $s>1/2$ via the formula \eqref{eq.lim-res}.\\
Then, we split the integral \eqref{eq.intampl1}, by integrating separately both functions inside the parentheses. It  has to be done carefully, since each of them is singular at $\omega=\omega_{\mathrm{s}}$, while $\phi_{\omega_{\mathrm{s}},t}(\omega)$ is not. To do this, one introduces two Cauchy principal values at $\omega=\omega_{\mathrm{s}}$ defined in $\Hms$, i.e.,
\begin{equation*}
\bU(t) = \rme^{- \rmi \, \omega_{\mathrm{s}} \, t} 
\left( \lim_{\Hms} \dashint_{\mathbb{R}}  \frac{   \rmi \, \rme^{-\rmi \, (\omega-\omega_{\mathrm{s}}) \,t}}{\omega-\omega_{\mathrm{s}}}\, \bbM_{\omega}\bbG  \,\rmd \omega 
 - \rmi  \lim_{\Hms} \dashint_{\mathbb{R}} \frac{\bbM_{\omega}  \bbG}{\omega-\omega_{\mathrm{s}}} \, \rmd \omega\right).
\end{equation*}
As the second Cauchy principal value is  precisely the one involved in  \eqref{eq.lim-res}, we obtain
\begin{equation}
\bU(t)  = \rme^{-\rmi \omega_{\mathrm{s}}\, t} \,\Big( \bbU_{\omega_{\mathrm{s}}}^{+} + \bV(t) - \pi\, \bbM_{\omega_{\mathrm{s}}}\,\bbG\Big)
\ \text{ where } \bV(t)  := 
\lim_{\Hms} \dashint_{\mathbb{R}}   \frac{ \rmi \, \rme^{-\rmi \, (\omega-\omega_{\mathrm{s}}) \,t} }{\omega-\omega_{\mathrm{s}}}\, \bbM_{\omega}\bbG \, \rmd \omega.
\label{eq.def-V}
\end{equation}
Then the proof of \eqref{eq.lim-ampl} will be complete once we have proved the following lemma.

\begin{Lem}\label{lem-lim-ampl}
Let $s>1/2$, $\omega \in \bbR\setminus\sigma_{\rm exc}$ and  $\bV(\cdot)$ be defined in \eqref{eq.def-V} for $t\geq 0$.	Then,  we have
\begin{equation}\label{eq.ampl-2}
\lim_{t\to +\infty} \big\| \bV(t) - \pi\, \bbM_{\omega}\,\bbG\big\|_{\Hms} = 0, \quad   \forall \bbG \in \Hps.
\end{equation}
\end{Lem}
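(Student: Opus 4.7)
The plan is to isolate the singularity of the Cauchy kernel at $\omega_s$ by choosing $\rho>0$ small enough so that $J := [\omega_s-\rho,\omega_s+\rho]\subset\bbR\setminus\sigma_{\rm exc}$, and to decompose $\bV(t) = \bV_J(t) + \bV_{\bbR\setminus J}(t)$ according to Remark \ref{rem-Principalvalue}. I would then show separately that $\bV_J(t)\to\pi\,\bbM_{\omega_s}\bbG$ and $\bV_{\bbR\setminus J}(t)\to 0$ in $\Hms$ as $t\to+\infty$.

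For the local part $\bV_J(t)$, the standard add-and-subtract trick gives
\begin{equation*}
\bV_J(t) = \rmi\int_J \frac{\rme^{-\rmi(\omega-\omega_s)t}}{\omega-\omega_s}\,[\bbM_\omega-\bbM_{\omega_s}]\bbG\,\rmd\omega \;+\; \rmi\,\bbM_{\omega_s}\bbG\cdot\dashint_J\frac{\rme^{-\rmi(\omega-\omega_s)t}}{\omega-\omega_s}\,\rmd\omega.
\end{equation*}
Theorem \ref{th.Holder-dens-spec} provides $\gamma \in (0,1)$ such that $\|[\bbM_\omega-\bbM_{\omega_s}]\bbG\|_{\Hms}\leq C|\omega-\omega_s|^\gamma\|\bbG\|_{\Hps}$ on $J$; hence the integrand in the first term has $\Hms$-norm bounded by $C|\omega-\omega_s|^{\gamma-1}\|\bbG\|_{\Hps}$, which is integrable on $J$. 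The integral is therefore a genuine Bochner integral (no principal value is needed), and the vector-valued Riemann--Lebesgue lemma makes it vanish in $\Hms$ as $t\to+\infty$. For the scalar factor in the second term, the substitution $u=(\omega-\omega_s)t$ combined with the classical Dirichlet integral yields
\begin{equation*}
\rmi\,\dashint_J \frac{\rme^{-\rmi(\omega-\omega_s)t}}{\omega-\omega_s}\,\rmd\omega \;=\; 2\int_0^{\rho t}\frac{\sin u}{u}\,\rmd u \;\xrightarrow[t\to+\infty]{}\; \pi,
\end{equation*}
which gives the claimed limit for $\bV_J(t)$.

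For the global part $\bV_{\bbR\setminus J}(t)$, the multiplier $(\omega-\omega_s)^{-1}\mathds{1}_{\bbR\setminus J}(\omega)$ is now bounded by $\rho^{-1}$, so the functional calculus of Theorem \ref{th.diagA} rewrites it as
\begin{equation*}
\bV_{\bbR\setminus J}(t) \;=\; \rme^{\rmi\omega_s t}\,\rme^{-\rmi t\bbA}\,\mathbb{B}\,\bbG,\qquad \mathbb{B} := \Big(\tfrac{\rmi\,\mathds{1}_{\bbR\setminus J}(\omega)}{\omega-\omega_s}\Big)(\bbA)\,\bbP_{\rm ac}\in B(\Hxy).
\end{equation*}
For any $\bbG$ with $\bbF\bbG\in\hatH_{\rm comp}$, $\bV_{\bbR\setminus J}(t)\bbG$ reduces via \eqref{eq.adj-Four-gen} to a Bochner integral of an $\Hms$-continuous integrand (by Theorem \ref{th.Holder-dens-spec}) over a compact subset of $\bbR\setminus\sigma_{\rm exc}$, and the vector-valued Riemann--Lebesgue lemma gives convergence to $0$ in $\Hms$. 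For a general $\bbG\in\Hps$, I would approximate $\bbP_{\rm ac}\bbG$ in $\Hxy$ by such $\bbG_n=\bbF^*\hat{U}_n$ with $\hat{U}_n\in\hatH_{\rm comp}$ (using density of $\hatH_{\rm comp}$ in $\hatH$ together with the isometry of $\bbF$ on the range of $\bbP_{\rm ac}$), and absorb the error via
\begin{equation*}
\|\bV_{\bbR\setminus J}(t)(\bbG-\bbG_n)\|_{\Hms} \;\leq\; C\,\|\bV_{\bbR\setminus J}(t)(\bbG-\bbG_n)\|_{\Hxy} \;\leq\; C\,\rho^{-1}\,\|\bbG-\bbG_n\|_{\Hxy},
\end{equation*}
which relies on the continuous embedding $\Hxy\hookrightarrow\Hms$ and the $\Hxy$-boundedness of $\mathbb{B}$.

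The main obstacle is precisely this global part: since $\omega\mapsto\bbM_\omega\bbG$ need not lie in $L^1(\bbR;\Hms)$ (the spectral density may grow at infinity as $|k|$-sections of the spectral zones widen), direct Riemann--Lebesgue applied to the original unbounded integral is not available. The functional-calculus reformulation combined with the density argument circumvents this, and explains why the convergence must be measured in the weaker $\Hms$-topology: the approximants $\bbG_n$ live only in $\Hxy$, not in $\Hps$, because the generalized eigenfunctions $\bbW_{k,\omega,j}$ used to build them are merely bounded and do not decay at infinity.
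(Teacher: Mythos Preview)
Your treatment of the local piece $\bV_J(t)$ is correct and coincides with the paper's argument: the same add-and-subtract around $\bbM_{\omega_{\mathrm s}}$, the same appeal to Theorem~\ref{th.Holder-dens-spec} to turn the difference term into a genuine $\Hms$-Bochner integral on which the vector-valued Riemann--Lebesgue lemma acts, and the same Dirichlet-integral computation for the scalar principal value.

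For the global piece $\bV_{\bbR\setminus J}(t)$ you follow the same architecture as the paper (rewrite via functional calculus, pass to a dense subclass, bound the remainder uniformly in $t$ by $\rho^{-1}$, then Riemann--Lebesgue on each approximant), but with a different truncation. The paper keeps $\bbG\in\Hps$ fixed and truncates in the spectral variable $\omega$ by setting $\bV^{\rm reg}_n(t):=\bbE(S_n)\bV^{\rm reg}(t)=\int_{S_n}(\omega-\omega_{\mathrm s})^{-1}\rme^{-\rmi(\omega-\omega_{\mathrm s})t}\,\bbM_\omega\bbG\,\rmd\omega$ for compacts $S_n\uparrow\bbR\setminus(J\cup\sigma_{\rm exc})$; the uniform remainder is $\|\bbE(S\setminus S_n)\bbP_{\rm ac}\bbG\|_{\Hxy}\to 0$ via Corollary~\ref{cor.specmes}, and each $\bV^{\rm reg}_n(t)$ is already a one-dimensional $\Hms$-Bochner integral in $\omega$. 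Your route instead approximates $\bbP_{\rm ac}\bbG$ in $\Hxy$ by $\bbF^*\hat U_n$ with $\hat U_n\in\hatH_{\rm comp}$. This is legitimate, but your description of the Riemann--Lebesgue step on the approximants is not quite right: formula~\eqref{eq.adj-Four-gen} produces \emph{two-dimensional} Bochner integrals over compact subsets of the surface zones $\zZ$ (plus, in the non-critical case, a one-dimensional integral over $\zEE$ with phase $\rme^{-\rmi\omega_\scE(k)t}$), so you still owe a Fubini step to reduce to a one-dimensional oscillatory integral in $\omega$, and a change of variables using the strict monotonicity of $\omega_\scE$ for the lineic contribution. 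Also, Theorem~\ref{th.Holder-dens-spec} concerns the operator-valued density $\omega\mapsto\bbM_\omega$, not the individual eigenfunctions; the uniform $\Hms$-bound on $\bbW_{k,\omega,j}$ you actually need is the content of Remark~\ref{rem.F}(ii). The paper's spectral truncation in $\omega$ sidesteps these complications by keeping the one-dimensional representation via $\bbM_\omega\bbG$ available throughout.
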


\begin{proof}
As in the proof of the limiting absorption principle, we  separate the Cauchy principal value at $\omega_{\mathrm{s}}$ from the limit in $\Hms$ in the definition of $\bV(t)$. Again, we choose some $\rho > 0$ small enough so $J := [\omega-\rho,\omega+\rho]$ does not intersect  $\sigma_{\rm exc}$. It leads us to decompose $\bV(t)$ in the form
\begin{align}
\bV(t) & = \rmi \, \big( \bV^{\rm sin}(t) + \bV^{\rm reg}(t) \big)
\quad\text{where}\quad \nonumber \\[4pt]
\bV^{\rm sin}(t) := \dashint_{J}   \frac{\rme^{-\rmi \, (\omega-\omega_{\mathrm{s}}) \,t}}{\omega-\omega_{\mathrm{s}}} \, \bbM_{\omega}\bbG\,\rmd \omega \quad &\text{and}\quad 
\bV^{\rm reg}(t) := \lim_{\Hms} \int_{\mathbb{R} \setminus J}   \frac{\rme^{-\rmi \, (\omega-\omega_{\mathrm{s}}) \,t}   }{\omega-\omega_{\mathrm{s}}}\, \bbM_{\omega}\bbG\,\rmd \omega.
\label{eq.def-V-reg}
\end{align}
Using the decomposition  \eqref{eq.def-V-reg}, we  prove \eqref{eq.ampl-2} by showing  successively that
\begin{equation}
\lim_{t\to +\infty} \big\| \bV^{\rm sin}(t) +  \rmi\pi\, \bbM_{\omega_{\mathrm{s}}}\,\bbG\big\|_{\Hms} = 0
\quad\text{and}\quad
\lim_{t\to +\infty} \big\| \bV^{\rm reg}(t) \big\|_{\Hms} = 0.
\label{eq.two-limits}
\end{equation}

\textbf{(i)} Let us first consider $\bV^{\rm sin}(t)$ that we rewrite as
\begin{align*}
\bV^{\rm sin}(t) = v^{\rm sin}(t) \ \bbM_{\omega_{\mathrm{s}}}\bbG \ + \  \rme^{\rmi\omega_{\mathrm{s}}\,t}\ \widetilde{\bV}^{\rm sin}(t) 
\quad & \text{where}\quad
v^{\rm sin}(t) := \dashint_{J}  \frac{\rme^{-\rmi \, (\omega-\omega) \,t}}{\omega-\omega_{\mathrm{s}}}\,\rmd \omega \quad \text{and}\\[4pt]
\widetilde{\bV}^{\rm sin}(t) := \int_J \rme^{-\rmi\omega\,t} \,\widetilde{\bV}_\omega \,\rmd\omega 
\quad & \text{with}\quad
\widetilde{\bV}_\omega := \frac{\big(\bbM_{\omega}-\bbM_{\omega_{\mathrm{s}}}\big)\bbG}{\omega-\omega_{\mathrm{s}}}.
\end{align*}
The latter integral is no longer a Cauchy principal value since $J \ni \omega \mapsto \widetilde{\bV}_\omega \in \Hms$ is Bochner integrable. Indeed,  by  H\"older continuity of $\bbM_{\omega}$ (Theorem \ref{th.Holder-dens-spec}), for all $\gamma\in\Gamma_J$,   there exists $C^\gamma_J>0$ such that
\begin{equation*}
\forall \omega \in J\setminus\{\omega\},\quad
\big\| \widetilde{\bV}_\omega \big\|_{\Hms} 
\leq C^\gamma_J \ |\omega-\omega_{\mathrm{s}} |^{-1+\gamma}\, \| \bbG\|_{\Hps}.
\end{equation*}
Thus, the Riemann-Lebesgue theorem (applied to $\Hms$-valued Bochner integrals) gives us
\begin{equation}
\lim_{t\to +\infty} \big\| \widetilde{\bV}^{\rm sin}(t) \big\|_{\Hms}= 0.
\label{eq.lim-tildeV}
\end{equation}
Besides, using the change of variable $\xi=(\omega-\omega_{\mathrm{s}})t$, we derive the limit of  $v^{\rm sin}(t) $ as $t\to+\infty$:
\begin{equation*}\label{eq.Vsing}
v^{\rm sin}(t) := \dashint_{-\rho t}^{+\rho t} \frac{\rme^{-\rmi \, \xi}}{\xi}\,\rmd \xi  \rightarrow   -\rmi \pi \quad  \mbox{ when } t\to +\infty,
\end{equation*}
where   the Cauchy principal value is at $\xi = 0$ and the limit can be shown via standard complex analysis  (see section 6.5 of \cite{Saff-02}).
Together with \eqref{eq.lim-tildeV}, this yields the first statement of \eqref{eq.two-limits}.

\textbf{(ii)} Consider now $\bV^{\rm reg}(t)$ defined in \eqref{eq.def-V-reg}. In view of formula \eqref{eq.calc-fonct-ac-lim}, we can rewrite it as
\begin{equation*}
\bV^{\rm reg}(t) = f^{\rm reg}_t(\bbA) \bbP_{\rm ac}\bbG
\quad\text{where}\quad
f^{\rm reg}_t(\omega) := \mathds{1}_{\mathbb{R} \setminus J}(\omega) \    \frac{\rme^{-\rmi \, (\omega-\omega_{\mathrm{s}}) \,t}   }{\omega-\omega_{\mathrm{s}}},
\end{equation*}
since $\omega \mapsto f^{\rm reg}_t(\omega)$ is a bounded function on $\bbR.$ This shows  that $\bV^{\rm reg}(t)$  belongs to $\Hxy$ and that the limit in \eqref{eq.def-V-reg} can be taken in $\Hxy$ instead of $\Hms$. This limit is constructed via an increasing sequence $(S_n)$ of compact subsets of $S := \mathbb{R} \setminus (J\cup\sigma_{\rm exc})$ whose union covers $S$, so that
\begin{equation}
\bV^{\rm reg}(t) = \lim_{n\to\infty} \bV^{\rm reg}_n(t)
\quad\text{where}\quad
\bV^{\rm reg}_n(t) := \bbE(S_n) \bV^{\rm reg}(t) = \int_{S_n}   \frac{\rme^{-\rmi \, (\omega-\omega_{\mathrm{s}}) \,t}   }{\omega-\omega_{\mathrm{s}}}\, \bbM_{\omega}\bbG \,\rmd \omega.
\label{eq.def-Vnreg}
\end{equation}
From the above definitions of $\bV^{\rm reg}(t)$ and $\bV^{\rm reg}_n(t)$, we have
\begin{equation*}
\bV^{\rm reg}(t) - \bV^{\rm reg}_n(t) = 
f^{\rm reg}_t(\bbA) \bbE(S \setminus S_n)\bbP_{\rm ac}\bbG,
\end{equation*}
from which we deduce that
\begin{equation*}
\left\| \bV^{\rm reg}(t) - \bV^{\rm reg}_n(t) \right\|_{\Hms} 
\leq \left\| \bV^{\rm reg}(t) - \bV^{\rm reg}_n(t) \right\|_{\Hxy}
\leq \left\|f^{\rm reg}_t\right\|_\infty \left\|\bbE(S \setminus S_n)\bbP_{\rm ac}\bbG\right\|_{\Hxy},
\end{equation*}
where $\left\|f^{\rm reg}_t\right\|_\infty = \rho^{-1}.$ Moreover we know from \eqref{eq.norm-Eac} that
\begin{eqnarray}
 \left\| \bbE(S \setminus S_n)\bbP_{\rm ac}\bbG \right\|_{\Hxy}^2=\int_\bbR \mathds{1}_{S \setminus S_n}(\omega) \, \langle\bbM_{\omega}\,  \bbG, \bbG \rangle_s\,\rmd\omega, \label{eq.intSn}
\end{eqnarray}
where   $\omega\mapsto \langle\bbM_{\omega} \bbG, \bbG\rangle_s \in L^1(\bbR)$ (see Corollary \ref{cor.specmes}). 
Hence,  by definition of $(S_n)$, \eqref{eq.intSn} tends to 0  (independently of $t$) as $n\to +\infty$ (by the  Lebesgue's dominated convergence Theorem). Thus the convergence of $\bV^{\rm reg}_n(t)$ to $\bV^{\rm reg}(t)$ is uniform in $t$ and for any given $\delta>0$,  it exists $n_\delta\in\mathbb{N}\setminus \{0\}$ such that
\begin{equation*}
\forall t \geq 0,\quad 
\left\| \bV^{\rm reg}(t) - \bV^{\rm reg}_{n_\delta}(t) \right\|_{\Hms} 
\leq \delta/2.
\end{equation*}
Moreover, as $S_{n_\delta}$ is bounded,   the Riemann-Lebesgue theorem on Bochner integrals  implies  (as in (i)) that $\bV^{\rm reg}_{n_\delta}(t)$ in \eqref{eq.def-Vnreg}  tends to 0 in $\Hms$ as $t \to +\infty.$ Thus, it exists  $T_\delta > 0$ such that
\begin{equation*}
\forall t \geq T_\delta,\quad 
\left\| \bV^{\rm reg}_{n_\delta}(t) \right\|_{\Hms} 
\leq \delta/2.
\end{equation*}
By the triangle inequality, we get that for any $\delta>0$,  it exists $T_\delta > 0$ such that $\| \bV^{\rm reg}(t) \|_{\Hms} \leq \delta$, $\forall t \geq T_\delta.$ Thus, we have proved the second statement of \eqref{eq.two-limits} and it concludes  the proof.
\end{proof}

\noindent {\bf The critical case}. We assume now that $\Oe=\Om$ and prove the asymptotic behavior \eqref{eq.lim-ampl-reson} for an excitation $\bbG \in \Hps \cap \Hxydiv$. From \eqref{eq.def-Pac}, we see that $\bbG$ can be decomposed as
\begin{equation}\label{eq.decompPac-PP}
\bbG = \bbP_{\rm ac}\bbG+ \bbP_{-\Op}\bbG + \bbP_{+\Op}\bbG.
\end{equation}
Hence the solution $\bU(t) = \phi_{\omega,t}(\bbA) \,\bbG$ to  \eqref{eq.schro} can be decomposed accordingly:
\begin{align*}
& \bU(t) = \bU_{\rm ac}(t) + \bU_{-\Op}(t) + \bU_{+\Op}(t)
\quad\text{where} \\[4pt]
& \bU_{\rm ac}(t) := \phi_{\omega,t}(\bbA)\,\bbP_{\rm ac}\bbG
\quad\text{and}\quad
\bU_{\pm \, \Op}(t) := \phi_{\omega,t}(\bbA)\,\bbP_{\pm\, \Op}\bbG.
\end{align*}

On the one hand, the asymptotic behaviour of $\bU_{\rm ac}(t)$ results from the previous lines since  \eqref{eq.decompUt} holds true by replacing  $\bU(t)$  by $\bU_{\rm ac}(t)$. We obtain
\begin{equation*}
\lim_{t\to+\infty}\Big\| \bU_{\rm ac}(t) + \rmi \, R^{+}_{\rm ac}(\omega) \, \bbG \, \rme^{-\rmi \omega t} \Big\|_{\Hms} = 0.
\end{equation*}
On the other hand, Theorem \ref{th.diagA} tells us that the operator $\phi_{\omega,t}(\bbA)$ is a multiplication by $\phi_{\omega,t}(\pm\, \Op)$ in the range of the spectral projection $ \bbP_{\pm \, \Op}$ associated to the eigenvalues $\pm \Op$. Hence, one has
\begin{equation*}
\bU_{\pm\, \Op}(t) = \phi_{\omega,t}(\pm \, \Op)\,\bbP_{\pm \, \Op}\bbG \quad  \mbox{ and this concludes the proof of  \eqref{eq.lim-ampl-reson}.}
\end{equation*}
\section{The case of a slab of dispersive media}\label{sec-slab}
\subsection{Description of the model}\label{sec-slab-model}
We consider in this section the case of two interfaces for a transmission problem between a dielectric and a metamaterial. This study  was done during the PhD thesis \cite{Ros-23} co-supervised by both authors and is the object of an article  in preparation ``M. Cassier and P. Joly and L. A. Rosas Mart\'inez, On guided waves by a slab of Drude material embedded in the vacuum''.\\[6pt]
\noindent More precisely,  we analyse  the (TE) Maxwell's equations in a medium composed of a layer of  a Drude non-dissipative material 
$$
\mathcal{L}=\{ \bx=(x,y)\in \bbR^2 \mid -L<x<L   \}
$$
embedded in the vacuum which fills the complementary open set $\bbR^2\setminus \overline{\mathcal{L}} $.\\
\begin{figure}[h!]
\centering
 \includegraphics[width=0.52\textwidth]{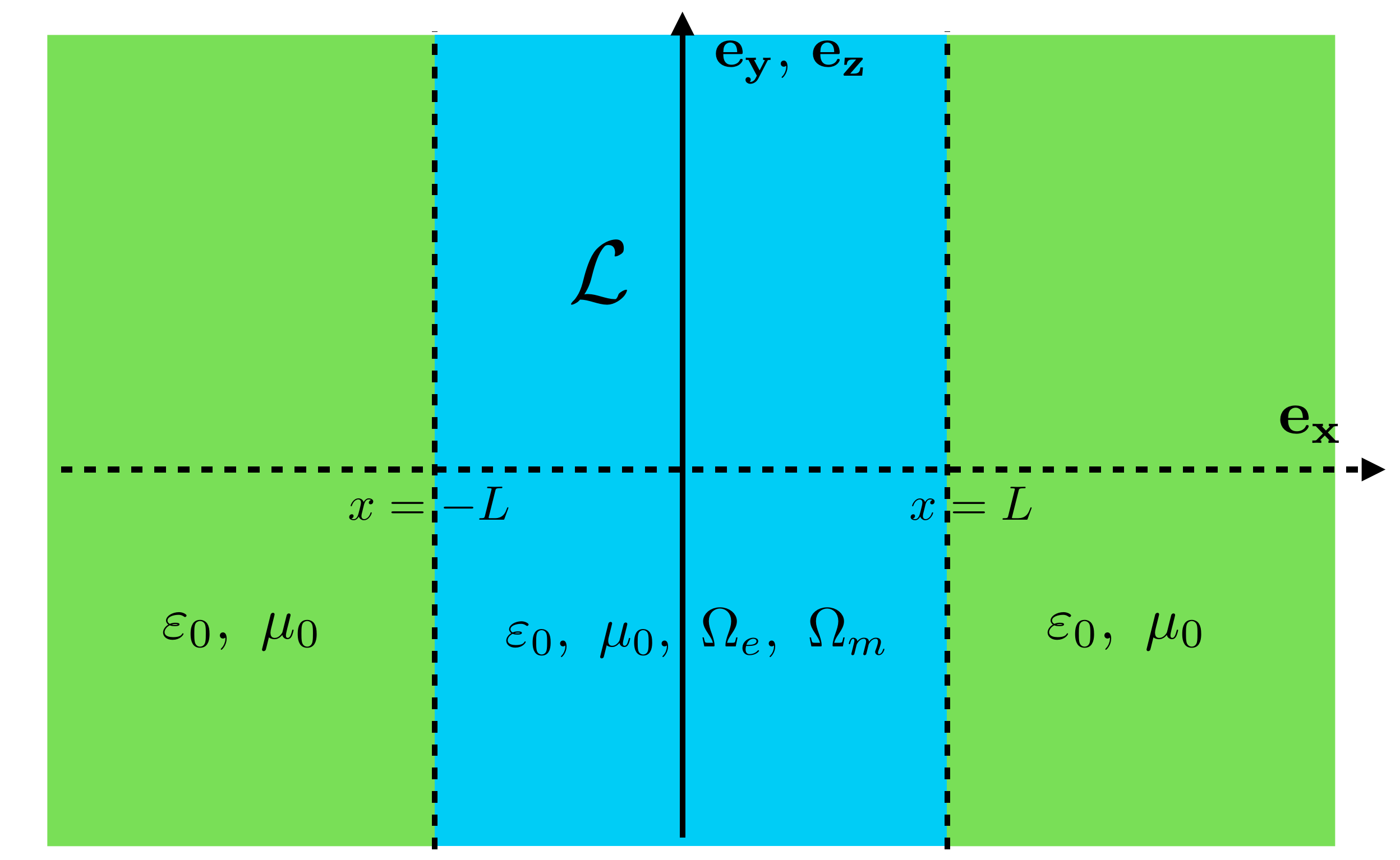}
 \caption{Slab of Drude non-dissipative material $\mathcal{L}$ of width $2L$ embedded in the vacuum.}
 \label{fig.medslab}
\end{figure}
\subsection{Common properties with the transmission problem studied in section \ref{sect-transm}}\label{sec-common-prop}
\noindent The difference with the transmission problem analysed in section   \ref{sect-transm}  relies on the fact that the non-dissipative Drude  material and the vacuum fill  respectively $\mathcal{L}$  and $\bbR^2\setminus \overline{\mathcal{L}} $ instead of the half planes
$\mathbb{R}_+^2$ and $\mathbb{R}_-^2$. Therefore,  the (TE) Maxwell's system \eqref{TE}, the evolution equation \eqref{eq.schro}, the associated Hilbert space $\mathcal{H}$ \eqref{eq.defHxy}, the  propagative self-adjoint operator $\bbA$, its domain $\rmD(\bbA)$, $\ldots$ are defined similarly  if one replaces by $\mathbb{R}_+^2$ by  $\mathcal{L}$  and   $\mathbb{R}_-^2$  by $\bbR^2\setminus \overline{\mathcal{L}} $.\\[6pt]
A similar  result as the Proposition  \ref{prop.spectrumA} holds in this new geometry.  Namely,
$$
 \{ -\Om, 0, \Om\}\in \sigma_p(\bbA),
$$
and these eigenvalues are of infinite multiplicity. They are also associated to non-propagating waves whose eigenspaces,  $\ker(\bbA)$ and $\ker(\bbA\, \pm \, \Om \bbI)$ are given by gradients supported respectively in the vacuum and  in the Drude the medium. Indeed, the formula for these eigenspaces  \eqref{eq.kernel} and  \eqref{eq.eigenspaces} and for the orthogonal of their direct sum, the space $\Hxydiv$ defined by \eqref{eq.Hxydiv} and \eqref{eq.Hxydiv2}, hold by replacing $\mathbb{R}_+^2$ by  $\mathcal{L}$  and   $\mathbb{R}_-^2$  by $\bbR^2\setminus \overline{\mathcal{L}} $.
\\[6pt]
\noindent Finally,  using the invariance of the medium  in the  $y-$direction, one reduces the dimension of the problem by applying the partial Fourier transform  \eqref{eq.deffour}. It allows to decompose the operator $\bbA$ via  \eqref{eq.AtoAk} as a direct integral of  reduced operators $(\bbA_k)_{k\in \bbR }$.  Thus,   the mathematical quantities introduced in section \ref{sec-reduced-op}:  the Hilbert space  $\Hx$,  the self-adjoints operators $\bbA_k$ for $k\in \bbR$,  their domain $\rmD(\bbA_k)$ are defined similarly  by substituting  $\bbR_+$ by $(-L,L)$. 
\subsection{Guided waves: definition and equations of propagation}\label{sec-guided-waves-def}

\noindent  In this separable geometry, we are interested to the existence of guided waves, which are waves localized in  the $x$-direction and propagating in the $y$-direction. These waves are non trivial  solutions (in the distributional sense) of  the (TE) Maxwell's system \eqref{TE}
of the form
\begin{equation}  \label{eq.guideswaves} 
\bU_{k,\omega}(x,y,t)= \wlk(x) \,  \rme^{\rmi\,  (k  \, y-\omega t)}  \  \mbox{ with } \  \wlk=(e_{k,\omega}, \bh_{k,\omega}, \dot{p}_{k,\omega}, \dot \bm_{k,\omega})^{\top}\in \Hx,
\end{equation}
for  a propagative frequency $\omega \in \bbR \setminus  \{ \pm\,  \Om, 0\}$ and a  wavenumber $k\in \bbR$ (in the $y$-direction).
One easily sees that  it is   equivalent to find an eigenfunction  $\wlk\in \ker(\bbA_k-\omega \mathrm{I})$ associated to the eigenvalue $\omega$, namely 
\begin{equation}\label{eq.systemeigenguided}
\bbA_{k}  \wlk= \omega \wlk  \ \mbox{ with }  \ \wlk\in \rmD(\bbA_k).
\end{equation}
In other words, the existence of a guided wave  at the propagative frequency $\omega\in \bbR \setminus  \{ \pm \, \Om, 0\}$ for the wavenumber $k\in \bbR$ is equivalent to the fact that $\omega\in \sigma_p(\bbA_k)\setminus  \{ \pm\, \Om, 0\}$.\\[6pt]
By eliminating the unknowns $\bh_{k,\omega}, \dot{p}_{k,\omega}, \dot \bm_{k,\omega}$,   one shows (as in section  
\ref{sec-geneigen}) that solutions $\wlk$ of \eqref{eq.systemeigenguided} are given by the  ``vectorizator'' operator $\bbV_{\omega,k}$ introduced in \eqref{eq.def-Vk}:
\begin{equation}\label{eq.funcguided}
  \wlk=\bbV_{k,\omega} e_{k,\omega},
\end{equation}
where $e_{k,\omega}\in H^1(\bbR)$ is a  solution of  the following scalar Sturm-Liouville equation:
\begin{equation}\label{eq.Esturmguided}
\displaystyle - \frac{\rmd}{\rmd x}\left( \frac{1}{\mu(\omega,\cdot )}\frac{\rmd e_{k,\omega}}{\rmd x}\right)+\frac{ \mathcal{D}_{k,\omega} }{\mu(\omega,\cdot)}\, e_{k,\omega} =0, \quad  \mbox{ with }
 \quad \mathcal{D}_{k,\omega}(x) :=k^2-\varepsilon(\omega, x)\, \mu(\omega,x)\, \omega^2
\end{equation}
with  $\eps(\omega,x)$ and  $\mu(\omega, x)$ defined by replacing $\bbR_+$ by $(-L,L)$, $\bbR_-$ by  $\bbR\setminus (-L,L)$  and $\bx$ by $x$ in \eqref{eq.defepsmu}.
As \eqref{eq.Esturmguided} is taken  in sense of distributions, it contains the transmission conditions:
\begin{equation}\label{eq.transmissionconditiontierce}
[e_{k,\omega}]_{x=\pm L}=0 \quad  \mbox{ and } \quad  \Big[  \frac{1}{\mu(\omega,\cdot)}\frac{\rmd e_{k,\omega}}{\rmd x}\Big]_{x=\pm L}=0.
\end{equation}
To analyse  the solution of the equation  \eqref{eq.Esturmguided}, we exploit the symmetry of the medium with respect to $x=0$. To this aim, we introduce the following orthogonal decomposition of $H^1(\bbR)$:
\begin{equation}\label{eq.decomp}
H^1(\bbR)=H^{1}_{\mathrm{ev}}(\bbR) \oplus H^{1}_{\mathrm{od}}(\bbR),
\end{equation}
where $H^{1}_{\mathrm{ev}}(\bbR)$ (resp. $H^{1}_{\mathrm{od}}(\bbR) $) is the space $H^1(\bbR)$ of even (resp. odd) functions.
Using the decomposition \eqref{eq.decomp}, one easily checks that for $(k,\omega)$ fixed,  solving the equation \eqref{eq.Esturmguided} in $H^1(\bbR)$ is equivalent to solve it in  $H^{1}_{\mathrm{ev}}(\bbR) $ and $H^{1}_{\mathrm{od}}(\bbR) $ separately. Thus, as the even and odd  solutions of \eqref{eq.Esturmguided}:  $ e_{k,\omega}^{\mathrm{ev}}$ and $ e_{k,\omega}^{\mathrm{od}}$  are smooth functions at $x=0$, it is equivalent by parity to know their restriction on the half line $\bbR^+$ which satisfies  respectively  for $\mathrm{p}\in \{\mathrm{od}, \, \mathrm{ev} \}$:
\begin{equation}\label{eq.Esturmguidedevod}
  \displaystyle - \frac{\rmd}{\rmd x}\left( \frac{1}{\mu(\omega,\cdot )}\frac{\rmd e_{k,\omega}^{\mathrm{p}}}{\rmd x}\right)+\frac{ \mathcal{D}_{k,\omega}}{\mu(\omega,\cdot)}\, e_{k,\omega}^{\mathrm{p}} =0, \
 \displaystyle   \ \displaystyle [e_{k,\omega}^{\mathrm{p}}]_{x= L}=0  \mbox{ and }   \Big[  \frac{1}{\mu(\omega,\cdot)}\frac{\rmd e_{k,\omega}^{\mathrm{p}}}{\rmd x}\Big]_{x= L}=0,
\end{equation}
with $e_{k,\omega}^{\mathrm{ev}}$ and $e_{k,\omega}^{\mathrm{od}}$ satisfying a Neumann and a Dirichlet boundary condition at $x=0$:
\begin{equation}\label{eq.even-odd}
\frac{\rmd e_{k,\omega}^{\mathrm{ev}}}{\rmd x}(0)=0 \quad \mbox{ and } \quad e_{k,\omega}^{\mathrm{od}}(0)=0.
\end{equation}
In this chapter, for length purpose, we voluntary  concentrate on the even case, i.e. when the component  of the electrical field is even. 
One can proceed similarly for the odd case (see \cite{Ros-23}). 


\subsection{Guided waves: localization in the $(k,\omega)$ plane and dispersion equations}\label{sec-disp-rel}

As $\mathcal{D}_{k,\omega} =\mathcal{D}_{|k|,|\omega|}$ for all $(k,\omega) \in \bbR^2$ and $\mu(\omega,\cdot)$ is even in $\omega$, we can restrict our study of the solutions  $e_{k,\omega}^{\mathrm{ev}}$ of \eqref{eq.Esturmguidedevod} and \eqref{eq.even-odd}, to the quadrant $k \geq 0$ and $\omega \geq 0.$ 
Moreover, as we are looking for $H^1(\bbR_+)$ functions on the half-line $\bbR_+$, it imposes that the guided waves are necessarily evanescent in the vacuum.  Thus, the solutions lie in the region where $\mathcal{D}_{k,\omega}^-$, defined by \eqref{eq.defTheta} is negative. This region is defined by 
\begin{equation}\label{def-mathcalN}
\Lambda:=\{  (k,\omega)\in \bbR^+\times( \bbR_{+} \setminus \{ \Om\}) \mid \omega< c \, k\}
\end{equation}
where $\bbR^+:=\bbR_+\cup \{ 0\}$ and $c:=(\varepsilon_0 \,\mu_0)^{-1/2}$ is the speed of light in the vacuum. We  split now  $\Lambda$ in two disjoint sub-regions depending on the sign of  $\mathcal{D}_{k,\omega}^+$ (defined in  \eqref{eq.defTheta}):
\begin{eqnarray*}
&& \Lambda=\Lambda^{-}\cup \,\Lambda^{+} \quad \mbox{ with } \quad  \Lambda^+:= \Lambda\setminus  \Lambda^{-}  \mbox{ and } \\[10pt]
 &&    \Lambda^-:=\left\{(k,\omega) \in \Lambda \mid  0 < \omega< \min(\Oe,\Om) \text{ and } k < k_\scI(\omega) \right\}.
\end{eqnarray*}
We notice  that  $\Lambda^-$ coincides with  $\zEI\cap (\bbR^+\times \bbR_+)$, where $\zEI$  is defined in \ref{sec-geneigen}, see figure \ref{fig.speczones1}.\\[4pt]
\noindent Physically, as  $\mathcal{D}_{k,\omega}^+<0$ in $\Lambda^{-}$,  the associated  guided waves in this region are propagative in the slab. 
 Contrariwise, in $\Lambda^{+}$, as $\mathcal{D}_{k,\omega}^+\geq 0$, the associated  guided waves are exponentials solutions in the slab.  Indeed, we will prove that these waves are localized also in a sub-region of  $\Lambda^{+}$, where the Drude material behave as negative material since  $\mu(\omega,x)$ is negative. Thus, they are  plasmonic waves which are propagating and localized at the vicinity  of the two interfaces $x=\pm L$ between the positive and the negative medium.\\[4pt]
\noindent We are looking for solutions in $H^1(\bbR_+)$ of the Sturm-Liouville problem \eqref{eq.Esturmguidedevod} with the Neumann condition \eqref{eq.even-odd}. These solutions are on the one hand $H^1-$functions, thus they are necessarily exponentially decreasing on $]L, +\infty]$ On the other hand, they need to satisfy a continuity condition at $x=L$  imposes by the first transmission condition of \eqref{eq.Esturmguidedevod}. Therefore, if non-trivial solutions of \eqref{eq.Esturmguidedevod}   exists, the space of solutions is one dimensional and a basis function can be chosen by setting arbitrary $e_{k,\omega}^{\mathrm{ev}}(0)=1$ in the following way:
\begin{equation*}
e_{k,\omega}^{\mathrm{ev}}(x)= \operatorname{cosh}(\xi_{k,\omega}^{+} x) \  \mbox{ in } \ [0,L] \quad  \mbox{ and } \quad e_{k,\omega}^{\mathrm{ev}}(x)= \operatorname{cosh}(\xi_{k,\omega}^{+} L)\ \rme^{-\xi_{k,\omega}^{-} (x-L)} \ \mbox{ in } \  \bbR^+\setminus [0,L] ,
\end{equation*}
 where for $(k,\omega)\in \Lambda$:
\begin{equation}\label{eq.xidisp}
\xi_{k,\omega}^{-}=|\mathcal{D}_{k,\omega}^-|^{\frac{1}{2}}>0 \quad  \mbox{ and } \quad \xi_{k,\omega}^{+}=\left\{ \begin{array}{lll}

&  \rmi \, |\mathcal{D}_{k,\omega}^+|^{\frac{1}{2}} \quad &  \text{ if }   (k,\omega)\in \Lambda^-  , \\[6pt]
\displaystyle 
&  |\mathcal{D}_{k,\omega}^+|^{\frac{1}{2}} \quad & \text{ if }    (k,\omega)\in \Lambda^+.
\end{array}\right.
\end{equation}
The existence of such one dimensional space rely on the fact that the basis function has to satisfy the second transmission of \eqref{eq.Esturmguidedevod}. Thus, it exists if only if $(k, \omega)\in \Lambda^-$  is a solution of the following equation (refereed as the dispersion relation for even electric guided modes):
\begin{equation}\label{eq.dispersioneven}
\displaystyle \xi^{+}_{k,\omega} \operatorname{ tanh }(\xi^{+}_{k,\omega} \, L)=-\frac{\mu_+(\omega)}{\mu_0}\, \xi^{-}_{k,\omega} .
\end{equation}

\subsection{Analysis and existence of guided waves}\label{sec-disp-cas}
We introduce the following sets associated to the solutions of the  dispersion relations  \eqref{eq.dispersioneven} in  the zones $\Lambda^{+}$ and $\Lambda^{-}$:
\begin{equation}\label{eq.dispsets}
\mathcal{D}_{\mathrm{ev}}^{\pm}=\Big\{  (k,\omega)\in \Lambda_{}^{\pm} \mid \xi^{+}_{k,\omega} \operatorname{ tanh }(\xi^{+}_{k,\omega} \, L)=-\frac{\mu_+(\omega)}{\mu_0}\, \xi^{-}_{k,\omega} \Big\}.
\end{equation}
and denote their union by 
$$
\mathcal{D}_{\mathrm{ev}}:=\mathcal{D}_{\mathrm{ev}}^+ \cup \mathcal{D}_{\mathrm{ev} }^-.
$$
\eqref{eq.dispsets} is the precisely the set of ordered pairs $(k,\omega)\in \bbR^+\times (\bbR_+\setminus \{ \Om\})$ for which the Sturm-Liouville equation  \eqref{eq.Esturmguided} admits a non trivial even solution $e_{k,\omega}$.  Thus, at a fixed  $k\in \bbR$, $(| k|, |\omega|)\in \mathcal{D}_{\mathrm{ev}} $ if  only if $\omega \in \sigma_p(\bbA_k)\setminus \{0, \pm \, \Om \}$ and there exists an eigenfunction $ \wlk=\bbV_{k,\omega} e_{k,\omega}$, associated to $\omega$, whose first component $e_{k,\omega}$ (related to the electrical field) is  an even function. Hence, one defines the ``even part'' of the point spectrum $\sigma_p^{\mathrm{ev}}(\bbA_k)\subset \sigma_p(\bbA_k)$ the set of  eigenvalues of $\bbA_k$ in  $\bbR\setminus \{ 0, \pm\, \Om\}$ which admits an eigenfunction $ \wlk$ whose first component $e_{k,\omega}$ is even. Thus one has the following equivalence: 
\begin{equation}\label{eq.equivalnecespectrumdipserif}
\omega \in\sigma_p^{\mathrm{ev}}(\bbA_k)\Longleftrightarrow \, (|k|, |\omega|)\in \mathcal{D}_{\mathrm{ev}}.
\end{equation}
\noindent As the dispersion curves of the dispersive slab have critical points, we introduce the following standard definition.
\begin{Def} 
Let $I$ be an open  interval of $\bbR$  and  $f:I\subset \bbR \to \bbR$ be of function of class $\mathcal{C}^3$ on $I$. Then $f$ admits a critical point if and only if it exists $t_0\in I$ such that $f'(t_0)=0$. Such a critical point at  $t_0$ is a non degenerate maximum (resp. minimum)  if $f''(t_0)<0$ (resp. $f''(t_0)>0$) and a non-degenrate inflection point if  $f''(t_0)=0$ and $f^{(3)}(t_0)\neq 0$.
\end{Def}
\noindent The results presented here are obtained  via a parametrization of the solutions of the dispersion relation of \eqref{eq.dispsets}.  Compared to the case of the dielectric slab  presented in section  \ref{classic-case} and in \cite{Wil-76}, this parametrisation is not explicit. Indeed, the dispersion of the Drude medium complicates  significantly  the analysis of the dispersion curves which is done in details in the PhD Thesis \cite{Ros-23} and will be the object of an  ongoing article.
\begin{Thm}\label{thm.omegan}
One has 
\begin{equation}\label{eq.solutionsetsdisp}
\mathcal{D}_{\mathrm{ev}}^{-}=\bigcup_{n \in \mathbb{N}^*} \mathcal{C}_{n}, 
\end{equation}
where the  dispersion curves $\mathcal{C}_n$ are defined for all integer  $n\geq 1$ by
\begin{equation}\label{eq.kappaseuil}
\mathcal{C}_n=\{ (k, \omega_n(k))  \mid k> \kappa_n \} \ \mbox{ where} \  \kappa_n:= L \, \frac{\Om \Oe}{c^2} \, \big((\pi \, n)^2 + \frac{L^2}{c^2}\Big(\Om^2+ \Oe^2 \big) \Big)^{-\frac{1}{2}}
\end{equation}
is decreasing  to $0$ with  $n$ and $\omega_n :[\kappa_n,+\infty)\to \bbR^+$ for $n\geq 2$ satisfies the following properties:
\begin{enumerate}
\item $\omega_n$ is $\mathcal{C}^{\infty}$ on $(\kappa_n,+\infty)$ and  at least $C^1$ at $\kappa_n$ and $\omega_{n+1}<\omega_n$,
\item $\exists \,  \kappa_{\mathrm{cr},n}>\kappa_n $ such that $\omega_n$ is strictly increasing  on $[\kappa_n, \kappa_{\mathrm{cr},n} )$  (with $\omega_n'>0$), strictly decreasing on $( \kappa_{\mathrm{cr},n} ,+\infty)$  (with $\omega_n'<0$) and  $\omega_n( \kappa_{\mathrm{cr},n})$ is a  non-degenerate  maximum.  
\item $\omega_n(\kappa_n)=c \, \kappa_n\to 0$ as $n\to +\infty$ and $\omega_n'(\kappa_n)=c$. 
\item $\omega_n(k)=  \displaystyle   \Om \, \Oe  \, c^{-1} \, k^{-1} -a_n k^{-3}+o(k^{-3}) \mbox{ as} \, k\to +\infty$,  with $a_n>0$ satisfying $a_n<a_{n+1}$.
\end{enumerate}
 \end{Thm}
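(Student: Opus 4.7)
The plan is to reduce the dispersion relation \eqref{eq.dispersioneven} in the region $\Lambda^-$ to a real scalar equation. By \eqref{eq.xidisp}, on $\Lambda^-$ we have $\xi^{+}_{k,\omega}=\rmi\sqrt{|\mathcal{D}_{k,\omega}^+|}$, so using $\tanh(\rmi\theta)=\rmi\tan\theta$, \eqref{eq.dispersioneven} becomes
\begin{equation*}
\sqrt{|\mathcal{D}_{k,\omega}^+|}\,\tan\!\Big(\sqrt{|\mathcal{D}_{k,\omega}^+|}\,L\Big) = \frac{\mu^{+}(\omega)}{\mu_0}\,\sqrt{|\mathcal{D}_{k,\omega}^-|}.
\end{equation*}
Setting $\alpha(k,\omega):=L\sqrt{|\mathcal{D}_{k,\omega}^+|}\geq 0$, we note that $\mu^{+}(\omega)<0$ on $\Lambda^-$ and $\sqrt{|\mathcal{D}_{k,\omega}^-|}>0$ in its interior, so the right-hand side is strictly negative. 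Solutions therefore live on the disjoint branches $\alpha\in\big((n-\tfrac12)\pi,\,n\pi\big)$ indexed by $n\geq 1$, which naturally yields the decomposition $\mathcal{D}_{\mathrm{ev}}^-=\bigcup_{n\geq 1}\mathcal{C}_n$.

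Next I would apply the implicit function theorem on each branch to
\begin{equation*}
F(k,\omega) := \alpha(k,\omega)\,\tan\!\big(\alpha(k,\omega)\big) - \frac{\mu^{+}(\omega)}{\mu_0}\,L\sqrt{|\mathcal{D}_{k,\omega}^-|},
\end{equation*}
after checking $\partial_\omega F\neq 0$ along $\{F=0\}$ by a direct computation, using that $\tan'\alpha=1+\tan^2\alpha>0$ on a branch. The threshold $\kappa_n$ is identified by noting that $\mathcal{C}_n$ can only reach the boundary of $\Lambda^-$ on the light line $\omega=ck$: there $\sqrt{|\mathcal{D}_{k,\omega}^-|}=0$, forcing $\tan\alpha=0$ and hence $\alpha=n\pi$. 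A direct calculation gives $\alpha(k,ck)^2=L^2\big(\Oe^2\Om^2/(c^4k^2)-(\Oe^2+\Om^2)/c^2\big)$, and solving $\alpha(\kappa_n,c\kappa_n)=n\pi$ yields precisely \eqref{eq.kappaseuil}; the fact that $\kappa_n\searrow 0$ is then immediate. For $\omega_n'(\kappa_n)=c$, I would use a boundary Taylor expansion: writing $\alpha=n\pi-\eta$ and $\delta=ck-\omega$, the leading balance in $F=0$ reads $-n\pi\,\eta\sim C\sqrt{\delta}$ with $C>0$, which forces $\delta=o(k-\kappa_n)$, i.e.\ $\mathcal{C}_n$ is tangent to the light line. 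This in turn controls the square-root singularity of $\sqrt{|\mathcal{D}_{k,\omega}^-|}$ on the curve and yields the claimed $C^1$ regularity at $\kappa_n$.

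For item (ii), the implicit function theorem gives $\omega_n'(k)=-\partial_k F/\partial_\omega F$, and a sign analysis of $\partial_k F$ along the curve shows that $\omega_n'$ vanishes at at most one point of $(\kappa_n,+\infty)$. Combined with $\omega_n'(\kappa_n)=c>0$ and the asymptotic $\omega_n(k)\to 0$ as $k\to+\infty$ from item (iv), this forces the existence of a unique critical point $\kappa_{\mathrm{cr},n}$, with $\omega_n$ strictly increasing on $[\kappa_n,\kappa_{\mathrm{cr},n})$ and strictly decreasing on $(\kappa_{\mathrm{cr},n},+\infty)$; non-degeneracy of the maximum follows by computing the second derivative of $F$ along the curve at this point and showing a definite sign. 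The ordering $\omega_{n+1}<\omega_n$ at fixed $k$ becomes transparent from an $\alpha$-parametrization of the curves: at fixed $\omega$, the $n$-th branch is constrained to a larger value of $\alpha$, hence a larger $|\mathcal{D}_{k,\omega}^+|$, and balancing $F=0$ then pushes $\omega$ down for larger $n$.

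Finally, for the asymptotics of item (iv), as $k\to+\infty$ one must have $\omega_n\to 0$ (otherwise the two sides of the dispersion equation cannot balance), so $\eps^{+}(\omega)\mu^{+}(\omega)\omega^2\sim \eps_0\mu_0\,\Oe^2\Om^2/\omega^2$ and $\alpha^2\sim L^2\big(\Oe^2\Om^2/(c^4\omega^2)-k^2\big)$. Forcing $\alpha\to n\pi$ from below (the only way to keep $\alpha\tan\alpha$ finite against the divergent right-hand side) gives $\omega_n(k)\sim \Oe\Om/(ck)$, and pushing the expansion one order further using $\tan(n\pi-\eta)=-\eta-\eta^3/3+\cdots$ produces the $k^{-3}$ correction $a_n>0$, with $a_n<a_{n+1}$ because higher branches allow a smaller $\eta$-correction. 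I expect the hardest step to be the uniqueness and non-degeneracy of $\kappa_{\mathrm{cr},n}$: unlike the other items, this is not forced by a simple monotonicity or boundary-limit argument and requires an explicit second-order analysis of $F$ along each $\mathcal{C}_n$.
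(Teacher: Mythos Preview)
The paper does not actually prove this theorem in the text; it states that the argument is ``done in details in the PhD Thesis \cite{Ros-23}'' and will appear in a forthcoming article. What the paper does say is that the proof proceeds via ``a parametrization of the solutions of the dispersion relation'' which, unlike the dielectric case of \cite{Wil-76}, ``is not explicit''. Looking at the companion Theorem~\ref{thm.omega0}, where auxiliary functions $\alpha_0(\tau),\beta_0(\tau),\mathcal{S}_{0,\boldsymbol{\Omega}_{\mathrm m},\rho}(\tau)$ of a parameter $\tau$ are introduced, the intended method is clearly to parametrize each curve by the variable $\tau=\xi^{+}_{k,\omega}L$ (your $\alpha$) and to express both $k$ and $\omega$ as functions of $\tau$ on each branch. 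Your plan is different: you work directly in $(k,\omega)$ and invoke the implicit function theorem on $F(k,\omega)=0$. This is a legitimate alternative, and your identification of the branches $\alpha\in\big((n-\tfrac12)\pi,n\pi\big)$, the computation of $\kappa_n$ from $\alpha(\kappa_n,c\kappa_n)=n\pi$, and the tangency $\omega_n'(\kappa_n)=c$ are all sound. The parametric route buys a cleaner handle on the monotonicity question (item~(ii)): studying $k(\tau)$ and $\omega(\tau)$ separately, rather than the ratio $-\partial_kF/\partial_\omega F$, is what makes the non-degeneracy of the unique maximum tractable, and is almost certainly how \cite{Ros-23} settles the point you correctly flag as hardest.

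There is, however, a genuine error in your treatment of item~(iv). As $k\to+\infty$ with $\omega_n(k)\to 0$, the right-hand side $\dfrac{\mu^{+}(\omega)}{\mu_0}\sqrt{|\mathcal{D}^{-}_{k,\omega}|}\sim -\dfrac{\Om^2}{\omega^2}\,k$ diverges to $-\infty$. For the left-hand side $(\alpha/L)\tan\alpha$ to match, $\tan\alpha$ must blow up, which forces $\alpha\to(n-\tfrac12)\pi^{+}$, \emph{not} $\alpha\to n\pi^{-}$ as you claim (at $n\pi$ one has $\tan\alpha\to 0$, the behaviour at the threshold, not at infinity). The leading order $\omega_n(k)\sim\Oe\Om/(ck)$ survives because $\alpha$ stays bounded either way, but your proposed expansion $\tan(n\pi-\eta)=-\eta-\eta^3/3+\cdots$ for the $k^{-3}$ correction is at the wrong endpoint: the correct expansion is $\tan\big((n-\tfrac12)\pi+\eta\big)=-1/\eta+\eta/3+\cdots$, which changes both the derivation of $a_n$ and the argument for $a_n<a_{n+1}$. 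This needs to be redone.
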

\noindent The following Theorem proves that $\mathcal{D}_{\mathrm{ev}}^{+}$ is made of single dispersion curve $\mathcal{C}_0$. To describe the monotonicity of the function $\omega_0$ whose graph defines $\mathcal{C}_0$, one needs to introduce the dimensionless parameters $ \rho$, $\boldsymbol{\Omega}_{\mathrm{m}}$, the function $\ \mathcal{S}_{0,\boldsymbol{\Omega}_{\mathrm{m}}, \rho}:(0,\infty)\to \bbR$ defined by
 \begin{eqnarray*}\label{eq.parameters}
 &&  \rho:=\frac{\Oe}{\Om}, \quad  \boldsymbol{\Omega}_{\mathrm{m}}= \frac{ \Om \, L}{c} \quad \mbox{ and } \quad \mathcal{S}_{0,\boldsymbol{\Omega}_{\mathrm{m}}, \rho}(\tau):=1+\frac{\beta_0(\tau)}{\boldsymbol{\Omega}^2_{\mathrm{m}}} \rho^2 \alpha_0(\tau)^{-\frac{1}{2}},\\
  \mbox{ where  } && \alpha_0(\tau)=\frac{1}{\operatorname{ tanh }(\tau) \, \big(\operatorname{tanh}(\tau)+\tau \big(1-\operatorname{tanh}(\tau)^2\big)} \ \mbox{ and } \ \beta_0(\tau)=\frac{\tau^3 \,  \big(1-\operatorname{tanh}(\tau)^2)}{\operatorname{tanh}(\tau)+\tau \big(1-\operatorname{tanh}(\tau)^2)}.
 \end{eqnarray*}
 One proves  after some tedious computations (done in \cite{Ros-23}) that the  smooth function  $ \mathcal{S}_{0,\boldsymbol{\Omega}_{\mathrm{m}}, \rho}$ admits a unique minimum at $\tau_c=\tau_c( \boldsymbol{\Omega}_{\mathrm{m}},\rho)$ on $(0,\infty)$ and that the existence and number of local extrema of  $\omega_0$ depend on  the sign  of $\rho-1 $ and the sign of  the minimum $ \mathcal{S}_{0,\boldsymbol{\Omega}_{\mathrm{m}}, \rho}(\tau_c)$.
\begin{Thm}\label{thm.omega0}
One has $
\mathcal{D}_{\mathrm{ev}}^{+}=\mathcal{C}_0,$ 
where the  dispersion curve 
$$
\mathcal{C}_0=\{ (k, \omega_n(k))  \mid k> \kappa_0 \}  \mbox{ with } \kappa_0=\kc\  \mbox {(is also given by formula \eqref{eq.kappaseuil} for $n=0$),}
$$
where the function $\omega_0:[\kappa_0, +\infty)\to \bbR^+$  is $C^{\infty}$ on $(k_0,+\infty)$   and al least $\mathcal{C}^1$ at $\kappa_0$ and satisfies 
\begin{eqnarray*}
\omega_0(\kappa_0)&=&c \, \kappa_0, \quad \, \omega_0'(\kappa_0)=c, \ \mbox{ and } \\[4pt]
 \omega_n(k) &= &   \Op + a_{\mathrm{\infty}} \, k^{-2} +O(k^{-4}) \mbox{ as} \, k\to +\infty  \ \mbox{ with } \ a_{\mathrm{\infty}}= \displaystyle \frac{\Om^3 }{8 c^2 \,\sqrt{2}} \, (\rho^2-1) .
\end{eqnarray*}
Moreover,  concerning the monotonicity, we have the following four situations (depending on the parameter $\rho$ and  the sign of $S_{0,\rho,\Omega_m}(\tau_c)$): 
\begin{enumerate}
\item If $\rho\geq 1$, it exists $\kappa^M_{\mathrm{cr}, 0} \in (\kappa_0,+\infty)$ such that $\omega_0$ is strictly increasing  on $[\kappa_0,\kappa_{\mathrm{cr}, 0}^M \ )$ (with $\omega_0'>0$), strictly decreasing on $(\kappa_{\mathrm{cr}, 0}^M\ ,+\infty)$ (with $\omega_0'<0$)  and $\omega_0(\kappa_{\mathrm{cr}^M, 0})$ is a non degenerate maximum of $\omega_0$.
\item If $\rho<1$ and 
\begin{enumerate}
\item if $S_{0,\rho,\Omega_m}(\tau_c)>0$ then   $\omega_0$ is  strictly increasing on $(\kappa_0,+\infty)$ with $\omega_0'>0$.
\item if $S_{0,\rho,\Omega_m}(\tau_c)=0$ then $\omega_0$ is strictly increasing on $(\kappa_0,+\infty)$ and there exists a  unique non-degenerate inflection  point at $\kappa_{\mathrm{cr}, 0}^{\mathrm{i}}$  with $\omega_0'>0$ on $(\kappa_0,+\infty)\setminus \{\kappa_{\mathrm{cr}, 0}^{\mathrm{i}} \}$.
\item  if $S_{0,\rho,\Omega_m}(\tau_c)<0$ then  $\omega_0$ has two critical points: a non degenerate  minimum  $\kappa_{\mathrm{cr}, 0}^{m}$ and a non degenerate maximum  $\kappa_{\mathrm{cr}, 0}^{M}$ satisfying $\kappa_0<\kappa_0^M<\kappa_{\mathrm{cr}, 0}^{m}$ so that $\omega_0$ is strictly increasing (resp. strictly decreasing) on $[0,\infty)\setminus (\kappa_0^M, \kappa_0^m) $ (on $(\kappa_0^M, \kappa_0^m) $).
\end{enumerate}
\end{enumerate}
 \end{Thm}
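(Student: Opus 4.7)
\bigskip

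\noindent\textbf{Proof proposal for Theorem \ref{thm.omega0}.}

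\emph{Setting up the implicit equation.} The plasmonic regime $\Lambda^+$ is characterized by $\xi^{+}_{k,\omega}>0$ real, so from \eqref{eq.dispersioneven} the left-hand side $\xi^+_{k,\omega}\tanh(\xi^+_{k,\omega}L)>0$. A solution therefore forces $-\mu^+(\omega)/\mu_0>0$, i.e.\ $\omega\in(0,\Omega_m)$. The strategy is to view \eqref{eq.dispersioneven} as $F(k,\omega)=0$ with
\[
F(k,\omega):=\xi^+_{k,\omega}\tanh(\xi^+_{k,\omega}L)+\frac{\mu^+(\omega)}{\mu_0}\,\xi^-_{k,\omega},
\]
and to apply the implicit function theorem. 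Equivalently, introducing the parameter $\tau=\xi^+_{k,\omega}L\geq 0$ and substituting the relation
\[
(\xi^+_{k,\omega})^2-(\xi^-_{k,\omega})^2=\frac{1}{c^2}\Bigl(\Omega_e^2+\Omega_m^2-\frac{\Omega_e^2\Omega_m^2}{\omega^2}\Bigr),
\]
the dispersion equation \eqref{eq.dispersioneven}, after squaring, yields an implicit relation between $\tau$ and $u:=\Omega_m^2/\omega^2$:
\[
(1-u)^{-2}\tanh^2(\tau)=1-\frac{\boldsymbol{\Omega}_m^2}{\tau^2}\bigl[\rho^2(1-u)+1\bigr].
\]
The first step is to invert this, obtaining a smooth function $\tau\mapsto\omega(\tau)$ on $(0,+\infty)$, together with $k(\tau)=\sqrt{\tau^2/L^2+\varepsilon^+(\omega(\tau))\mu^+(\omega(\tau))\omega(\tau)^2}$. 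This gives a global smooth parametrization $\tau\mapsto(k(\tau),\omega(\tau))$ of $\mathcal{D}^+_{\mathrm{ev}}$, from which uniqueness of $\omega_0$ as a function of $k$ follows once monotonicity of $k(\tau)$ is established.

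\emph{Endpoint and asymptotic analysis.} For the starting point, one sends $\tau\searrow 0$: then $\xi^+\to 0$, both sides of \eqref{eq.dispersioneven} vanish, which forces $\xi^-\to 0$ as well, i.e.\ $\omega\to c\, k$, and the condition $\xi^+=\xi^-=0$ selects precisely the cross point $(\kappa_c,\Omega_c)$. Expanding $\tanh(\tau)=\tau-\tau^3/3+\cdots$ and matching the first nontrivial order recovers $\omega_0(\kappa_0)=c\kappa_0$ together with $\omega_0'(\kappa_0)=c$. The value $\kappa_0=\kappa_c$ coincides with the formula in \eqref{eq.kappaseuil} for $n=0$, which can be checked directly using $\kappa_c^2=\varepsilon_0\mu_0\Omega_c^2$ and \eqref{eq.def-Op-Oc}. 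For the limit $k\to+\infty$, one lets $\tau\to+\infty$: then $\tanh(\tau)\to 1$, the dispersion relation collapses to $\xi^+/L+(\mu^+/\mu_0)\xi^-\sim 0$ at leading order, which is asymptotically the half-space plasmon relation whose unique solution tends to $\Omega_p=\Omega_m/\sqrt 2$. A careful Taylor expansion of both sides in $1/\tau$ (and hence in $1/k$) then produces the announced expansion $\omega_0(k)=\Omega_p+a_\infty k^{-2}+O(k^{-4})$, with the explicit constant $a_\infty$ computed from the second-order terms and the prefactor $\rho^2-1$ arising naturally from the difference between $\varepsilon^+(\Omega_p)$ and $\varepsilon_0$.

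\emph{Monotonicity and the role of $\mathcal{S}_{0,\boldsymbol{\Omega}_m,\rho}$.} The sign of $\omega_0'(k)$ is governed by $-\partial_k F/\partial_\omega F$. Differentiating the parametric relation, a direct (though lengthy) computation shows that the critical points of $\omega_0$ (i.e.\ the zeros of $k'(\tau)$ along the curve) are precisely the zeros of a function proportional to $\mathcal{S}_{0,\boldsymbol{\Omega}_m,\rho}(\tau)$ on $(0,+\infty)$. The functions $\alpha_0$ and $\beta_0$ emerge as the algebraic combinations of $\tanh(\tau)$ and $\tau(1-\tanh^2\tau)$ that appear when differentiating the squared dispersion relation. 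One next studies $\mathcal{S}_{0,\boldsymbol{\Omega}_m,\rho}$ as a function of $\tau$: its unique minimum at $\tau_c$ is obtained by checking that $\tau\mapsto\beta_0(\tau)\alpha_0(\tau)^{-1/2}$ is strictly convex (or at least unimodal) via the sign of its derivative. The four cases of item~2 follow by discussing:
\begin{itemize}
\item $\rho\geq 1$: then $\mathcal{S}_{0,\boldsymbol{\Omega}_m,\rho}$ has exactly one sign change at some $\tau^M$, yielding the single non-degenerate maximum $\kappa^M_{\mathrm{cr},0}$.
\item $\rho<1$ and $\mathcal{S}_{0,\boldsymbol{\Omega}_m,\rho}(\tau_c)>0$: $\mathcal{S}_{0,\boldsymbol{\Omega}_m,\rho}>0$ throughout, so $\omega_0$ is strictly increasing.
\item $\rho<1$ and $\mathcal{S}_{0,\boldsymbol{\Omega}_m,\rho}(\tau_c)=0$: $\tau_c$ is a double zero, giving a single non-degenerate inflection point.
\item $\rho<1$ and $\mathcal{S}_{0,\boldsymbol{\Omega}_m,\rho}(\tau_c)<0$: two simple zeros, producing a non-degenerate max followed by a non-degenerate min.
\end{itemize}

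\emph{Main obstacle.} The technical core, and what the authors defer to \cite{Ros-23}, is the detailed qualitative study of $\mathcal{S}_{0,\boldsymbol{\Omega}_m,\rho}$: establishing that $\alpha_0>0$ on $(0,\infty)$, that $\beta_0\alpha_0^{-1/2}$ is strictly unimodal with a unique minimum at $\tau_c(\boldsymbol{\Omega}_m,\rho)$, and that the zeros are all \emph{non-degenerate} (which is what guarantees the qualitative picture of the extrema rather than, e.g., inflections or higher-order degeneracies). This bootstraps from rather delicate identities satisfied by $\tanh$ and its derivatives, and this is where most of the work lies; the remaining pieces (IFT, asymptotics, cross-point matching) are comparatively routine once the parametrization is in place.
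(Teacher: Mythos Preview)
The paper does not actually prove Theorem~\ref{thm.omega0}; it only states the result, introduces the function $\mathcal{S}_{0,\boldsymbol{\Omega}_{\mathrm m},\rho}$, remarks that a parametrization of the dispersion relation is used, and defers all details to the PhD thesis~\cite{Ros-23} (and an article in preparation). So there is no proof in the paper to compare against line by line. Your overall scheme---parametrize $\mathcal{D}^+_{\mathrm{ev}}$ by $\tau=\xi^+_{k,\omega}L$, extract the endpoint and large-$k$ asymptotics by expanding $\tanh$, and reduce the monotonicity analysis to the sign pattern of $\mathcal{S}_{0,\boldsymbol{\Omega}_{\mathrm m},\rho}$---is exactly the approach the paper announces, and your identification of the ``main obstacle'' (the qualitative study of $\mathcal{S}_{0,\boldsymbol{\Omega}_{\mathrm m},\rho}$) is precisely what the authors say requires the tedious computations of~\cite{Ros-23}.

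There is, however, an internal inconsistency in your sketch. You first write that ``uniqueness of $\omega_0$ as a function of $k$ follows once monotonicity of $k(\tau)$ is established,'' and then a few lines later you identify the critical points of $\omega_0$ with ``the zeros of $k'(\tau)$ along the curve.'' These two statements cannot both be right: if $k(\tau)$ is strictly monotone (which you need for $\omega_0$ to be single-valued in $k$), then $k'(\tau)$ never vanishes, and your characterization would yield no critical points at all---contradicting the theorem. The correct identification, once $k(\tau)$ is monotone, is that critical points of $\omega_0$ correspond to zeros of $\omega'(\tau)$ (since $\omega_0'(k)=\omega'(\tau)/k'(\tau)$), and it is \emph{this} quantity whose sign should be governed by $\mathcal{S}_{0,\boldsymbol{\Omega}_{\mathrm m},\rho}$. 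This is likely a slip rather than a conceptual error, but as written the monotonicity paragraph does not hold together; you should state clearly which of $k(\tau)$, $\omega(\tau)$ is monotone along the curve and then derive the critical-point criterion consistently from that choice.
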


\subsection{Description the dispersion curves of the dispersive slab}\label{sec-disp-curves}
In this section, we provide various comments on the qualitative properties of the 
dispersion curves given by Theorems  \ref{thm.omegan} and \ref{thm.omega0}  that we illustrate with different figures (obtained via numerical simulations).
\subsubsection{General comments}\label{sec-disp-curves-genral-comments}
The dispersion curves $\mathcal{C}_n=\{(\kappa, \omega_n(k)),\mid k>\kappa_n \}$, $n\geq 0$  never cross each other since   $\omega_{n+1}< \omega_n $ and can be  divided in two subgroups.   On the one hand,  the curves $\mathcal{C}_n$ for $n\ge 1$  lie in the region $\Lambda^-$ and are  associated to guided waves which are propagative in the slab.  On the other hand,  the curve $\mathcal{C}_0$     lies in the region $\Lambda^+$  and is  related to guided modes which are  plasmonic modes (evanescent  in the slab and localized in the $x-$direction near the two interfaces $x=\pm L$). \\[6pt]
\noindent For $n\ge 1$, the curves $\mathcal{C}_n$ are associated (see figure \ref{fig.med-disp-ng2}) to even electric  guided modes $e_{k,\omega}^{\mathrm{ev}}$. Unlike in section \ref{classic-case} where the threshold $\kappa_n \to+ \infty$ as  $n \to +\infty$, here  $\kappa_n$ tends to $0$ as  $n \to +\infty$. Thus,   the even point spectrum $\sigma_p^{\mathrm{ev}}(\bbA_k)$ (defined by \eqref{eq.equivalnecespectrumdipserif} and given  by  Theorems \ref{thm.omegan} and  \ref{thm.omega0}):
$$\sigma_p^{\mathrm{ev}}(\bbA_k)=\{ \pm \,  \omega_n (|k|), \,  n  \in \mathbb{N}  \mid \kappa_n<|k| \}, \quad \forall k\in \bbR^*$$ is an infinite set, since  as $\kappa_n \to 0$, there are an infinite  number of indices $n$ for which  $\kappa_n<|k|$.\\
As in section \ref{classic-case}, the  curve $\mathcal{C}_n$ (see figure \ref{fig.med-disp-ng2}) are tangent to the line $\omega= k \, c$ at $\kappa=\kappa_n$ since $\omega_n'(\kappa_n)=c$. However, unlike in section \ref{classic-case} where $\omega_n$ is strictly increasing and $\omega_n(k)\to +\infty$ as $k\to+\infty$,   $\omega_n$ admits here  a unique global maximum at $\kappa_{\mathrm{cr},n}$ and decay to $0$ (see  Theorem \ref{thm.omegan}) as $k\to+\infty$.  In particular, as $\omega_n'>0$ on $(\kappa_{n}, \kappa_{\mathrm{cr},n})$ and $\omega_n'<0$ on $(\kappa_{\mathrm{cr},n}, +\infty)$, the associated guided modes  (in the slab) are forward modes  for $k\in (\kappa_{n}, \kappa_{\mathrm{cr},n})$ and backward mode  for $k\in (\kappa_{\mathrm{cr},n},+\infty)$ since their phase velocity  $\omega/k>0$ but the sign of their group velocity $ \omega_n'$ changes at $\kappa_{\mathrm{cr},n}$. This phenomenon does not exist for the non-dispersive slab (see section  \ref{classic-case}) where all the guides modes are forward modes   in the slab nor for the bilayered medium (analyzed in section  \ref{sect-transm}) where the modes of the spectral zone $\Lambda^-=\zEI$ are backward modes in the Drude medium (see section \ref{sec-geneigen} or section  3.2.2 of \cite{Cas-Haz-Jol-17} for more details).\\[6pt]
\noindent Concerning the curve $\mathcal{C}_0$ associated to even plasmonic waves  $e_{k,\omega}^{\mathrm{ev}}$,  it is tangent at $\kappa_0=\kappa_c$  to the line $\omega= k \, c$. This geometrical property is  not satisfied by  the curve $\omega_{\scE}$ of the bilayered medium (see the proof of Lemma 26 of  \cite{Cas-Haz-Jol-22}).\\ As for the bi-layered medium in the non-critical case, $\omega_0$ has an horizontal asymptote  in $\Op$ when $k\to +\infty$. For the bilayered medium,   $\omega_{\scE}$  is  either  strictly monotonous for $\Oe\neq \Om$ or flat in the critical case $\Oe=\Om$. Here, the situation is different, since there are four different scenarios  depending   on the parameters $\rho$ and $\boldsymbol{\Omega}_{\mathrm{m}}$.  $\mathcal{C}_0$ is either strictly increasing  or  admits one or two critical points which can be a non degenerate minimum, maximum or   inflection point (see Theorem \ref{thm.omega0} and figures \ref{fig.med-disp-0-case-1-2} and \ref{fig.med-disp-0-case-3-4}).  For the odd case (defined  in  the equations \ref{eq.Esturmguidedevod} and \ref{eq.even-odd}), one proves also that the dispersion  curves are not flat or even locally flat (see \cite{Ros-23}).  This is really different from the  critical  case $\Oe=\Om$ of the bi-layered medium where  the  dispersion curves  are  given by the constant functions  $\pm \, \Op$ for $|\bk|>\kc$ (see figure \ref{fig.speczones1}). This implies that the propagative frequencies  $\pm \, \Op \in \sigma_p(\bbA) $ and are resonances  of the system for  which the fields  blow up linearly in time.  Here, for the dispersive slab,  as $(\bbR\setminus \{0, \pm \, \Om\})\cap \sigma_p(\bbA)=\varnothing$, there is no interface resonance with  linear explosion  in time.

\begin{figure}[h!]
\centering
 \includegraphics[width=0.75\textwidth]{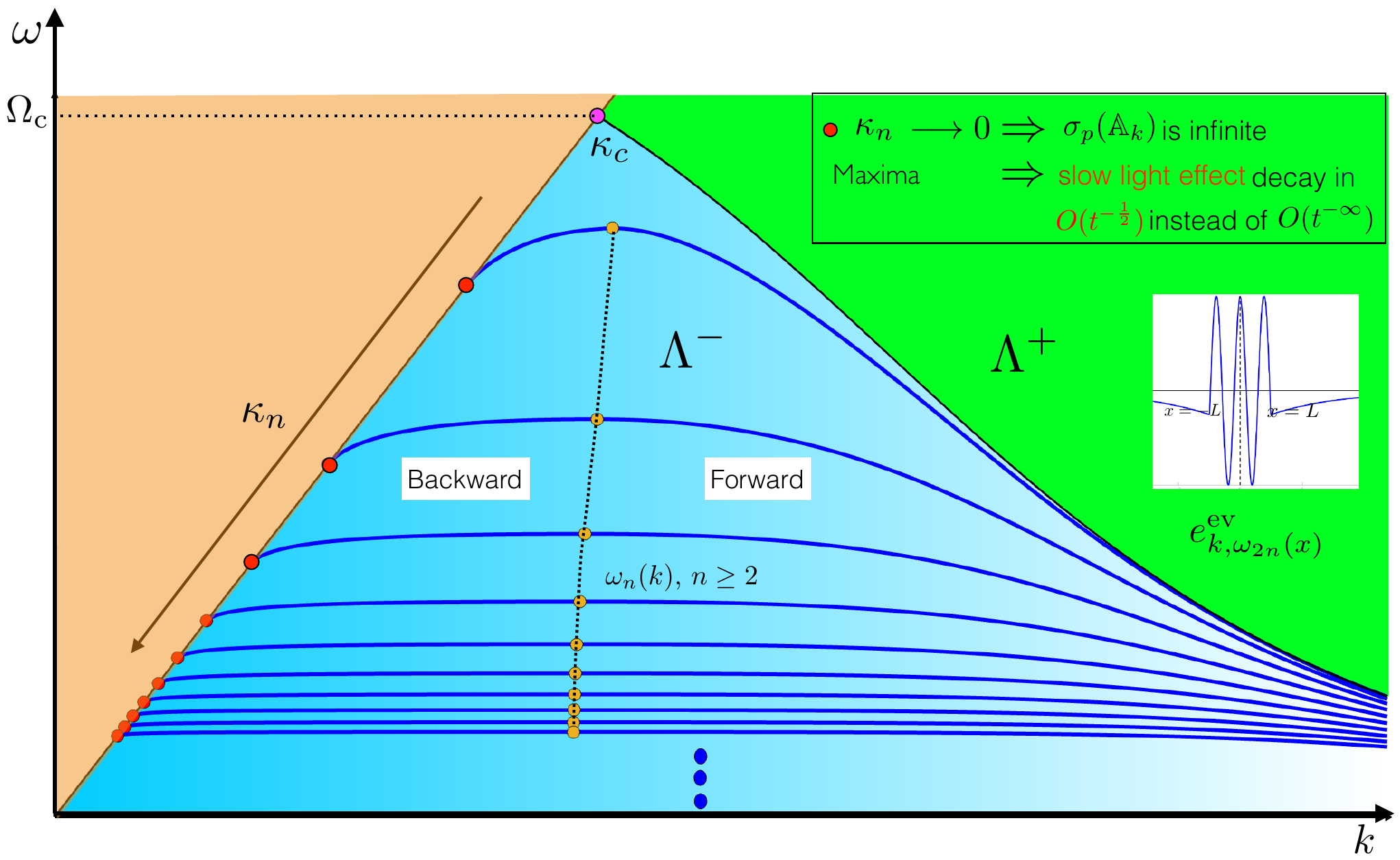}
 \caption{Dispersion curves $\mathcal{C}_n$ for $n\geq 2$ for a slab  of width $2L$ of  dispersive non-dissipative Drude medium  embedded in the vacuum, corresponding to the figure \ref{fig.medslab}.}
 \label{fig.med-disp-ng2}
\end{figure}

\begin{figure}[h!]
\centering
 \includegraphics[width=0.495\textwidth]{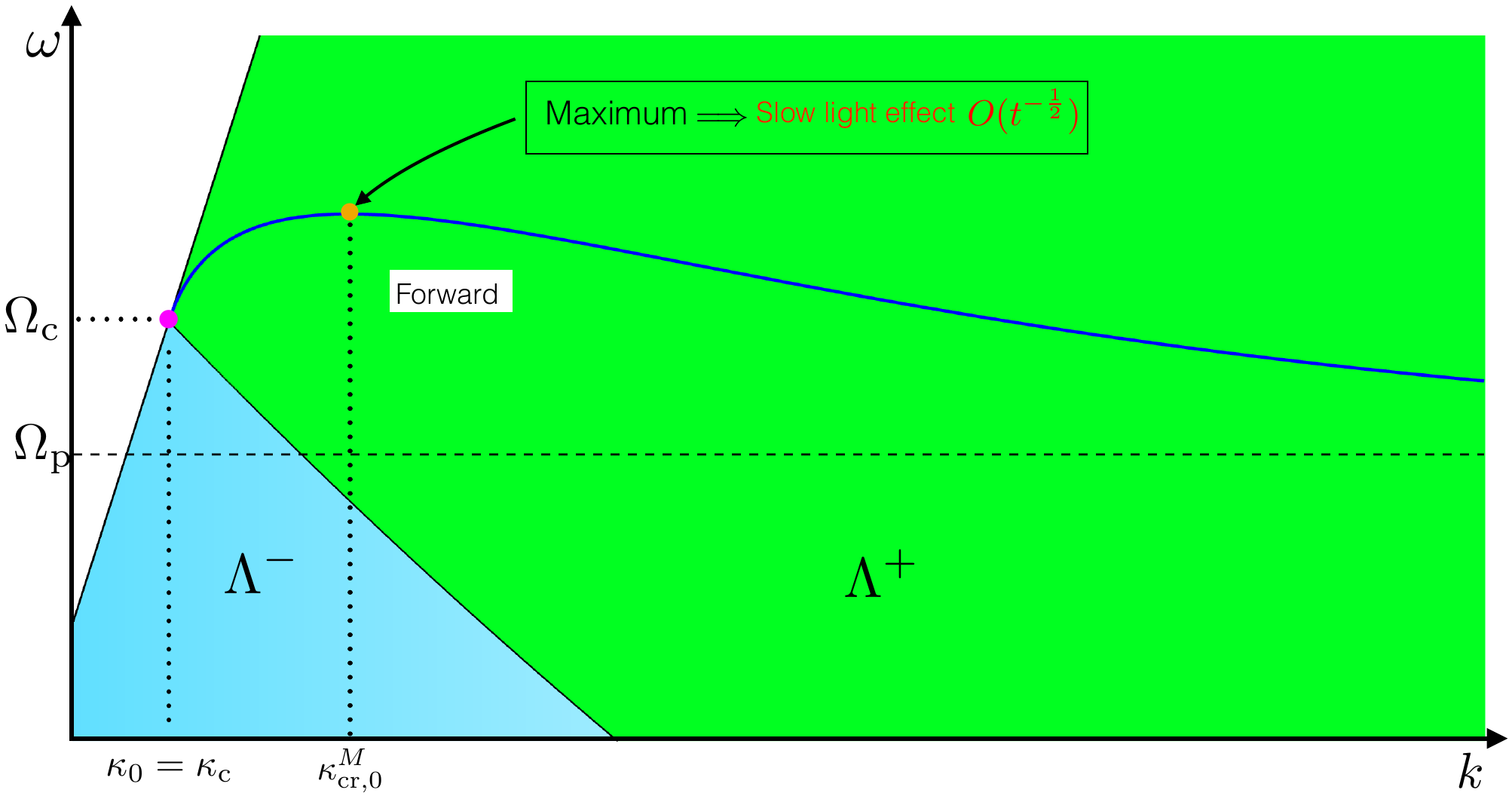}
  \includegraphics[width=0.495\textwidth]{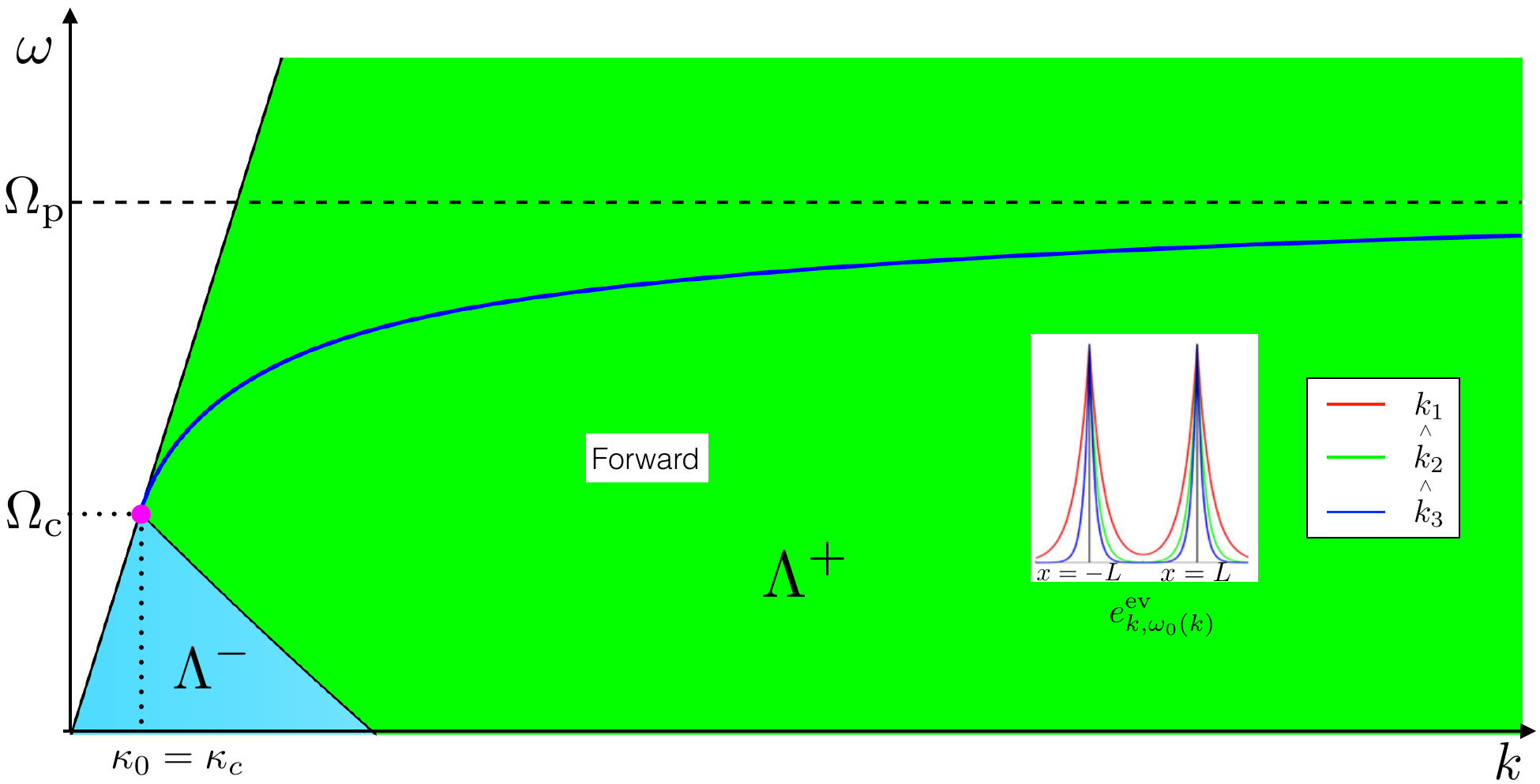}
 \caption{Dispersion curve $\mathcal{C}_0$  when $\rho\geq 1$ (left) and  $\rho<1$ (left) $S_{0,\rho,\Omega_m}(\tau_c)>0$ (right).}
 \label{fig.med-disp-0-case-1-2}
\end{figure}
\begin{figure}[h!]
\centering
 \includegraphics[width=0.495\textwidth]{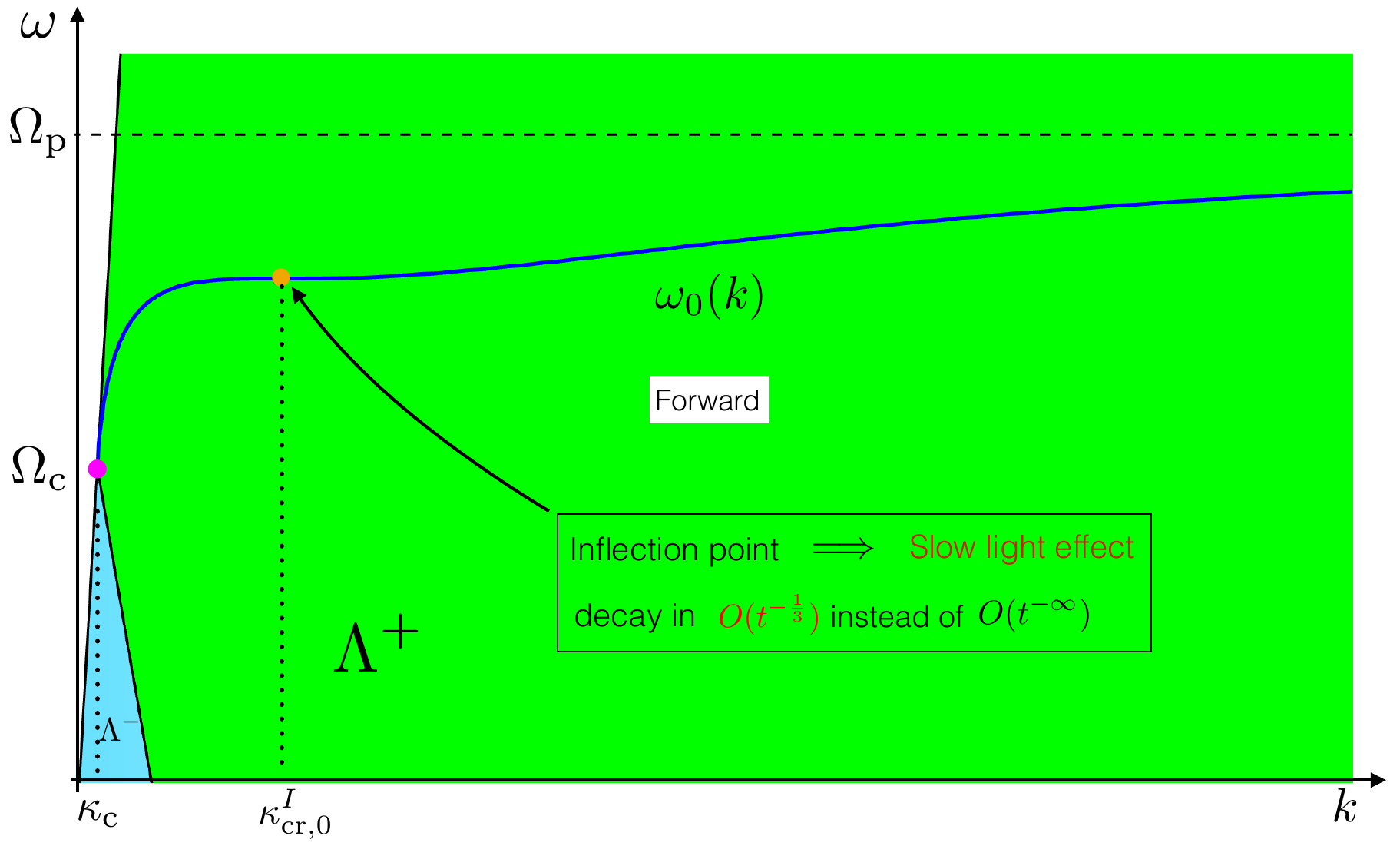}
  \includegraphics[width=0.495\textwidth]{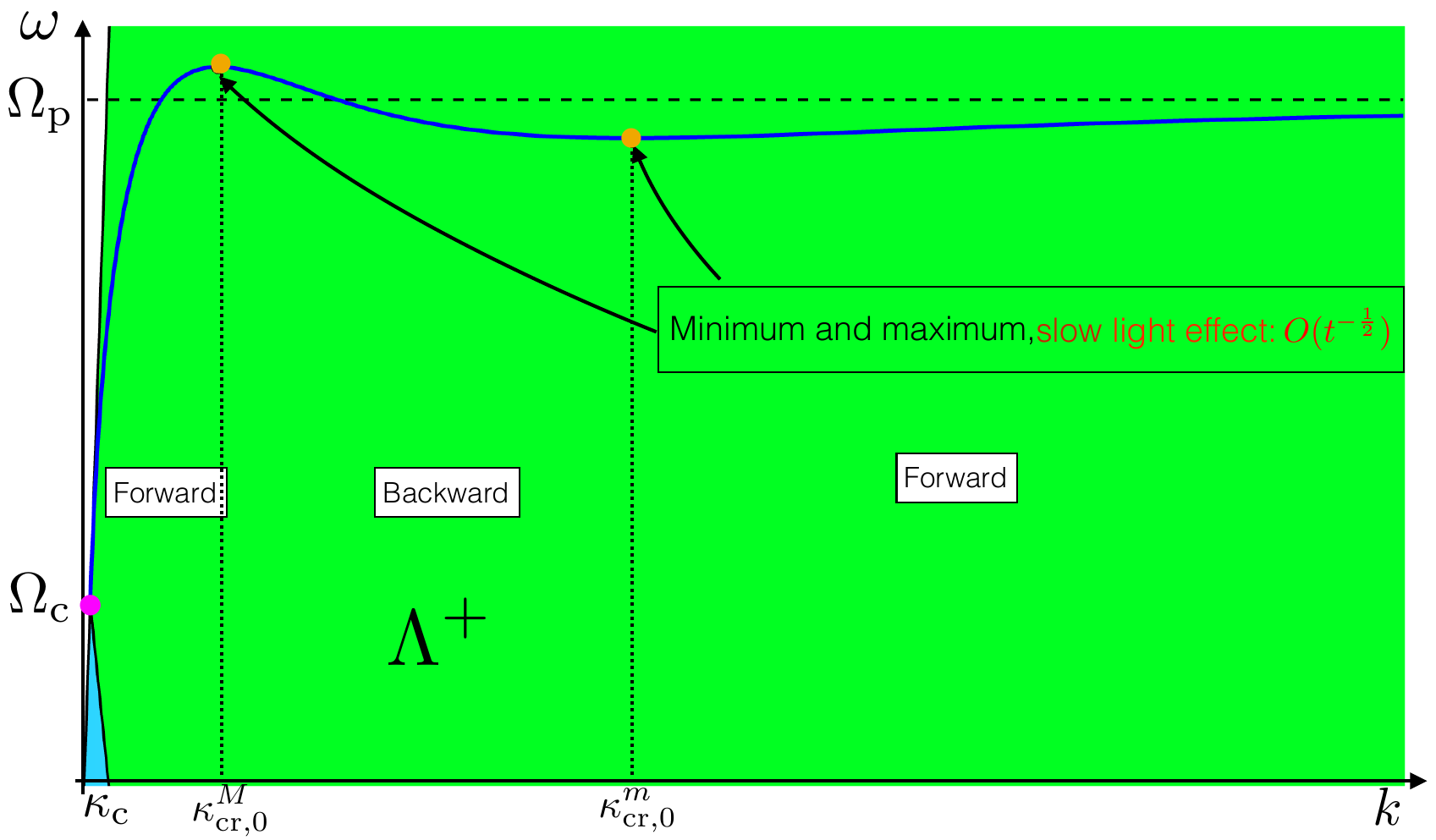}
 \caption{Curve $\mathcal{C}_0$  when $\rho<1$ and $S_{0,\rho,\Omega_m}(\tau_c)=0$ (left) and $\rho<1$ and $S_{0,\rho,\Omega_m}(\tau_c)<0$ (right).}
 \label{fig.med-disp-0-case-3-4}
\end{figure}

\subsubsection{Existence of strong guiding effect}\label{sec-disp-curves-slow-light}
In wave propagation phenomena, the group velocity is related to the  speed of  propagation of wave packets in the  medium.
 A strong particularity of the dispersion curves $\mathcal{C}_{n}$ for $n\geq 0$ of the dispersive slab is the existence of  critical points   $(\kappa_{\mathrm{cr},n},\omega'(\kappa_{\mathrm{cr},n}) )\in \mathcal{C}_n$ where the group velocity $\omega'(\kappa_{\mathrm{cr},n})$ vanishes (see Theorems \ref{thm.omegan} and  \ref{thm.omega0}). 
 Such  points do not exists for  instance for the non-dispersive  dielectric  slab (since $\omega_n'>0$  for $n \in \mathbb{N}$, see  section \ref{classic-case}, Proposition \ref{curves-prop}). \\[6pt]
The existence of critical points $\kappa_{\mathrm{cr},n}$ on the dispersion curve  $\mathcal{C}_n$ is responsible of stronger guiding effect than in  standard situation where they do not exist. This is traduced physically by  a slower decay for large time of the modulus of  the solution associated to a wave packet of guided waves localized in the Fourier space on the  dispersion curve $\mathcal{C}_n$ at the vicinity  the frequency $\omega_n(\kappa_{n,\mathrm{cr}})$. \\[6pt]More precisely,  one considers the Cauchy evolution problem:
\begin{equation}\label{eq.Cauchy}
\frac{\rmd \, \bU}{\rmd\, t} + \rmi\, \bbA \, \bU=0   \quad  \mbox{ with   } \bU(0)=\bU_0  
\end{equation}
where the initial condition is composed of a wave packet of guided waves on the  dispersion  curve $\mathcal{C}_n$  (for some $n\geq 0$)  defined for $(x,y)\in \bbR^2$ by
\begin{equation}\label{eq.IC}
 \bU_0(x,y):=    \int_{k>\kappa_n} \chi_n(k) \, \wlk(x)   \  \rme^{\rmi\,  k   y}\,  \rmd k   \ \mbox{ with } \chi  \geq 0 \mbox{ and }  \chi \in D\big([\kappa_n,+\infty)\big)
\end{equation}
(where $ D\big((\kappa_n,+\infty)\big)$ is the space of $\mathcal{C}^{\infty}-$smooth compactly supported functions in $(\kappa_n,+\infty)$). Moreover the envelope  function $\chi_n$ is chosen such that if the dispersion curve $\mathcal{C}_n$ admits a critical points $\kappa_{\mathrm{cr}, n}$, then  $\chi_n(\kappa_{\mathrm{cr}, n})\neq 0$.  For the particular case where  $n=0$, $\rho<1$  and $\mathcal{S}_{0,\boldsymbol{\Omega}_{\mathrm{m}}, \rho}(\tau_c)<0$, the dispersion curves  $\mathcal{C}_0$ has two critical points   $\kappa_{0}^M$ or  $\kappa_{0}^m$ (see Theorem \ref{thm.omegan}),  thus  one assumes also that the the support of $\chi_0$ contains only one of these  critical  points.\\[6pt]
One can show that the solution of \eqref{eq.IC} is given by 
\begin{equation}\label{eq.sol}
\bU(t,x,y)=\rme^{-\rmi \bbA \, t} \bU_0(x,y)=    \int_{k>\kappa_n} \chi_n(k) \,  \ \wlk(x)      \ \rme^{\rmi\,  (k   y -\omega_n(k) \, t) } \,  \rmd k .
\end{equation}
Thus, one can rewrite $\bU(t,x,y)$ as an oscillatory integral in $t$ of the form:
\begin{equation*}
\bU(t,x,y)= \int_{\kappa_n}^{\infty}  \mathbf{A}(x,y; k)      \ \rme^{- \rmi\,  \phi_n(k) \, t} \,  \rmd k  \mbox{ with }   \mathbf{A}(x,y;k)=  \chi(k) \,  \wlk(x)  \rme^{\rmi\,  k   y}   \  \mbox{ and } \phi_n(k)=\omega_n(k) .
\end{equation*}
Using a stationary phase result (see e.g. \cite{Cri-92} pp. 131 to 133 and  proposition 3 pp 334 of \cite{Stein-93}), one gets  that for $t\to+ \infty$:
\begin{equation*}
\! \bU(t,x,y)=\left\{ \begin{array}{lll}
&O(t^{-k}) \quad   \forall k \in \mathbb{N}, \  \  \mbox{ if }   \forall k \in (\kappa_n,+\infty), \  \omega_n'(k)\neq 0  .\\[10pt]
&  \displaystyle \! B_{2}^{\pm}(x,y) \, \rme^{\rmi  \omega_n(\kappa_{\mathrm{cr},n})\, t }\ t^{-\frac{1}{2}}+O(t^{-1}),    \mbox{ if }  \omega'_n(\kappa_{\mathrm{cr},n})=0, \ \pm \, \omega_n''(\kappa_{\mathrm{cr},n})>0 .\\[10pt]
\displaystyle 
&  \displaystyle \! B_{3}^{\pm}(x,y) \, \rme^{\rmi  \omega_n(\kappa_{\mathrm{cr},n})\, t} \, t^{-\frac{1}{3}}+O(t^{-\frac{2}{3}}),   \mbox{ if } \omega_n'(\kappa_{\mathrm{cr},n})=\omega_n''(\kappa_{\mathrm{cr},n})=0, \,  \pm \, \omega_n^{(3)}(\kappa_{\mathrm{cr},n})>0  .
\end{array}\right.
\end{equation*}
\begin{flalign*}
 \mbox{ where } \qquad \qquad \qquad  &B_2^{\pm}(x,y)=\mathbf{A}(x,y; \kappa_{\mathrm{cr},n}) \left( \frac{2}{|\omega_n^{(2)} ( \kappa_{\mathrm{cr},n}) |}  \right)^{\frac{1}{2}} \sqrt{\pi} \, e^{\pm \,\rmi \frac{\pi}{4} }, &\\
  & B_{3}^{\pm}(x,y)= \mathbf{A}(x,y; \kappa_{\mathrm{cr},n}) \left( \frac{3!}{|\omega_n^{(3)} ( \kappa_{\mathrm{cr},n}) |}  \right)^{\frac{1}{3}} \int_{-\infty}^{\infty}\rme^{\pm \, \rmi x^3} \mathrm{d}x. &
\end{flalign*}
Thus, one observes that if there is no critical point the euclidean norm of the vector $\bU(t,x,y)$ decays more faster than any power of $1/t$ whereas in presence of critical points this decay is much  slower:  in $t^{-\frac{1}{2}}$ for a non degenerate maximum or minimum or  in $ t^{-\frac{1}{3}}$  for a non degenerate inflection point. Thus,  in presence of critical points, the guiding effect is stronger. Therefore, such situations  are refereed in the literature as a  ``slow light''  phenomenon  \cite{Fig-11,Ship-16}. Moreover,  for applications purposes, a particular attention   (see \cite{Fig-11}) is given for  guiding structure where there  exists an inflection point. In our problem, this situation occurs  for the dispersion curve $\mathcal{C}_0$ (when $\rho\leq 1$ and $S_{0,\rho,\Omega_m}(\tau_c)=0$, see  Theorem \ref{thm.omega0}).

%

\appendix
\section{The  case  for a non-dispersive  dielectric slab}\label{classic-case}
We recall here the classical  results obtained  on guided waves when the Drude material in the slab is replaced by a non dispersive dielectric material of  permittivity $\varepsilon_1>0$ and permeability  $\mu_1>0$. These result were first obtained by C. Wilcox in the context of the Pekeris model \cite{Wil-76} for acoustic wave propagation in shallow water. Indeed, the Pekeris model leads to the the same  type of Sturm-Liouville equation as for the analysis of (TE) guided waves propagation in a stratified medium made of a slab of  dielectric embedded in  the vacuum (see e.g. \cite{Wed-91}). Here, as the medium is not dispersive $P=0$ and $\bM=0$ and thus the associated Hilbert space is  $\Hxy:=L^2(\bbR^2)\times L^2(\bbR^2)^2$  with the following  inner product 
$$
(\bU, \bU')_{\mathcal{H}}=\int_{\bbR^2}\eps(\bx)\, E \cdot  \overline{E'}+  \mu(\bx) \, \bH \cdot \overline{ \bH'} \, \rmd \bx,  \quad \forall \bU=(E, \bH) \in \Hxy \mbox{ and  } \  \bU'=(E', \bH')\in \Hxy,
$$
where  $\eps$ and $\mu$ are piece-wise constant function given by  $\varepsilon_1$ and $\mu_1$ in the slab $\mathcal{L}$ and $\varepsilon_0$ and $\mu_0$ in $\bbR^2\setminus \overline{\mathcal{L}}$.
The  self-adjoint Maxwell operator  $\bbA:  \rmD(\bbA)\subset \Hxy \to \Hxy$ is given  by 
\begin{align}\label{eq.kerA}
\bbA & := \ \rmi\, \begin{pmatrix}
0 &\eps(\bx)^{-1}\,\curl \\
- \mu(\bx)^{-1}\,\bcurl& 0 
\end{pmatrix}  \mbox{ with } \ \left\{ \begin{array}{lll} 
& \rmD(\bbA):= &H^{1}(\bbR^2) \times \bH_{\!\curl}(\bbR^2), \\[6pt]
\displaystyle 
&  \ker(\bbA)=&\{ (0, \nabla \phi) \mbox{ with } \phi\in  W^1(\bbR^2)\}. 
\end{array}\right.  
\end{align}
The set of  non-propagative frequencies  reduces here to $\{ 0\}$ and
the  expression of the reduced self-adjoint  operators $(\bbA_{k})_{k\in \bbR}$ (obtained by decomposing $\bbA$ with $\mathcal{F}$) is easily deduced from $\bbA$ by replacing $\curl$ and $\bcurl$ by $\curl_k$ and $\bcurl_k$.\\[6pt]
\noindent The analysis of the above  sections still holds in this setting. The dispersion relation  \eqref{eq.dispersioneven} (related to the even solution of the Sturm-Liouville equation \eqref{eq.Esturmguided} with the new functions $\eps(\cdot)$ and $\mu(\cdot)$ instead of $\eps(\omega,\cdot)$ and $\mu(\omega,\cdot)$)   becomes  (by replacing $\mu_+(\omega)$ by $\mu_1$) 
\begin{equation}\label{eq.dispersiodielec}
\displaystyle \xi^{+}_{k,\omega} \operatorname{ tanh }(\xi^{+}_{k,\omega} \, L)=-\frac{\mu_1}{\mu_0}\, \xi^{-}_{k,\omega}  \quad  \mbox{ for }  \ (k,\omega)\in  \Lambda_\mathrm{nd}=\{  (k,\omega)\in \bbR^+\times \bbR_+ \mid \omega< c_0 \, k\},
\end{equation}
where the waves numbers in the $x$ direction  are given by the following  square roots of $\mathcal{D}^{\pm}_{k,\omega}$:
$$\xi_{k,\omega}^{-}=|\mathcal{D}_{k,\omega}^-|^{1/2}>0  \mbox { with }  \mathcal{D}_{k,\omega}^-=k^2-\omega^2 \varepsilon_0\, \mu_0   \ \mbox{ and } \  \xi_{k,\omega}^{+}= \rmi |\mathcal{D}_{k,\omega}^+|^{1/2} \mbox { with } \mathcal{D}_{k,\omega}^+=k^2-\omega^2 \varepsilon_1\, \mu_1 .$$
The guided waves are here necessary evanescent  in the vacuum (thus $\mathcal{D}_{k,\omega}^->0$). In the slab, one shows that they are necessarily propagative  (i.e. $\mathcal{D}_{k,\omega}^+<0$ ). Indeed, the fact that $\mu_1>0$  imposes, in a similar manner,  as we saw  in the end of the previous paragraph when $\mu_+(\omega)\geq 0$, that they are no evanescent solutions  in the slab (i.e. with $ \xi_{k,\omega}^{+}\geq 0$)  of the equation \eqref{eq.dispersiodielec}.\\[4pt]
Thus we are looking for solutions of the dispersion relations \eqref{eq.dispersiodielec} in  a sub-region of  the positive  quadrant where $\mathcal{D}_{k,\omega}^->0$ and   $\mathcal{D}_{k,\omega}^+<0$. This  region $\Lambda_{\mathrm{nd}}^-\subset \Lambda_{\mathrm{nd}}$  defined by
$$
\Lambda_{\mathrm{nd}}^-= \{ (k, \omega )\in \bbR^+ \times \bbR_+  \mid  c_1  \,k < \omega < c \, k \}
$$
is non-empty if only if $c_1<c$ 
where   $c_1:=(\varepsilon_1 \,\mu_1)^{-1/2}$  is  the speed of light  in the dielectric. \\[4pt]
To put it in a nutshell, a ``even'' guided waves at the frequency $\omega\in \bbR_+$ with wave number $k\in \bbR^+$ exists if only if $(k, \omega)\in \Lambda_{\mathrm{nd}}^-$  is a  solution of the dispersion relations \eqref{eq.dispersiodielec}. Therefore, we introduce the set
\begin{equation*}
\mathcal{D}_{\mathrm{ev}, \mathrm{nd} }=\{  (k,\omega)\in \Lambda_{\mathrm{nd}}^-\mid  \displaystyle \xi^{+}_{k,\omega} \operatorname{ tanh }(\xi^{+}_{k,\omega} \, L)=-\frac{\mu_1}{\mu_0}\, \xi^{-}_{k,\omega}  \}.
\end{equation*}
So that, the following  equivalence holds:
\begin{equation}\label{eq.equivalnecespectrumdispersionrelation}
\omega \in\sigma_p^{\mathrm{ev}}(\bbA_k) \Longleftrightarrow \, ( |k|, |\omega|)\in \mathcal{D}_{\mathrm{ev}, \mathrm{nd} }.
\end{equation}
An explicit  parametrization of   the solutions   of the equation \eqref{eq.dispersiodielec} (see \cite{Wil-76} for the details) yields the following Proposition on the characterization of the set $\mathcal{D}_{\mathrm{ev},\mathrm{nd}}$.
\begin{Pro}\label{curves-prop}
If  $c_1<c$, then
\begin{equation}\label{eq.ev-odN-}
\mathcal{D}_{\mathrm{ev}, \mathrm{nd} }=\bigcup_{n \in \mathbb{N}} \mathcal{C}_{n} 
\end{equation}
where the  curves $\mathcal{C}_n$ (referred as the dispersion curves) are defined for all $n\in \mathbb{N}$ by
$$
\mathcal{C}_n=\{ (k, \omega_n(k))  \mid k> \kappa_n \} \ \mbox{ where } \  \kappa_n:=n \, c_1\, (c^2-c_1^2)^{-\frac{1}{2}} \frac{\pi}{L} \mbox{ is increasing to $+\infty$ with $n$}, 
$$
and $\omega_n :[\kappa_n,+\infty)\to \bbR^+$ is an  analytic strictly increasing function satisfying $\omega_n <\omega_{n+1}$ and 
$$
\omega_n(\kappa_n)=c \, \kappa_n \to +\infty \mbox{ as } n \to +\infty ,\quad   \omega_n'(\kappa_n)=c ,  \ \mbox{ and } \quad   \omega_{n}(k)=c_1 k +o(k)  \mbox{ as } k \to +\infty .
$$
 \end{Pro}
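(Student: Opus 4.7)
My strategy is to rescale the dispersion equation \eqref{eq.dispersiodielec} so that it lives on a one-parameter family of circular arcs, to extract each branch $\mathcal{C}_n$ by a monotonicity argument, and to pass to the graph form $\omega=\omega_n(k)$ via the analytic implicit function theorem.

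\medskip

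On $\Lambda_{\mathrm{nd}}^-$, writing $\xi^+_{k,\omega}=\rmi\eta$ with $\eta:=\sqrt{\omega^2/c_1^2-k^2}>0$ and using $\tanh(\rmi\eta L)=\rmi\tan(\eta L)$, equation \eqref{eq.dispersiodielec} becomes
\begin{equation*}
u\,\tan u \,=\, \alpha\, v, \qquad u:=\eta L,\ \ v:=\xi^{-}_{k,\omega}\,L,\ \ \alpha:=\mu_1/\mu_0,
\end{equation*}
subject to $u^{2}+v^{2}=R(\omega)^{2}$ with $R(\omega):=\omega L\sqrt{1/c_1^{2}-1/c^{2}}$; the map $(k,\omega)\mapsto(u,v)$ is an analytic diffeomorphism of $\Lambda_{\mathrm{nd}}^-$ onto the open quarter-disc. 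For fixed $R>0$, I study $g_R(u):=u\tan u-\alpha\sqrt{R^2-u^2}$ on each branch $I_n:=[n\pi,(n+\tfrac12)\pi)\cap[0,R]$: on $I_n$, $u\tan u$ is analytic and strictly increasing from $0$ to $+\infty$, whereas $u\mapsto\alpha\sqrt{R^2-u^2}$ is strictly decreasing, so $g_R$ has a unique root $u_n(R)\in I_n$ exactly when $R>n\pi$ (for $n\geq 1$) or $R>0$ (for $n=0$). Translating $R=n\pi$ back through $\omega=R/(L\sqrt{1/c_1^2-1/c^2})$ and $k=\omega/c$ (valid at threshold since $v=0$) gives precisely the value $\kappa_n=n\pi c_1/(L\sqrt{c^2-c_1^2})$ of \eqref{eq.kappaseuil}.

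\medskip

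To obtain the graph $\omega=\omega_n(k)$, I apply the analytic implicit function theorem to $F(k,\omega):=u\tan u-\alpha v$. A direct computation using $u_k=-Lk/u$, $u_\omega=L\omega/(c_1^2 u)$, $v_k=Lk/v$, $v_\omega=-L\omega/(c^2 v)$ and $\partial_u(u\tan u)=\tan u+u\sec^2 u$ yields
\begin{equation*}
\omega_n'(k) \,=\, \frac{k}{\omega}\,\frac{\tan u/u+\sec^2 u+\alpha/v}{c_1^{-2}(\tan u/u+\sec^2 u)+c^{-2}\alpha/v}\,>\,0
\end{equation*}
on the interior of $\mathcal{C}_n$, so each $\omega_n$ is analytic and strictly increasing on $(\kappa_n,+\infty)$; disjointness of the $I_n$ then gives \eqref{eq.ev-odN-}, and since $\omega=c_1\sqrt{k^2+u^2/L^2}$ is strictly increasing in $u$ at fixed $k$, the ordering of the branches forces $\omega_n<\omega_{n+1}$ wherever both are defined.

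\medskip

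For the behaviour at the threshold, a Taylor expansion of $g_R(u)=0$ near $(u,v)=(n\pi,0)$ with $\delta:=R-n\pi\geq 0$ gives $v=(n\pi/\alpha)\delta+O(\delta^2)$ for $n\geq 1$ (and $v=R^2/\alpha+O(R^4)$ for $n=0$), whence $k$ is real-analytic in $\delta$ (resp.\ $R$) with non-vanishing derivative at $\delta=0$; local inversion then shows that $\omega_n$ is real-analytic at $\kappa_n$. The limit $v\to 0^+$ in the formula above is dominated in numerator and denominator by $\alpha/v$, yielding $\omega_n'(\kappa_n)=c^2k/\omega=c$ since $\omega=ck$ at threshold. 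Finally, as $R\to+\infty$ the root $u_n(R)$ necessarily tends to $(n+\tfrac12)\pi^-$ (the only way to match $u\tan u\sim\alpha R\to\infty$ with $u$ bounded), and $\omega^2=c_1^2(k^2+\eta^2)$ with bounded $\eta$ yields $\omega_n(k)=c_1 k+O(1/k)$, a fortiori $\omega_n(k)=c_1 k+o(k)$. The main obstacle is the regularity at $\kappa_n$: the direct implicit-function argument on $F(k,\omega)=0$ degenerates because both $\partial_k F$ and $\partial_\omega F$ blow up as $v\to 0$, and the cleanest remedy is to uniformise by $\delta$ (resp.\ $R$) rather than by $k$, as just sketched.
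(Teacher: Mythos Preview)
Your argument is correct and is precisely the classical Wilcox parametrisation that the paper invokes (it does not give its own proof here, only the reference \cite{Wil-76}): the change of variables $(k,\omega)\mapsto(u,v)=(\eta L,\xi^-_{k,\omega}L)$, the branch-by-branch monotonicity of $u\tan u$ on $[n\pi,(n+\tfrac12)\pi)$, and the implicit function theorem are exactly the ingredients used there. Two small slips do not affect the argument: the image of $\Lambda_{\mathrm{nd}}^-$ under $(k,\omega)\mapsto(u,v)$ is the open first \emph{quadrant}, not a ``quarter-disc'' (the quarter-circles $u^2+v^2=R(\omega)^2$ foliate it but there is no upper bound on the radius); and your partial derivatives $u_k,u_\omega,v_k,v_\omega$ are each missing a factor of $L$, which cancels in the ratio $\omega_n'=-F_k/F_\omega$, so your formula for $\omega_n'(k)$ is nonetheless correct. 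Finally, your cross-reference to \eqref{eq.kappaseuil} points to the Drude-slab threshold formula in Theorem~\ref{thm.omegan}, not to the dielectric one in the Proposition you are proving; the value you compute is the right one.
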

\noindent Thus,  by Proposition \ref{curves-prop}, the even part of the point spectrum $\sigma_p^{\mathrm{ev}}(\bbA_k)$ (defined by \eqref{eq.equivalnecespectrumdispersionrelation}): 
$$
\sigma^{\mathrm{ev}}_p(\bbA_k)= \{ \pm \, \omega_n (|k|), \,  n  \in \mathbb{N}  \mid \kappa_n<|k| \}
$$
is a finite set since $\kappa_n \to +\infty$, as $n\to +\infty$.
\begin{figure}[h!]
\centering
 \includegraphics[width=0.84\textwidth]{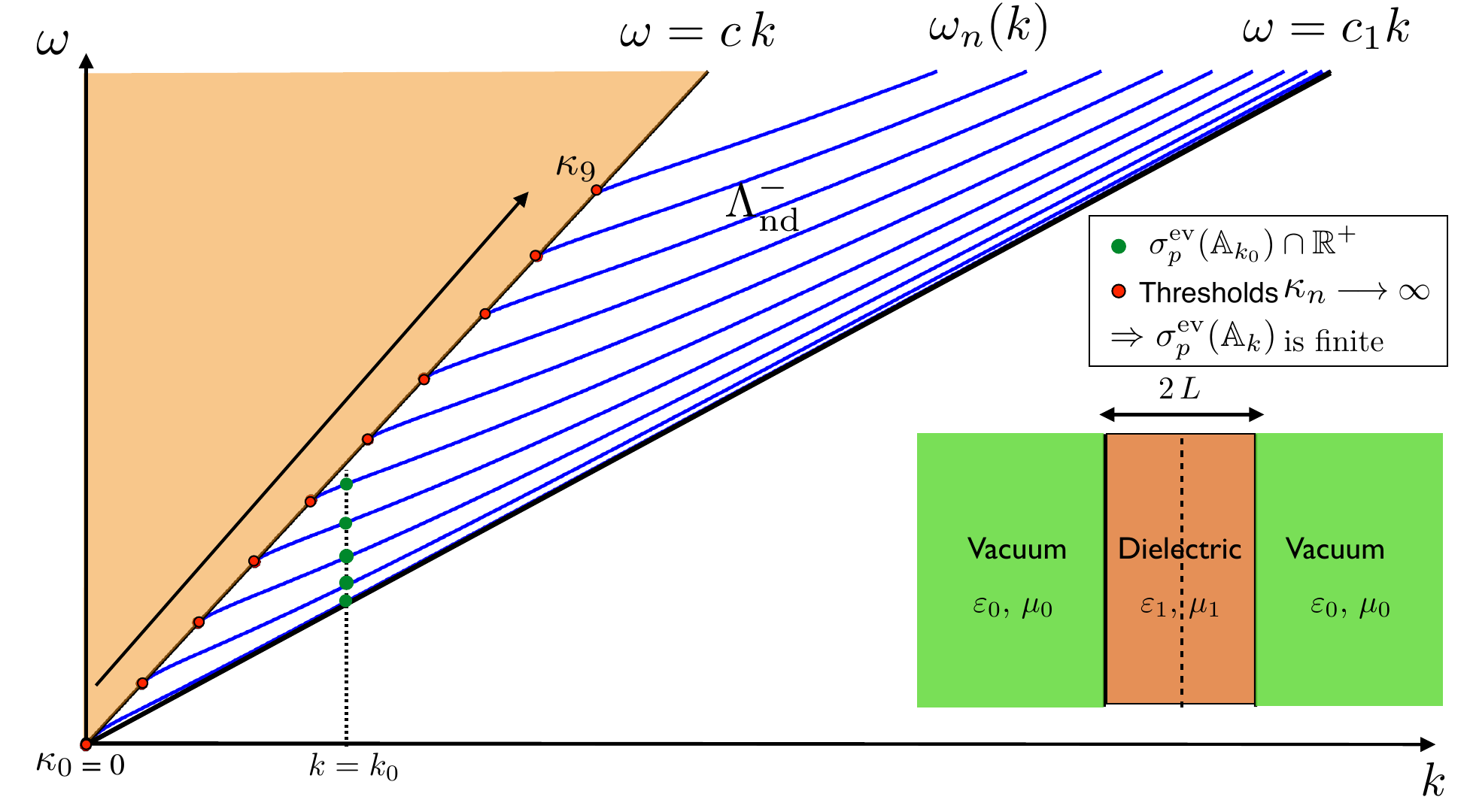}
 \caption{Dispersion curves of guided waves  (even case) for a slab  of width $2L$ made of  a non-dispersive dielectric  embedded in the vacuum for which $c_1=(\varepsilon_1 \,\mu_1)^{-1/2}<c=(\varepsilon_0 \,\mu_0)^{-1/2}$.}
 \label{fig.med-dielectric}
\end{figure}

\subsection*{Acknowledgements}
The authors would like to thank their colleague  Christophe Hazard for his collaboration on the two articles  \cite{Cas-Haz-Jol-17} and \cite{Cas-Haz-Jol-22} whose topic corresponds to the subject treated in section \ref{sect-transm}. The authors are also  grateful to their former Ph. D. student  Luis Alejandro Rosas Mart\'inez  \cite{Ros-23} for his collaboration on the work presented in section  \ref{sec-slab}.




\begin{thebibliography}{99}



\bibitem{Arn-24}
A. Arnold, S. Geevers, I. Perugia, D. Ponomarev, On the limiting amplitude principle for the wave equation with variable coefficients. Commun. Part. Diff.  Eq., 49 (4), 333--380 (2024).


\bibitem{Agm-75}
S. Agmon, Spectral properties of Schr\"odinger operators and scattering theory. Annali della Scuola Normale Superiore di Pisa-Classe di Scienze 2 (2), 151--218 (1975).


  \bibitem{Bon-14} 
A.-S. Bonnet-Ben Dhia, L. Chesnel and P. Ciarlet Jr., Two-dimensional Maxwell's equations with sign-changing coefficients. Appl. Numer. Math. 79, 29--41  (2014).

\bibitem{Bon-14(2)} 
A.-S. Bonnet-Ben Dhia, L. Chesnel and P. Ciarlet Jr., T-coercivity for the Maxwell problem with sign-changing coefficients. Commun. Part. Diff. Eq. 39,  1007--1031 (2014).

\bibitem{Bon-22} 
A.-S. Bonnet-Ben, L. Chesnel and M. Rihani, Maxwell's equations with hypersingularities at a conical plasmonic tip.  J. Math. Pures Appl. 161,  70--110 (2022).

\bibitem{Bre-10}
H. Brezis, Functional analysis, Sobolev spaces and partial differential equations. Springer Science \& Business Media, New York, (2010).



\bibitem{Cas-14} 
M. Cassier, \'{E}tude de deux probl\`{e}mes de propagation d'ondes transitoires. 1: Focalisation spatio-temporelle en acoustique. 2: Transmission entre
un di\'{e}lectrique et un m\'{e}tamat\'eriau (in French). Ph.D. thesis, \'{E}cole Polytechnique, (2014), available online at https://pastel.archives-ouvertes.fr/pastel-01023289.



\bibitem {cas-kach-jol-17}M. Cassier, P. Joly and M. Kachanovska,  Mathematical
models for dispersive electromagnetic waves: an overview. Computers \&
Mathematics with Applications 74 (11),  2639--2928 (2017).



\bibitem{Cas-Haz-Jol-17}  M. Cassier, C. Hazard and P. Joly, Spectral theory for Maxwell's equations at the interface of a metamaterial. Part I: Generalized Fourier transform. Commun. Part. Diff.  Eq.  42 (11), 1707--1748 (2017).


\bibitem{Cas-Haz-Jol-22} 
M. Cassier, C. Hazard and P. Joly, Spectral theory for Maxwell's equations at the interface of a metamaterial. Part II: Limiting absorption, limiting amplitude principles and interface resonance. Commun. Part. Diff. Eq 47 (6),  1217--1295 (2022).


\bibitem{Cas-Jol-25} 
M. Cassier and P. Joly, An operator approach to the analysis of electromagnetic wave propagation in dispersive media. Part 1: general results. Submitted chapter.




\bibitem{Conway-19} 
J. B. Conway, A course in functional analysis. Vol. 96. Springer,  2nd ed., (1990).

\bibitem{Cos-Ste-85}
M. Costabel and E. Stephan,
A direct boundary integral equation method for transmission problems.
J. Math. Anal. Appl., 106 (2), 367--413 (1985).

\bibitem{Cri-92}D. G. Crighton, A. P. Dowling, J. E. Flowcs Williams, M. Heckl
	and F. G. Leppington, Modern methods in analytical acoustics. Lecture Notes, Springer-Verlag, Berlin, (1992).



\bibitem{Der-86} 
Y. Dermenjian and J.C. Guillot, Th\'{e}orie spectrale de la propagation des ondes acoustiques dans un milieu stratifi\'{e} perturb\'{e}. J. Diff. Equat.  62 (3), 357--409 (1986).



\bibitem{Der-88}  
Y. Dermenjian and J. C. Guillot, Scattering of elastic waves in a perturbed isotropic
half space with a free boundary. The limiting absorption principle. Math. Methods Appl. Sci. 10 (2), 87--124 (1988).


\bibitem{Eid-65}D. M. Eidus, The principle of limiting absorption. Amer. Math. Soc. Transl., 47 (2), 157--191 (1965). 

\bibitem{Eid-69}D. M. Eidus, The principle of limit amplitude. Russ. Math. Surv., 24 (3), 97--167  (1969).  

\bibitem{Fig-11}
A.  Figotin and  I. Vitebskiy, Slow wave phenomena in photonic crystals. Laser \& Photonics Reviews  5 (2), 201--213 (2011).


\bibitem{Gra-10} B. Gralak and A. Tip, Macroscopic Maxwell's equations and negative index materials. J. Math. Phys. 51, 052902 (2010).  

\bibitem{Gra-12}B. Gralak and D. Maystre, Negative index materials and time-harmonic electromagnetic field.  C.R. Physique, 13 (8), 786--799 (2012).  

\bibitem{Gra-20}B. Gralak, Negative index materials: at the frontier of macroscopic electromagnetism. C. R. Physique  21 (4-5), 343--366 (2020).


 

\bibitem {Haz-07}
C. Hazard and F. Loret, Generalized eigenfunction expansions for conservative scattering problems
with an application to water waves. Proc. R. Soc. Edinb.: Section
A Mathematics, 137 (5), 995--1035 (2007). 

\bibitem{Hen-86}
P. Henrici, Applied and computational complex analysis, vol. 3,  Wiley, New York, (1986).

\bibitem{Hil-96}
E. Hille and R. S. Phillips. Functional analysis and semi-groups (Vol. 31). American Mathematical Soc., (1996).







\bibitem{Igna-05} W. von Ignatowsky, Reflexion elektromagnetisches Wellen an einem Draht. Ann.  Phys., 323 (13), 495--522 (1905).

\bibitem{Mai-07}S. A. Maier, Plasmonics: fundamentals and applications, Springer, New York, (2007). 

\bibitem{McL-88}
W. McLean, H\"{o}lder estimates for the Cauchy integral on a Lipschitz contour. J. Integral Equ. Appl., 1 (3), pp. 435--451 (1988).

\bibitem{Mora-62} C. S. Morawetz, The limiting amplitude principle. Comm. Pure Appl. Math., 15 (3), 349--361 (1962).

\bibitem{Mor-89}
K. Morgenr\"{o}ther and P. Werner, On the principles of limiting absorption and limit amplitude for a class of locally perturbed waveguides. Part 2: Time-dependent theory, Math. Method Appl. Sci., 11 (1), 1--25  (1989).



%
%
%
%

\bibitem{Ngu-16}
H.-M. Nguyen, Limiting absorption principle and well-posedness for the Helmholtz equation with sign changing coefficients. J. Math. Pures Appl. 106, 342--374 (2016).


\bibitem{Nir-94}N. A. Nicorovici, R. C. McPhedran and G. W. Milton, Optical and dielectric properties of partially resonant composites. Phys. Rev. B  49 (12), 8479--8482  (1994).


\bibitem{Paz-83} 
A. Pazy, Semigroups of linear operators and applications to partial differential equations. Applied mathematical sciences Vol. 44,  Springer, (1983).

\bibitem{Pen-00}J. B. Pendry, Negative refraction makes a perfect lens. Phys. Rev. Lett., 85 (18), 3966 (2000).  

\bibitem{Rad-15}
M. Radosz, New limiting absorption and limit amplitude principles for periodic operators. Z. Angew. Math. Phys., 66 (2), 253--275 (2015). 


\bibitem{Ree-78} 
M. Reed and B. Simon, Methods of Modern Mathematical Physics. IV: Analysis of Operators, Academic Press, London, (1978).  

\bibitem{Ros-23}
L. A. Rosas Mart\`inez, Study of two wave propagation problems in
electromagnetic dispersive media: 1) Long-time
stability analysis in Drude-Lorentz media; 2)
Transmission between a slab of metamaterial
and a dielectric. Ph.D. thesis,  Institut  Polytechnique de Paris (2023), available online at https://theses.hal.science/tel-04401219v1/document.

\bibitem{Rud-87}W. Rudin, Real and complex analysis, third edition, McGrawHill international editions, Mathematical Series, (1987).

\bibitem {San-89} J. Sanchez Hubert and E. Sanchez Palencia, Vibration and Coupling of Continuous Systems. Asymptotic Methods, Springer-Verlag, Berlin, (1989). 

\bibitem{Saff-02} E. B. Saff and A. D. Snider, Fundamentals of Complex Analysis with Applications to Engineering. Science and Mathematics, third edition, Pearson Modern Classic,   (2002).



\bibitem{Ship-16}
S. P. Shipman and A. T. Welters, Pathological scattering by a defect in a slow-light periodic layered medium. J. Math. Phys.  57 (2),  022902  (2016).

\bibitem{Stein-93} 
E. M. Stein, Real variable methods, orthogonality, and oscillatory integrals. Princeton Math. Series 43 (1993).

%
\bibitem{Ves-68}V. G. Veselago, The electromagnetics of substance with simultaneously negative values of $\eps$ and $\mu$. Soviet physics uspekhi 10 (4), 509--514 (1968). 


\bibitem{Vul-87} M. Vullierme-Ledard, The limiting amplitude principle applied to the motion of floating bodies. Math. Model. Numer. Analysis 21, 125--170 (1987).

\bibitem {Wed-91} R. Weder, Spectral and scattering theory for wave propagation in perturbed stratified media. Applied mathematical sciences 87, Springer-Verlag, New York, (1991).

\bibitem{Wil-76}
C. Wilcox, Spectral analysis of the Pekeris operator in the theory of acoustic wave propagation in shallow water.  Arch. Ration. Mech. Anal.  60, 259--300 (1976).


\bibitem {Wil-84} C. H. Wilcox, Sound propagation in stratified fluids, Applied mathematical sciences 50, Springer-Verlag, New York, (1984).


%
%
\end{thebibliography}
\end{document}